\newcommand{\lgr}[1]{\left| #1 \right|}
\newcommand{\zero}{\mathtt{0}}
\newcommand{\one}{\mathtt{1}}
\newcommand{\two}{\mathtt{2}}
\newcommand{\zeon}{\left\{ \one, \two \right\}}
 \newcommand{\zeontw}{\left\{ \zero, \one, \two \right\}}
 \newcommand{\ttx}{\mathtt{x}}
 \newcommand{\mv}{\varepsilon}
\newcommand{\rmt}{\mathrm{t}}
\newcommand{\bQ}{\mathbb{Q}}
\newcommand{\bN}{\mathbb{N}}
\newcommand{\seg}[2]{\left[ #1, #2 \right]}
\newcommand{\mat}[2]{{#2}^{#1 \times #1}}
\newcommand{\pbZ}{\mathfrak{Z}}  
\newcommand{\pbM}{\mathfrak{M}} 
\newcommand{\pbR}{\mathfrak{R}}
\newcommand{\pbgpcp}{\mathrm{GPCP}}
\newcommand{\pbpcp}{\mathrm{PCP}}
\newcommand{\cali}{\mathcal{I}}
\newcommand{\calx}{\mathcal{X}}
\newcommand{\cals}{\mathcal{S}}
\newcommand{\calm}{\mathcal{M}}
 \newcommand{\calk}{\mathcal{K}}
\newcommand{\caly}{\mathcal{Y}}
\newcommand{\call}{\mathcal{L}}
\newcommand{\kp}{k_\mathrm{P}}
\newcommand{\kgp}{k_\mathrm{G}}
\newcommand{\ouru}{\mathbf{U}}
\newcommand{\otheru}{\mathrm{U}}
\newcommand{\otherd}{\mathrm{D}}
\newtheorem{lemma}{Lemma}
\newtheorem{theorem}{Theorem}
\newtheorem{proposition}{Proposition}
\title{Tighter Undecidability Bounds for Matrix Mortality, Zero-in-the-Corner Problems, and More}
\author{Julien Cassaigne \and Vesa Halava \and Tero Harju \and Fran\c{c}ois Nicolas}
\begin{document}

\maketitle
\sloppy

\begin{abstract} 
We study the decidability of three well-known problems related to integer matrix multiplication:
Mortality~($M$),
Zero in the Left-Upper Corner~($Z$),  and
Zero in the Right-Upper Corner~($R$).

Let $d$ and $k$ be positive integers.
Define $M^d(k)$ as the following special case of the Mortality problem:
given a set $\calx$ of $d$-by-$d$ integer matrices such that the cardinality of $\calx$ is not greater than $k$,
decide whether the  $d$-by-$d$ zero matrix belongs to $\calx^+$,
where $\calx^+$ denotes the closure of $\calx$ under the usual matrix multiplication.
In the same way, define the $Z^d(k)$ problem as:
given an instance $\calx$ of $M^d(k)$ (the instances of $Z^d(k)$ are the same as those of $M^d(k)$), 
decide whether at least one matrix in $\calx^+$ has a zero in the left-upper corner.
Define $R^d(k)$ as the variant of $Z^d(k)$ where ``left-upper corner'' is replaced with ``right-upper corner''.
In the paper, 
we prove that 
$M^3(6)$, $M^5(4)$, $M^9(3)$, $M^{15}(2)$,
$Z^3(5)$, $Z^5(3)$, $Z^9(2)$,
$R^3(6)$, $R^4(5)$, and $R^6(3)$ 
are undecidable.
The previous best comparable results were the undecidabilities of $M^3(7)$, $M^{13}(3)$, $M^{21}(2)$,
$Z^3(7)$, $Z^{13}(2)$, $R^3(7)$, and $R^{10}(2)$. 
\end{abstract}

\section{Introduction}

\subsection{Notation and definition}

Given two decision problems $P$ and $P'$, 
we say that $P$ \emph{reduces} to $P'$ if 
there exists an oracle Turing machine \cite{SipserBook} $T$ such that: 
if the oracle solves  $P'$ then $T$ solves $P$.
Two decision problems are called \emph{equivalent} if they reduce to each other. 

As usual, 
$\bN$ denotes the semiring of non-negative integers and 
$\bQ$ denotes the field of rational numbers.
For every  $n \in \bN$,
$\seg{1}{n}$ denotes the set of all $k \in \bN$ such that $1 \le k \le n$.

\subsubsection{Matrices}

For every $m$, $n \in \bN \setminus \{ 0 \}$, 
$\bQ^{m \times n}$ denotes the set of all $m$-by-$n$ matrices with entries in $\bQ$,
$I_n$ denotes the $n$-by-$n$ identity matrix, 
and 
$O_{m, n}$ denotes the $m$-by-$n$ zero matrix; 
subscripts are sometimes dropped when there is no ambiguity.
For every matrix $X$, $X^{\rmt}$ denotes the transpose of $X$.

Let $d \in \bN \setminus \{ 0 \}$.
For every $\calx \subseteq \bQ^{d \times d}$,
define $\calx^+$ as the closure of $\calx$ under the usual matrix multiplication and 
define $ \calx^\star = \calx^ + \cup \{ I_d \}$.
For every $X \in \mat{d}{\bQ}$, 
$X^\star$ is understood as a shorthand for $\left\{ X \right\}^\star = \left\{ X^n : n \in \bN \right\}$.

\subsubsection{Semigroups}

A \emph{semigroup} is a set equipped with an associative operation.
A \emph{monoid} is a semigroup that has an identity element.
For instance, $\mat{d}{\bQ}$ is a monoid under the usual matrix multiplication.
For every $\calx \subseteq \mat{d}{\bQ}$, 
$\calx^+$ is the multiplicative subsemigroup of $\mat{d}{\bQ}$ generated by $\calx$
and 
$\calx^\star$ is the multiplicative submonoid of $ \mat{d}{\bQ}$ generated by $\calx$. 

Let $\cals$ and $\cals'$ be multiplicative semigroups.
A  \emph{morphism} from $\cals$ to $\cals'$ is a function $\Phi\colon \cals \to \cals'$ 
such that $\Phi(XY) = \Phi(X) \Phi(Y)$ for all $X$, $Y \in \cals$.
Throughout the paper, 
``morphism'' always means ``multiplicative semigroup morphism''.


\subsection{Problems}

 Let $d \in \bN \setminus \{ 0 \}$.

The \emph{Zero Reachability problem over $\mat{d}{\bQ}$} \cite{ HalavaH07, HalavaHH07, GaubertK06, SalomaaS78}, 
denoted $\pbZ^d$, 
is defined as:
given $L \in \bQ^{1 \times d}$, $C \in \bQ^{d \times 1}$, 
and 
a finite $\calx \subseteq \mat{d}{\bQ}$, 
decide whether there exists $Y \in \calx^+$ such that $L Y C = 0$.

For every $i$, $j \in \seg{1}{d}$, 
the following problem is denoted $\pbZ_{i, j}^d$:
given a finite $\calx \subseteq \mat{d}{\bQ}$,
decide whether there exists $Y \in \calx^+$ such that the $(i, j)$th entry of $Y$ equals~$0$.
$\pbZ_{1, 1}^d$ is the \emph{Zero in the Left-Upper Corner problem over $\mat{d}{\bQ}$}
\cite{BournezB02,  HalavaH07, HalavaHH07, HalavaH01}.
Put $\pbR^d = \pbZ_{1, d}^d$.
$\pbR^d$ is the \emph{Zero in the Right-Upper Corner problem over $\mat{d}{\bQ}$}
\cite{Manna74, GaubertK06, CassaigneK98, HarjuKHandbook, BellP08bounds, HalavaH07, HalavaHH07, Claus80, BellP12complexity}.

The \emph{Mortality problem over  $\mat{d}{\bQ}$} \cite{BournezB02, HalavaHH07, HalavaH01, Paterson70, CassaigneK98, BlondelT97, KromKrom90, Schultz77, Miller94, HarjuKHandbook, BellHP12},
denoted $\pbM^d$, 
is defined as: 
given a finite $\calx \subseteq \mat{d}{\bQ}$,
decide whether the $d$-by-$d$ zero matrix belongs to $\calx^+$.

Let $k \in \bN$.
Define $\pbZ^d(k)$ as the restriction of $\pbZ^d$ to those instances $(L, C, \calx)$ for which the cardinality of $\calx$ is not greater than $k$.
For every $i$, $j \in \seg{1}{d}$,  
define $\pbZ_{i, j}^d(k)$ as the restriction of $\pbZ^d_{i, j}$ to those subsets of $\mat{d}{\bQ}$ that have cardinality $k$ or less.
Put $\pbR^d(k) = \pbZ_{1, d}^d(k)$.
Define $\pbM^d(k)$ as the restriction of $\pbM^d$ to those subsets of $\mat{d}{\bQ}$ that have cardinality $k$ or less.
We convene that 
$\pbZ^d(\infty) = \pbZ^d$, 
$\pbZ_{i, j}^d(\infty) = \pbZ^d_{i, j}$, 
$\pbR^d(\infty) = \pbR^d$, and 
$\pbM^d(\infty) = \pbM^d$.

Note that restricting the previously defined problems to matrices with integer entries does not modify their decidabilities. 
Restricting them to matrices with non-negative integer entries makes them decidable \cite{BlondelT97, GaubertK06}.

\subsection{Organization of the paper}

The paper is divided into five sections.
Let $d \in \bN \setminus \{ 0 \}$ and let $k \in \bN \cup \{ \infty \}$.
In Section~\ref{sec:general}, we prove the following four propositions:

\begin{restatable}{proposition}{ZijR} \label{prop:ZijR}
For every $i$, $j \in \seg{1}{d}$ with $i \ne j$,  $\pbZ_{i, j}^d(k)$ is equivalent to~$\pbR^d(k)$.
 \end{restatable}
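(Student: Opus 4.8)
\subsection*{Proof proposal}

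The plan is to establish both reductions at once by a single computable transformation, namely conjugation by a permutation matrix. First, observe that the hypothesis $i \ne j$ forces $d \ge 2$, so $1$ and $d$ are distinct as well. For a permutation $\sigma$ of $\seg{1}{d}$, write $P_\sigma \in \mat{d}{\bQ}$ for the associated permutation matrix; it has entries in $\{ 0, 1 \}$, is invertible with $P_\sigma^{-1} = P_\sigma^{\rmt} = P_{\sigma^{-1}}$, and satisfies $\left( P_\sigma X P_\sigma^{-1} \right)_{\sigma(a), \sigma(b)} = X_{a, b}$ for every $X \in \mat{d}{\bQ}$ and all $a, b \in \seg{1}{d}$. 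Since $P_\sigma$ is invertible, $X \mapsto P_\sigma X P_\sigma^{-1}$ is an automorphism of the multiplicative monoid $\mat{d}{\bQ}$; in particular, for every $\calx \subseteq \mat{d}{\bQ}$ it maps $\calx^+$ bijectively onto $\left( P_\sigma \calx P_\sigma^{-1} \right)^+$, and it preserves cardinality.

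Next I would fix a permutation $\sigma$ of $\seg{1}{d}$ with $\sigma(i) = 1$ and $\sigma(j) = d$. Such a $\sigma$ exists precisely because $i \ne j$ and $1 \ne d$: send $i \mapsto 1$ and $j \mapsto d$, then extend by any bijection from $\seg{1}{d} \setminus \{ i, j \}$ onto $\seg{1}{d} \setminus \{ 1, d \}$, both sets having $d - 2$ elements. Define $\Phi(\calx) = P_\sigma \calx P_\sigma^{-1}$. Then $\Phi$ is a computable bijection from the set of instances of $\pbZ_{i, j}^d$ onto the set of instances of $\pbR^d$ (both are simply the finite subsets of $\mat{d}{\bQ}$), it preserves cardinality, and by the previous paragraph a matrix $Y \in \calx^+$ has its $(i, j)$th entry equal to $0$ if and only if the matrix $P_\sigma Y P_\sigma^{-1} \in \Phi(\calx)^+$ has its $\bigl( \sigma(i), \sigma(j) \bigr) = (1, d)$th entry equal to $0$. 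Hence $\calx$ is a positive instance of $\pbZ_{i, j}^d$ if and only if $\Phi(\calx)$ is a positive instance of $\pbR^d$.

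Finally, since $\Phi$ preserves cardinality it restricts to a map from instances of $\pbZ_{i, j}^d(k)$ to instances of $\pbR^d(k)$, which yields a reduction of $\pbZ_{i, j}^d(k)$ to $\pbR^d(k)$; and its inverse $\caly \mapsto P_\sigma^{-1} \caly P_\sigma = P_{\sigma^{-1}} \caly P_{\sigma^{-1}}^{-1}$ is likewise computable and cardinality-preserving and converts positive instances of $\pbR^d(k)$ into positive instances of $\pbZ_{i, j}^d(k)$, giving the reduction in the other direction. I do not expect a genuine obstacle here: the one point deserving attention is the existence of the permutation $\sigma$, which is exactly where the hypothesis $i \ne j$ (and hence $d \ge 2$) is used — with $i = j$ the same construction would instead target $\pbZ_{1, 1}^d$, a different problem.
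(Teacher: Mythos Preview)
Your proof is correct and follows essentially the same approach as the paper: conjugation by a permutation matrix sending the pair $(i,j)$ to $(1,d)$, which is exactly what the paper obtains via its Lemma~\ref{lem:mat-perm}. The only cosmetic difference is that the paper phrases the argument symmetrically between two arbitrary off-diagonal positions $(i_1,j_1)$ and $(i_2,j_2)$, whereas you fix the target as $(1,d)$ and then invoke the inverse map for the reverse reduction; both are equivalent.
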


 \begin{restatable}{proposition}{ZiiZ} \label{prop:ZiiZ}
For every $i \in \seg{1}{d}$,
 $\pbZ_{i, i}^d(k)$ is equivalent to~$\pbZ^d(k)$.   
 \end{restatable}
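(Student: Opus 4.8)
The plan is to prove the two required reductions separately. The forward one, from $\pbZ_{i,i}^d(k)$ to $\pbZ^d(k)$, is immediate: writing $e_i$ for the $i$th column of $I_d$, send an instance $\calx$ of $\pbZ_{i,i}^d(k)$ to the instance $(e_i^{\rmt}, e_i, \calx)$ of $\pbZ^d(k)$; since $e_i^{\rmt} Y e_i$ equals the $(i,i)$th entry of $Y$, the two instances have the same answer, and the cardinality bound is preserved. For the reverse reduction I would first observe that conjugation by the permutation matrix of the transposition $(1\ i)$ is a cardinality-preserving involutive automorphism of $\mat{d}{\bQ}$ that swaps the roles of the $(1,1)$th and $(i,i)$th entries; hence $\pbZ_{i,i}^d(k)$ and $\pbZ_{1,1}^d(k)$ are equivalent, and it remains to reduce $\pbZ^d(k)$ to $\pbZ_{1,1}^d(k)$.

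The main tool for that is the special case in which the boundary vectors pair nontrivially. Given an instance $(L, C, \calx)$ of $\pbZ^d(k)$ with $LC \ne 0$, I would choose an invertible $M \in \mat{d}{\bQ}$ whose first column is $C$ and whose other $d - 1$ columns form a basis of $\ker L$; this is possible because $\dim \ker L = d - 1$ and $C \notin \ker L$. Then $L M = (LC)\, e_1^{\rmt}$, so $e_1^{\rmt} M^{-1} = (LC)^{-1} L$, and therefore $(M^{-1} Y M)_{1,1} = (LC)^{-1}\, L Y C$ for all $Y \in \mat{d}{\bQ}$. Since $\calx \mapsto M^{-1} \calx M$ preserves cardinality and $(M^{-1} \calx M)^{+} = M^{-1} \calx^{+} M$, the set $M^{-1} \calx M$ is an instance of $\pbZ_{1,1}^d(k)$ with the same answer as $(L, C, \calx)$.

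The difficulty is that $LC$ may vanish: after the conjugation of the previous paragraph, an instance with $L, C \ne 0$ and $LC = 0$ asks whether some product has a zero $(1,2)$th entry, which by Proposition~\ref{prop:ZijR} is an instance of $\pbR^d(k)$ and is not of the form just handled. My remedy is to peel the outer factors off a putative witnessing product. First dispose of the trivial instances: if $\calx = \emptyset$ the answer is ``no'', and if $L = 0$ or $C = 0$ the answer is ``yes'' whenever $\calx \ne \emptyset$; in each case output a fixed instance of $\pbZ_{1,1}^d(k)$ with the matching answer. Otherwise compute $L X C$ for every $X \in \calx$ and $L X X' C$ for every pair $X, X' \in \calx$; if one of these is $0$, answer ``yes'', as a product of length $1$ or $2$ witnesses it. If none is $0$, then every witness is a product of at least three generators, hence factors as $X Z X'$ with $X, X' \in \calx$ and $Z \in \calx^{+}$, so the answer to $(L, C, \calx)$ is the disjunction, over the $|\calx|^{2}$ pairs $(X, X')$, of the answers to the instances $(L X, \, X' C, \, \calx)$ of $\pbZ^d(k)$. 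Each of these has boundary pairing $(L X)(X' C) = L X X' C \ne 0$, so the tool of the preceding paragraph turns it into a $\pbZ_{1,1}^d(k)$ instance; querying the oracle once per pair and returning the disjunction finishes the reduction.

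The one place I expect to have to be careful is the soundness of this last step in boundary cases — for instance when $|\calx| = 1$, so the only pair is $(X_1, X_1)$ and $Z$ runs through the powers of $X_1$ — and checking that no hidden hypothesis (such as $L X \ne 0$) is needed; this holds because $L X = 0$ would make $L X X' C = 0$ and so would already be caught by the length-$2$ test. Everything else is routine manipulation of conjugations, semigroup closures, and cardinalities.
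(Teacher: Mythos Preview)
Your proof is correct and follows essentially the same route as the paper: the forward reduction via $(E_i^{\rmt},E_i,\calx)$ is the paper's Lemma~\ref{lem:ZijZ}, the permutation-conjugation is Lemma~\ref{lem:ZiiZ11}, and your conjugation by $M$ in the non-degenerate case is exactly the content of Lemma~\ref{lem:LPEC} (just with a cleaner construction of the change-of-basis matrix). The only difference is in handling degenerate instances: the paper (Lemma~\ref{lem:S-Star}) peels a \emph{single} factor from the right, checking whether some $LXC=0$ and otherwise passing to the $|\calx|$ instances $(L,XC,\calx)$, each of which already satisfies $L\cdot(XC)\ne 0$; your two-sided peeling with a length-$2$ check works but is a bit heavier than necessary.
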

 
 \begin{restatable}{proposition}{ZR} \label{prop:ZR}
 $\pbZ^d(k)$ reduces to $\pbR^{d + 1}(k)$.
 \end{restatable}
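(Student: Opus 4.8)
The plan is to describe an effective transformation sending an arbitrary instance $(L, C, \calx)$ of $\pbZ^d$ to a finite set $\calx' \subseteq \mat{(d+1)}{\bQ}$ with $|\calx'| = |\calx|$ and the same answer; the reduction then merely computes $\calx'$, queries a $\pbR^{d+1}(k)$-oracle on it, and echoes the answer, and this works uniformly for every $k \in \bN \cup \{\infty\}$. The degenerate case $C = O_{d, 1}$, where $LYC = 0$ for every $Y$ so that the answer is ``yes'' exactly when $\calx \ne \emptyset$, is disposed of by outputting a fixed instance of $\pbR^{d+1}(k)$ with the evident constant answer; so from now on assume $C \ne O_{d, 1}$.

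First I would normalise the corner vector $C$ by an inner conjugation of $\mat{d}{\bQ}$. Since $C$ is a non-zero rational column vector, completing $\{C\}$ to a basis of $\bQ^{d \times 1}$ yields an invertible $P \in \mat{d}{\bQ}$ whose last column is $C$; let $u \in \bQ^{d \times 1}$ be the column vector all of whose entries vanish except the last, which equals $1$, so that $Pu = C$. For every $Y$ one has $(LP)\,(P^{-1}YP)\,u = LYC$, so replacing $\calx$ by $\{ P^{-1}XP : X \in \calx \}$, $L$ by $LP$, and $C$ by $u$ produces an instance of $\pbZ^d$ with the same answer and the same cardinality (the map $X \mapsto P^{-1}XP$ being a bijection of $\mat{d}{\bQ}$). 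I may therefore assume $C = u$.

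Next I would introduce, for every $X \in \mat{d}{\bQ}$, the $(d+1)$-by-$(d+1)$ block matrix
\[
\widehat X = \begin{pmatrix} 0 & LX \\ O_{d, 1} & X \end{pmatrix},
\]
whose blocks have sizes $1 \times 1$, $1 \times d$, $d \times 1$, $d \times d$, and set $\calx' = \{ \widehat X : X \in \calx \}$. A one-line block multiplication gives $\widehat X \, \widehat Y = \widehat{XY}$; by induction on the length of products it follows that $(\calx')^+ = \{ \widehat Y : Y \in \calx^+ \}$ and that $Y \mapsto \widehat Y$ is a bijection, whence $|\calx'| = |\calx| \le k$. The right-upper, i.e.\ $(1, d+1)$th, entry of $\widehat Y$ is the last coordinate of the row vector $LY$, that is $LYu = LYC$. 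Therefore $(\calx')^+$ contains a matrix with a zero in the right-upper corner if and only if some $Y \in \calx^+$ satisfies $LYC = 0$; since every step is computable, this is the required reduction.

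The construction itself is routine. The point that genuinely needs care is that the cardinality bound must be preserved \emph{exactly}: that is why $C$ is normalised by conjugating inside $\mat{d}{\bQ}$, which creates no new generators, rather than by adjoining an auxiliary matrix to $\calx$, and it is also why the embedding $X \mapsto \widehat X$ must be injective. Note finally that the whole first column of $\widehat X$ is zero and stays zero under multiplication, so this particular embedding is tailored to the right-upper corner — it would be useless for reducing to $\pbZ^{d+1}_{1, 1}$.
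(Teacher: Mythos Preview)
Your argument is correct. It differs from the paper's route, though the flavour is the same. The paper first invokes Proposition~\ref{prop:ZiiZ} to pass from $\pbZ^d(k)$ to $\pbZ_{1,1}^d(k)$ --- which behind the scenes uses Lemma~\ref{lem:S-Star} (reduction to $LC\ne 0$) and Lemma~\ref{lem:LPEC} (simultaneous normalisation of \emph{both} $L$ and $C$) --- and then applies the instance-independent morphism $\Phi(X)=\begin{pmatrix}X & XE_1\\ O & 0\end{pmatrix}$, whose $(1,d+1)$th entry is the $(1,1)$th entry of $X$. You instead normalise only $C$ (a lighter step: complete one non-zero vector to a basis) and compensate by using the $L$-dependent morphism $\widehat X=\begin{pmatrix}0 & LX\\ O & X\end{pmatrix}$. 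Your version is self-contained and sidesteps the machinery of Proposition~\ref{prop:ZiiZ}; the paper's version is more modular, reusing an equivalence it has already established. Both embeddings are block-triangular with a zero corner and preserve cardinality for the same reason (the $d\times d$ block recovers $X$), so the core idea is shared.
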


  \begin{restatable}{proposition}{ZM} \label{prop:ZM}
 $\pbZ^d(k)$  reduces to $\pbM^d(k + 1)$.
  \end{restatable}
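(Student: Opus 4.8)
The plan is to normalize the instance by means of Proposition~\ref{prop:ZiiZ} and then to adjoin one extra, idempotent matrix. Since $\pbZ^d(k)$ reduces to $\pbZ_{1,1}^d(k)$ by Proposition~\ref{prop:ZiiZ} and reductions compose, it suffices to reduce $\pbZ_{1,1}^d(k)$ to $\pbM^d(k+1)$; so I may assume that the input is a set $\calx\subseteq\mat{d}{\bQ}$ with $\lgr{\calx}\le k$, and that the question is whether some $Y\in\calx^+$ has $Y_{1,1}=0$.

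I would map such a $\calx$ to $\caly=\calx\cup\{N\}$, where $N\in\mat{d}{\bN}$ is the matrix unit carrying a single $1$ in the left-upper corner and $0$ everywhere else. Then $N=e_1e_1^{\rmt}$, with $e_1$ the first standard basis column vector, so $N^2=N$ and $NYN=Y_{1,1}\,N$ for every $Y\in\mat{d}{\bQ}$; moreover $\lgr{\caly}\le k+1$, so $\caly$ is a bona fide instance of $\pbM^d(k+1)$ and the map is clearly computable. It then remains to show that $O_d\in\caly^+$ if and only if some $Y\in\calx^+$ satisfies $Y_{1,1}=0$.

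The ``if'' direction is one line: if $Y\in\calx^+$ with $Y_{1,1}=0$, then $NYN=Y_{1,1}N=O_d$ lies in $\caly^+$. For the ``only if'' direction I would take a product of members of $\caly$ equal to $O_d$ and split it at its occurrences of $N$. If there are no such occurrences, the product is already a product over $\calx$ equal to $O_d$, and $Y=O_d$ works. Otherwise it has the form $P_0NP_1N\cdots NP_m$ with $m\ge1$ and each $P_i\in\calx^\star$; repeatedly using $NP_iN=(P_i)_{1,1}N$ collapses it to $\bigl(\prod_{i=1}^{m-1}(P_i)_{1,1}\bigr)\,P_0NP_m$, and $P_0NP_m=(P_0e_1)(e_1^{\rmt}P_m)$ is the outer product of the first column of $P_0$ with the first row of $P_m$. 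So this matrix equals $O_d$ only if one of the interior scalars $(P_i)_{1,1}$ (for $1\le i\le m-1$) is $0$, or the first column of $P_0$ is zero, or the first row of $P_m$ is zero; in the three cases I would take $Y=P_i$, $Y=P_0$, $Y=P_m$ respectively, each having $(1,1)$-entry $0$.

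The step that I expect to need the most care --- although it is short --- is verifying that the witness $Y$ produced in the ``only if'' direction genuinely lies in $\calx^+$ and is not the empty product $I_d$. This holds because $(I_d)_{1,1}=1\ne0$: in each of the three cases the chosen block has a zero $(1,1)$-entry (either directly, or because its whole first column or first row is zero), so it cannot be $I_d$. This is also exactly why I route through $\pbZ_{1,1}^d(k)$ rather than adjoin $CL$ directly to a raw $\pbZ^d(k)$ instance $(L,C,\calx)$: there one has $(CL)^2=(LC)\,CL$, which degenerates to $O_d$ as soon as $LC=0$, and then the zero matrix could be produced for reasons having nothing to do with $\calx$, destroying the equivalence.
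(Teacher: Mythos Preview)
Your proof is correct and follows essentially the same approach as the paper: adjoin the rank-one matrix $CL$ and use that products through it factor as scalar multiples of an outer product. The only cosmetic difference is the normalization step: the paper reduces to non-degenerated instances via Lemma~\ref{lem:S-Star} and then adds $CL$ for arbitrary $(L,C)$ with $LC\ne 0$ (packaging your splitting argument as Lemma~\ref{lem:Zd-Md}), whereas you first pass through Proposition~\ref{prop:ZiiZ} to force $L=E_1^\rmt$, $C=E_1$, so that $CL=N$; since Proposition~\ref{prop:ZiiZ} itself relies on Lemma~\ref{lem:S-Star}, the two routes are equivalent.
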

Note that the equivalence of $\pbZ^d(k)$ and $\pbZ_{1, 1}^d(k)$, which follows from Proposition~\ref{prop:ZiiZ}, was previously overlooked.
In Section~\ref{sec:three-three}, 
we prove that  $\pbZ^3(5)$ and $\pbR^3(6)$ are undecidable.
In Section~\ref{sec:high-dim}, 
we prove that   $\pbZ^5(3)$,  $\pbZ^9(2)$, and  $\pbM^{15}(2)$, are undecidable.
In Section~\ref{sec:open}, we put forward some remaining open questions.

\subsection{Contribution}

The undecidabilities of $\pbZ^3(5)$, $\pbZ^5(3)$, and $\pbZ^9(2)$ imply those of 
$\pbR^4(5)$,
$\pbR^6(3)$, and 
$\pbR^{10}(2)$
by Proposition~\ref{prop:ZR}
and 
those of 
$\pbM^3(6)$,
$\pbM^5(4)$, and 
$\pbM^9(3)$
by Proposition~\ref{prop:ZM}.
Hence, the following problems are proven undecidable in the present paper:
$\pbZ^3(5)$,
$\pbZ^5(3)$, 
$\pbZ^9(2)$,
$\pbR^3(6)$,
$\pbR^4(5)$,
$\pbR^6(3)$,
$\pbR^{10}(2)$,
$\pbM^3(6)$,
$\pbM^5(4)$, 
$\pbM^9(3)$, and
$\pbM^{15}(2)$;
the undecidabilities of $\pbZ^3(5)$, $\pbZ^9(2)$, and $\pbR^{10}(2)$ were previously known \cite{HalavaHH07,HalavaH07}.
Previous results about our problems are summarized in Table~\ref{tab:prev-work}.
Our contribution is depicted in Tables~\ref{tab:ZR}, \ref{tab:ZLUC}, \ref{tab:ZRUC}, and~\ref{tab:Mort}.
The contents of the five tables are to be understood as follows:
 $\otherd$ stands for ``decidable'',
$\otheru$ and $\ouru$ stand for ``undecidable'', 
$?$ stands for ``unknown'', and 
$\ouru$ denotes our contribution.

\begin{table}
\begin{center}
 \begin{tabular}{|l|l|l|}
\hline
Problem & Status & Reference(s)  \\
\hline
$\pbZ^1(5)$ & $\otherd$  & \cite{HHHK-Skolem-05} \\
$\pbZ^3(5)$ & $\otheru$ & \cite{HalavaHH07}  \\
$\pbZ^9(2)$ & $\otheru$ & \cite{HalavaH07}  \\
\hline
$\pbZ_{1, 1}^3(7)$    & $\otheru$ &\cite{HalavaHH07} \\
$\pbZ_{1, 1}^{13}(2)$ & $\otheru$ & \cite{HalavaH07} \\
\hline
$\pbR^3(7)$    & $\otheru$ & \cite{Claus80,HarjuKHandbook,MatiyasevichS05}  \\
$\pbR^{10}(2)$ & $\otheru$ & \cite{HalavaH07} \\
\hline
$\pbM^2(2)$    & $\otherd$ & \cite{BournezB02} \\
$\pbM^2$       & NP-hard & \cite{BellHP12} \\
$\pbM^3(7)$    & $\otheru$ & \cite{HalavaHH07} \\
$\pbM^{13}(3)$ & $\otheru$ & \cite{HalavaH07} \\
$\pbM^{21}(2)$ & $\otheru$ & \cite{HalavaHH07} \\
\hline
\end{tabular}
\end{center}
\caption{\label{tab:prev-work} Previous work.}
\end{table}

\begin{table} 
$$
\begin{array}{|cc|cc ccc c}
\hline
& & k \\ 
    & &  2 & 3 & 4 & 5 & 6 & \cdots \\
\hline 
d & 2 &  ? & ? & ? & ? & ? & \cdots \\
  & 3 & ? & ? & ? & \otheru & \otheru & \cdots \\
  & 4 & ? & ? & ? & \otheru & \otheru & \cdots \\
  & 5 &  ? & \ouru & \ouru & \otheru & \otheru & \cdots \\
  & 6 &  ? & \ouru & \ouru & \otheru & \otheru & \cdots \\
  & 7 &  ? & \ouru & \ouru & \otheru & \otheru & \cdots \\
  & 8 &  ? & \ouru & \ouru & \otheru & \otheru & \cdots \\
  & 9 & \otheru & \otheru & \otheru & \otheru & \otheru & \cdots \\
 & 10 &  \otheru & \otheru & \otheru & \otheru & \otheru & \cdots \\
 & \vdots &  \vdots &  \vdots &  \vdots &  \vdots &   \vdots & \ddots 
\end{array}
$$
\caption{\label{tab:ZR} Current knowledge about the undecidability of $\pbZ^d(k)$.}
\end{table}

\begin{table}
$$
\begin{array}{|cc|ccccccccc}
\hline
& & k \\ 
    & &  2 & 3 & 4 & 5 & 6 & 7 & 8 & \cdots \\
\hline 
d & 2  & ? & ? & ? & ? & ? & ? & ? &  \cdots \\
  & 3  & ? &     ? &     ? & \ouru & \ouru & \otheru & \otheru & \cdots \\
  & 4  & ? &     ? &     ? & \ouru & \ouru & \otheru & \otheru & \cdots \\
  & 5  & ? & \ouru & \ouru & \ouru & \ouru & \otheru & \otheru & \cdots \\
  & 6  & ? & \ouru & \ouru & \ouru & \ouru & \otheru & \otheru & \cdots \\
  & 7  & ? & \ouru & \ouru & \ouru & \ouru & \otheru & \otheru & \cdots \\
  & 8  & ? & \ouru & \ouru & \ouru & \ouru & \otheru & \otheru & \cdots \\
  & 9  & \ouru  & \ouru & \ouru & \ouru & \ouru & \otheru & \otheru &\cdots \\
  & 10 & \ouru  & \ouru & \ouru & \ouru & \ouru & \otheru & \otheru &\cdots \\
  & 11 & \ouru  & \ouru & \ouru & \ouru & \ouru & \otheru & \otheru &\cdots \\
  & 12 & \ouru  & \ouru & \ouru & \ouru & \ouru & \otheru & \otheru &\cdots \\
  & 13 & \otheru  & \otheru & \otheru & \otheru & \otheru & \otheru & \otheru &\cdots \\
  & 14 & \otheru  & \otheru & \otheru & \otheru & \otheru & \otheru & \otheru &\cdots \\
  & \vdots &  \vdots &  \vdots &  \vdots &  \vdots &   \vdots &  \vdots & \vdots &  \ddots 
\end{array}
$$
\caption{\label{tab:ZLUC} Current knowledge about the undecidability of $\pbZ_{1, 1}^d(k)$.}
\end{table}

 \begin{table}
 $$
 \begin{array}{|cc|cc ccc ccc}
 \hline
 & & k \\ 
     & &  2 & 3 & 4 & 5 & 6 & 7 & 8 & \cdots \\
 \hline 
 d & 2 &  ? &     ? &     ? &   ? & ? & ? & ? & \cdots \\
   & 3 &  ? &     ? &     ? & ? & \ouru & \otheru & \otheru & \cdots \\
   & 4 &  ? &     ? &     ? & \ouru & \ouru & \otheru & \otheru & \cdots \\
   & 5 &  ? &     ? &     ? & \ouru & \ouru & \otheru & \otheru & \cdots \\
   & 6 &  ? & \ouru & \ouru & \ouru & \ouru & \otheru & \otheru & \cdots \\
   & 7 &  ? & \ouru & \ouru & \ouru & \ouru & \otheru & \otheru & \cdots \\
   & 8 &  ? & \ouru & \ouru & \ouru & \ouru & \otheru & \otheru & \cdots \\
   & 9 &  ?  & \ouru & \ouru & \ouru & \ouru & \otheru & \otheru &\cdots \\
   & 10 & \otheru  & \otheru & \otheru & \otheru & \otheru & \otheru & \otheru &\cdots \\
   & 11 & \otheru  & \otheru & \otheru & \otheru & \otheru & \otheru & \otheru &\cdots \\
   & \vdots&  \vdots &  \vdots &  \vdots &  \vdots &   \vdots & \vdots &   \vdots & \ddots 
 \end{array}
 $$
  \caption{\label{tab:ZRUC} Current knowledge about the undecidability of $\pbR^d(k)$.}
\end{table}

 \begin{table}
 $$
 \begin{array}{|cc|cc ccc ccc}
 \hline
 & & k \\ 
     & &  2 & 3 & 4 & 5 & 6 & 7 & 8 & \cdots \\
 \hline 
 d &  2 & \otherd & ?       & ?       & ?       & ? & ? & ? & \cdots \\
   &  3 & ?       & ?       & ?       & ?       & \ouru   & \otheru & \otheru & \cdots \\
   &  4 & ?       & ?       & ?       & ?       & \ouru   & \otheru & \otheru & \cdots \\
   &  5 & ?       & ?       & \ouru   & \ouru   & \ouru   & \otheru & \otheru & \cdots \\
   &  6 & ?       & ?       & \ouru   & \ouru   & \ouru   & \otheru & \otheru & \cdots \\
   &  7 & ?       & ?       & \ouru   & \ouru   & \ouru   & \otheru & \otheru & \cdots \\
   &  8 & ?       & ?       & \ouru   & \ouru   & \ouru   & \otheru & \otheru & \cdots \\
   &  9 & ?       & \ouru   & \ouru   & \ouru   & \ouru   & \otheru & \otheru &\cdots \\
   & 10 & ?       & \ouru   & \ouru   & \ouru   & \ouru   & \otheru & \otheru &\cdots \\
   & 11 & ?       & \ouru   & \ouru   & \ouru   & \ouru   & \otheru & \otheru &\cdots \\
   & 12 & ?       & \ouru   & \ouru   & \ouru   & \ouru   & \otheru & \otheru &\cdots \\
   & 13 & ?       & \otheru & \otheru & \otheru & \otheru & \otheru & \otheru &\cdots \\
   & 14 & ?       & \otheru & \otheru & \otheru & \otheru & \otheru & \otheru &\cdots \\
   & 15 & \ouru   & \otheru & \otheru & \otheru & \otheru & \otheru & \otheru &\cdots \\
   & 16 & \ouru   & \otheru & \otheru & \otheru & \otheru & \otheru & \otheru &\cdots \\
   & 17 & \ouru   & \otheru & \otheru & \otheru & \otheru & \otheru & \otheru &\cdots \\
   & 18 & \ouru   & \otheru & \otheru & \otheru & \otheru & \otheru & \otheru &\cdots \\
   & 19 & \ouru   & \otheru & \otheru & \otheru & \otheru & \otheru & \otheru &\cdots \\
   & 20 & \ouru   & \otheru & \otheru & \otheru & \otheru & \otheru & \otheru &\cdots \\
   & 21 & \otheru & \otheru & \otheru & \otheru & \otheru & \otheru & \otheru &\cdots \\
   & 22 & \otheru & \otheru & \otheru & \otheru & \otheru & \otheru & \otheru &\cdots \\
   & \vdots &  \vdots &  \vdots &  \vdots &  \vdots &   \vdots & \vdots &   \vdots & \ddots 
 \end{array}
 $$
  \caption{\label{tab:Mort} Current knowledge about the decidability of $\pbM^d(k)$.}
\end{table}

\section{General results} \label{sec:general}

In this section, 
we prove some basic properties of our problems.
Unsurprisingly, we shall see that they are closely related to each other.
Let $d \in \bN \setminus \{ 0 \}$ and let $k \in \bN \cup \{ \infty \}$.

Set 
$$
E_i = 
 \begin{pmatrix}
  O_{i - 1,1} \\ 1 \\ O_{d - i,1} 
 \end{pmatrix}
$$
for every $i \in \seg{1}{d}$:
the  $d$-tuple $\left( E_i \right)_{i \in \seg{1}{d}}$ is the canonical basis of the linear space $\bQ^{d \times 1}$.
Remark that, for 
any  $L \in  {\bQ}^{1 \times d}$, 
any $C \in  {\bQ}^{d \times 1}$,
any $Y \in {\bQ}^{d \times d}$,
and 
any $i$, $j \in \seg{1}{d}$, 
$L E_j$ equals the $j$th entry of $L$,
$E_i^\rmt C$ equals the $i$th entry of $C$,
$E_i^ \rmt Y$ equals the $i$th row of $Y$,
$Y E_j$ equals the $j$th column of $Y$, and
$E_i^\rmt Y E_j$ equals the $(i, j)$th entry of $Y$.

\begin{lemma} \label{lem:mat-perm}
Let $D \subseteq \seg{1}{d}$ and let $\pi\colon D \to \seg{1}{d}$ be injective.
There exists $P \in \mat{d}{\bQ}$ such that $P$ is non-singular and $P E_j = E_{\pi(j)}$ for every $j \in D$. 
\end{lemma}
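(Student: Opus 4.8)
The plan is to extend the injective partial map $\pi$ to a full permutation of $\seg{1}{d}$ and then take $P$ to be the associated permutation matrix, which is automatically non-singular.

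First I would observe that, since $\pi$ is injective, its image $\pi(D)$ has cardinality $\lgr{D}$; hence both $\seg{1}{d} \setminus D$ and $\seg{1}{d} \setminus \pi(D)$ have cardinality $d - \lgr{D}$, so there exists a bijection $\sigma\colon \seg{1}{d} \setminus D \to \seg{1}{d} \setminus \pi(D)$. Define $\widehat{\pi}\colon \seg{1}{d} \to \seg{1}{d}$ by $\widehat{\pi}(j) = \pi(j)$ for $j \in D$ and $\widehat{\pi}(j) = \sigma(j)$ for $j \in \seg{1}{d} \setminus D$. Then $\widehat{\pi}$ is a bijection, i.e. a permutation of $\seg{1}{d}$, that extends $\pi$.

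Next I would set $P = \sum_{j = 1}^{d} E_{\widehat{\pi}(j)} E_j^{\rmt} \in \mat{d}{\bQ}$. Using that $E_i^{\rmt} E_j$ equals the $(i,j)$th entry of $I_d$ (recalled just before the lemma), one gets $P E_j = E_{\widehat{\pi}(j)}$ for every $j \in \seg{1}{d}$; in particular $P E_j = E_{\pi(j)}$ for every $j \in D$, as required. Finally, $P$ is non-singular: its columns $E_{\widehat{\pi}(1)}, \dots, E_{\widehat{\pi}(d)}$ are a reordering of the canonical basis $E_1, \dots, E_d$ of $\bQ^{d \times 1}$ because $\widehat{\pi}$ is a permutation, hence they are linearly independent. (Equivalently, $P^{\rmt} P = I_d$, so $\det(P) \in \{-1, 1\}$.)

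There is no real obstacle here: the only content is the elementary fact that an injection between finite sets of the right sizes extends to a bijection, plus the standard observation that permutation matrices are invertible. The identities recalled before the lemma make the verification of $P E_j = E_{\widehat{\pi}(j)}$ immediate, so the proof is essentially a bookkeeping argument.
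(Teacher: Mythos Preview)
Your proof is correct and follows essentially the same approach as the paper: extend the injective partial map $\pi$ to a permutation $\bar\pi$ of $\seg{1}{d}$, and take $P$ to be the associated permutation matrix $\sum_{i=1}^d E_{\bar\pi(i)} E_i^{\rmt}$. The only cosmetic difference is that the paper first treats the case $D = \seg{1}{d}$ and then reduces the general case to it, whereas you perform the extension upfront.
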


\begin{proof}
Let us first consider the case where $D = \seg{1}{d}$,
\emph{i.e}, where $\pi$ is a permutation of $\seg{1}{d}$.
Let $P$ be the permutation matrix associated with $\pi$:
$$
P = \sum_{i = 1}^ d E_{ \pi (i)} E_i^ \rmt \, .
$$
It is easy to see that $P$ satisfies the desired properties; in particular, 
note that  $P^\rmt = P^{-1}$ is the permutation matrix associated with $\pi^{-1}$.

Let us now deal with the general case.
Remark that there exists a permutation $\bar \pi$ of $\seg{1}{d}$ such that $\bar \pi(j) = \pi(j)$ for every $j \in D$.
Hence, the general case reduces to the case where $D = \seg{1}{d}$.
\end{proof}

\begin{lemma} \label{lem:ZiiZ11}
For every $i \in \seg{1}{d}$, 
 $\pbZ_{i, i}^d(k)$ is equivalent to $\pbZ_{1, 1}^d(k)$.
\end{lemma}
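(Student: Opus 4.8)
The plan is to conjugate every matrix by a fixed permutation matrix that swaps the coordinates $1$ and $i$; this carries the $(i,i)$th entry into the left-upper corner. Since such a conjugation is a monoid automorphism of $\mat{d}{\bQ}$ that preserves cardinality, an instance and its image will have the same answer, and — with the right choice of permutation — the very same conjugation will serve as a reduction in both directions, giving the equivalence.

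Fix $i \in \seg{1}{d}$ and let $\pi\colon\seg{1}{d}\to\seg{1}{d}$ be the transposition that exchanges $1$ and $i$ and fixes every other index; thus $\pi = \pi^{-1}$. By Lemma~\ref{lem:mat-perm} (applied with $D = \seg{1}{d}$) there is a non-singular $P \in \mat{d}{\bQ}$ with $P E_j = E_{\pi(j)}$ for every $j$; the proof of that lemma exhibits $P$ as the permutation matrix of $\pi$, so $P^{-1} = P^{\rmt}$, and since $\pi$ is an involution we moreover have $P = P^{-1} = P^{\rmt}$. In particular $P E_1 = E_i$. Define $\Phi\colon\mat{d}{\bQ}\to\mat{d}{\bQ}$ by $\Phi(X) = P^{-1} X P$. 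Then $\Phi$ is a morphism, it is bijective with inverse $X \mapsto P X P^{-1}$, and $P = P^{-1}$ makes $\Phi$ its own inverse; hence for every finite $\calx \subseteq \mat{d}{\bQ}$ we have $\Phi(\calx^+) = \bigl(\Phi(\calx)\bigr)^+$, and $\Phi(\calx)$ has the same cardinality as $\calx$. Furthermore, for every $Y \in \mat{d}{\bQ}$ the $(1,1)$th entry of $\Phi(Y)$ equals
$$
E_1^{\rmt}\,\Phi(Y)\,E_1 \;=\; E_1^{\rmt} P^{-1}\, Y\, P E_1 \;=\; \bigl(P E_1\bigr)^{\rmt}\, Y\, \bigl(P E_1\bigr) \;=\; E_i^{\rmt}\, Y\, E_i \, ,
$$
which is the $(i,i)$th entry of $Y$; the middle equality uses $E_1^{\rmt}P^{-1} = E_1^{\rmt}P = \bigl(P^{\rmt}E_1\bigr)^{\rmt} = \bigl(PE_1\bigr)^{\rmt}$, valid because $P^{-1} = P = P^{\rmt}$.

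Consequently $\calx \mapsto \Phi(\calx)$ turns an instance of $\pbZ_{i,i}^d(k)$ into an instance of $\pbZ_{1,1}^d(k)$ with the same answer: some $Y \in \calx^+$ has a zero $(i,i)$th entry if and only if some $Y' \in \bigl(\Phi(\calx)\bigr)^+ = \Phi(\calx^+)$ has a zero $(1,1)$th entry, and $\Phi(\calx)$ has at most $k$ elements. Since $\Phi = \Phi^{-1}$, the same map is also a reduction from $\pbZ_{1,1}^d(k)$ to $\pbZ_{i,i}^d(k)$, so the two problems are equivalent. The whole argument is routine; the only step requiring attention is the transpose bookkeeping in the displayed identity — that is, checking that conjugation by a \emph{symmetric} permutation matrix with $PE_1 = E_i$ really does carry the $(i,i)$th entry to the corner — and this presents no genuine obstacle.
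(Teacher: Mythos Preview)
Your proof is correct and follows essentially the same approach as the paper: conjugation by the permutation matrix of the transposition swapping $1$ and $i$, so that the $(i,i)$th entry is carried to the $(1,1)$th entry. The only cosmetic differences are that the paper states the argument for arbitrary $i,j$ (obtaining the equivalence by swapping the roles of $i$ and $j$), applies Lemma~\ref{lem:mat-perm} with $D=\{i,j\}$ rather than $D=\seg{1}{d}$, and writes $\Phi(X)=PXP^{-1}$ instead of $P^{-1}XP$; since $P=P^{-1}$ for a transposition, these distinctions are immaterial.
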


\begin{proof}
Let $i$, $j \in \seg{1}{d}$ be fixed.
Applying Lemma~\ref{lem:mat-perm} with $D = \{ i, j \}$, 
we see that there exists $P \in\mat{d}{\bQ}$ such that $P$ is non-singular, $P E_{i} = E_{j}$, and $P E_j = E_{i}$.
Let $\Phi\colon \mat{d}{\bQ} \to \mat{d}{\bQ}$ be the morphism defined by:
$\Phi(X) = P X P^{-1}$ for every $X \in \mat{d}{\bQ}$.
For every $Y \in \mat{d}{\bQ}$,
the $(i, i)$th entry of  $Y$ equals the $(j, j)$th entry of $\Phi(Y)$.
Hence, 
$\Phi$ induces a reduction from $\pbZ^d_{i, i}(k)$ to $\pbZ^d_{j, j}(k)$.
\end{proof}


\ZijR*

\begin{proof}
Let $i_1$, $j_1$, $i_2$, $j_2 \in \seg{1}{d}$ be such that $i_1 \ne j_1$ and $i_2 \ne j_2$.
Applying Lemma~\ref{lem:mat-perm} with $D = \{ i_1, j_1 \}$, 
we see that there exists $P \in\mat{d}{\bQ}$ such that $P$ is non-singular, $P E_{i_1} = E_{i_2}$, and $P E_{j_1} = E_{j_2}$.
Let $\Phi\colon \mat{d}{\bQ} \to \mat{d}{\bQ}$ be the morphism defined by:
$\Phi(X) = P X P^{-1}$ for every $X \in \mat{d}{\bQ}$.
For every $Y \in \mat{d}{\bQ}$,
the $(i_1, j_1)$th entry of  $Y$ equals the $(i_2, j_2)$th entry of $\Phi(Y)$.
It follows that $\Phi$ induces a reduction from $\pbZ_{i_1, j_1}^d(k)$ to $\pbZ^d_{i_2, j_2}(k)$.
\end{proof}

\begin{lemma} \label{lem:ZijZ}
 For every $i$, $j \in \seg{1}{d}$, $\pbZ^d_{i, j}(k)$ reduces to $\pbZ^d(k)$.
\end{lemma}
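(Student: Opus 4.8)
The plan is to observe that this reduction is essentially trivial: an instance $\calx$ of $\pbZ^d_{i, j}(k)$ is already almost an instance of $\pbZ^d(k)$, and all that is missing is a choice of row and column vectors that single out the $(i, j)$th entry. Concretely, I would describe an oracle Turing machine that, on input a finite $\calx \subseteq \mat{d}{\bQ}$ with $\lgr{\calx} \le k$, queries the $\pbZ^d(k)$ oracle on the instance $\left( E_i^\rmt, E_j, \calx \right)$ and outputs the oracle's answer verbatim. This is a legitimate instance of $\pbZ^d(k)$: indeed $E_i^\rmt \in \bQ^{1 \times d}$ and $E_j \in \bQ^{d \times 1}$, the set $\calx$ is finite, and its cardinality is at most $k$ because it is copied unchanged.

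For correctness I would appeal to the remark stated just before Lemma~\ref{lem:mat-perm}, namely that $E_i^\rmt Y E_j$ equals the $(i, j)$th entry of $Y$ for every $Y \in \mat{d}{\bQ}$. Consequently, there exists $Y \in \calx^+$ with $E_i^\rmt Y E_j = 0$ if and only if there exists $Y \in \calx^+$ whose $(i, j)$th entry is $0$; that is, the oracle accepts $\left( E_i^\rmt, E_j, \calx \right)$ exactly when $\calx$ is a positive instance of $\pbZ^d_{i, j}(k)$. Hence, if the oracle solves $\pbZ^d(k)$ then the machine solves $\pbZ^d_{i, j}(k)$, which is precisely what the definition of ``reduces to'' demands.

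There is essentially no obstacle in this argument; the only points worth recording are the bookkeeping ones — that the cardinality bound is preserved (the matrix set is passed through unmodified) and that the map $\calx \mapsto \left( E_i^\rmt, E_j, \calx \right)$ is computable (it obviously is). An alternative phrasing, which makes the statement almost self-evident, is to note that $\pbZ^d_{i, j}(k)$ is literally the restriction of $\pbZ^d(k)$ obtained by hard-wiring $L = E_i^\rmt$ and $C = E_j$, so the desired reduction is just the inclusion of the corresponding sets of instances.
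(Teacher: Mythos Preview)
Your proof is correct and matches the paper's own argument exactly: the paper also reduces via $\calx \mapsto (E_i^\rmt, E_j, \calx)$, using that $E_i^\rmt Y E_j$ is the $(i,j)$th entry of $Y$.
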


\begin{proof}
For every finite $\calx \subseteq \mat{d}{\bQ}$, 
$\calx$ is a yes-instance of $\pbZ^d_{i, j}$
if, and only if, 
$(E_i^\rmt, E_j, \calx)$ is a yes-instance of $\pbZ^d$.
\end{proof}

An instance $(L, C, \calx)$ of $\pbZ^d$ is called \emph{non-degenerated} if $L C \ne 0$.

\begin{lemma}[\cite{GaubertK06}] \label{lem:S-Star}
 $\pbZ^d(k)$ reduces to its restriction to non-degenerated instances.
\end{lemma}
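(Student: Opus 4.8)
The plan is to exhibit an oracle Turing machine that decides $\pbZ^d(k)$ using only queries to non-degenerated instances of $\pbZ^d(k)$, by ``peeling off'' the leading matrix of a hypothetical witness product. The relevant identity is: since $\calx^+ = \bigcup_{n \ge 1} \{ X_1 \cdots X_n : X_1, \dots, X_n \in \calx \}$, a triple $(L, C, \calx)$ is a yes-instance of $\pbZ^d$ if and only if there exist $X \in \calx$ and $Z \in \calx^\star$ with $(LX)\,Z\,C = 0$ (take $X$ to be the first matrix of a witness product and $Z$ the product of the remaining ones, with $Z = I_d$ when the witness has length one). So, given $(L, C, \calx)$, the machine would first test, for each of the at most $k$ matrices $X \in \calx$, whether $LXC = 0$, accepting if so for some $X$; this handles length-one witnesses, and it also disposes of every $X$ for which $L_X := LX$ satisfies $L_X C = 0$. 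If the machine has not accepted, then $L_X C \ne 0$ for every $X \in \calx$, so each triple $(L_X, C, \calx)$ is a non-degenerated instance of $\pbZ^d(k)$ (the matrix set, hence the cardinality bound, is unchanged). The machine then queries the oracle on each of these at most $k$ instances, accepts if the oracle accepts one of them, and rejects otherwise; it plainly halts and queries the oracle only on non-degenerated instances.

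For correctness, soundness is immediate: accepting in the first phase exhibits $X \in \calx \subseteq \calx^+$ with $LXC = 0$, while accepting after an oracle answer ``yes'' on $(L_X, C, \calx)$ yields $Z \in \calx^+$ with $L(XZ)C = 0$ and $XZ \in \calx^+$; in both cases $(L, C, \calx)$ is a yes-instance. For completeness, suppose $(L, C, \calx)$ is a yes-instance and pick $Y = X_1 \cdots X_n \in \calx^+$ (with $n \ge 1$ and each $X_i \in \calx$) such that $LYC = 0$; set $X := X_1$ and $Z := X_2 \cdots X_n$, with the convention $Z = I_d$ when $n = 1$, so that $L_X Z C = LYC = 0$. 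If $LXC = 0$, the machine accepts in the first phase. Otherwise $L_X C = LXC \ne 0$, which forces $Z \ne I_d$, i.e.\ $n \ge 2$ and $Z \in \calx^+$; then $(L_X, C, \calx)$ is a non-degenerated yes-instance witnessed by $Z$, the oracle answers ``yes'', and the machine accepts.

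There is no deep obstacle here; the one point that needs care — and the reason the first phase is present — is that peeling the leading matrix off a witness can turn a degenerate instance into another degenerate instance $(L_X, C, \calx)$, and one must not hand such an instance to the oracle. But this happens exactly when $LXC = 0$, in which case $X$ is already a length-one witness and the oracle is never consulted on it. I expect the write-up to be short, essentially a formalization of the case split above; it recovers the reduction of \cite{GaubertK06}.
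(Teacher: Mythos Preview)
Your proof is correct and follows essentially the same approach as the paper's. The only cosmetic difference is that the paper peels off the \emph{last} factor of a hypothetical witness (querying the non-degenerated instances $(L, XC, \calx)$ for $X \in \calx$) whereas you peel off the first (querying $(LX, C, \calx)$); the two variants are symmetric and equally valid.
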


\begin{proof}
Let $(L, C, \calx)$ be an instance of $\pbZ^d(k)$.

First, assume that $L X C = 0$ for some $X \in \calx$.
Then,  $(L, C, \calx)$ is a yes-instance of $\pbZ^d$.

Second, assume that $L X C \ne 0$ for every $X \in \calx$.
Then, $(L, X C, \calx)$ is a non-degenerated instance of $\pbZ^d(k)$ for every $X \in \calx$.
Moreover,  $(L, C, \calx)$ is a yes-instance of $\pbZ^d$ 
if, and only if,
there exists $X \in \calx$ such that $(L, X C, \calx)$ is a yes-instance of~$\pbZ^d$.
\end{proof}

\begin{lemma} \label{lem:LPEC}
Let $L \in \bQ^{1 \times d}$ and let $C \in \bQ^{d \times 1}$ be such that $LC \ne 0$.
There exists $P \in \mat{d}{\bQ}$ such that $P$ is non-singular,  $L P = LC E_1^{\rmt}$, and  $C =  P E_1$.
\end{lemma}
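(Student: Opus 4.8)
The plan is to build $P$ one column at a time. First observe that $LC \ne 0$ forces $L \ne O_{1, d}$, so that $\ker L = \{ v \in \bQ^{d \times 1} : L v = 0 \}$ is a subspace of $\bQ^{d \times 1}$ of dimension $d - 1$; moreover $LC \ne 0$ says precisely that $C \notin \ker L$, hence $\bQ^{d \times 1} = \bQ C \oplus \ker L$.

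Next I would unpack the two requirements on $P$ using the remark recorded just before Lemma~\ref{lem:mat-perm}. The condition $C = P E_1$ says exactly that the first column of $P$ equals $C$. The condition $L P = LC\, E_1^\rmt$ says that the first entry of the row vector $L P$ equals $LC$ — which is automatic once $P E_1 = C$, since then $L P E_1 = L C$ — and that its $j$th entry vanishes for every $j \in \seg{2}{d}$, i.e. that the $j$th column $P E_j$ of $P$ lies in $\ker L$ for each such $j$.

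Thus it suffices to choose any basis $v_2, \dots, v_d$ of the $(d - 1)$-dimensional space $\ker L$ and to set $P = C E_1^\rmt + \sum_{j = 2}^d v_j E_j^\rmt$, that is, the matrix whose columns are $C, v_2, \dots, v_d$ in this order. Then $P E_1 = C$ and $P E_j = v_j \in \ker L$ for $j \in \seg{2}{d}$, so $L P = LC\, E_1^\rmt$; and $P$ is non-singular because its columns $C, v_2, \dots, v_d$ form a basis of $\bQ^{d \times 1}$, which follows from the direct-sum decomposition above together with the fact that $(v_j)_{j \in \seg{2}{d}}$ is a basis of $\ker L$. There is essentially no obstacle: the only point deserving a word is that adjoining $C$ to a basis of $\ker L$ yields a basis of $\bQ^{d \times 1}$, which is immediate from $C \notin \ker L$ and a dimension count. (One could instead try a closed-form $P$, e.g. $C E_1^\rmt$ plus a projection onto $\ker L$ acting on coordinates $2, \dots, d$, but such formulas can degenerate — they may fail to be invertible when $C$ has first coordinate $0$ — so the explicit basis construction is the cleaner route.)
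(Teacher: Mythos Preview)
Your argument is correct: choosing the first column to be $C$ and the remaining columns to be any basis of $\ker L$ does the job, and non-singularity follows from the direct-sum decomposition $\bQ^{d\times 1}=\bQ C\oplus\ker L$.

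The paper takes a different, more explicit route. It first treats the special case where the leftmost entry of $L$ and the uppermost entry of $C$ both equal $1$, building $P$ as a product $UV$ of two concrete involutions; it then reduces the general case to this one by locating an index $j$ with $(LE_j)(E_j^\rmt C)\ne 0$, permuting that coordinate to position $1$, and rescaling. Your approach is cleaner and more conceptual: it identifies the two constraints as ``first column equals $C$'' and ``remaining columns lie in $\ker L$'', and then invokes the obvious basis-completion. The paper's approach, by contrast, yields an explicit formula for $P$ directly in terms of the entries of $L$ and $C$, with no appeal to computing a kernel basis; this matters downstream, since the proof of Proposition~\ref{prop:ZiiZ} explicitly notes that $P$ is computable from $L$ and $C$. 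Your construction is of course also effective (Gaussian elimination produces a basis of $\ker L$), so nothing is lost, but the paper's version makes the computability self-evident.
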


\begin{proof}
First, consider the case where both the leftmost entry of $L$ and the uppermost entry of $C$ equal $1$.
Then, there exist  $L' \in {\bQ}^{1 \times (d - 1)}$ and  $C' \in {\bQ}^{(d - 1) \times 1}$ such that 
\begin{align*}
 L &  = \begin{pmatrix}  
 1 & L' 
\end{pmatrix}
& & \text{and} & 
C & = \begin{pmatrix}
 1 \\ C' 
\end{pmatrix} \, .
\end{align*}
Put  
\begin{align*}
U &  = \begin{pmatrix}  
 1 & O \\
 C' & -I
\end{pmatrix} \, , &
V & = \begin{pmatrix}
 1 & -  {(LC)}^{-1} L'  \\
 O & - I
\end{pmatrix} \,, & 
& \text{and} & 
P = UV \,.
\end{align*}
It is clear that $U$, $V$, and $P$ are non-singular with $U^{-1} = U$, $V^{-1} = V$ and $P^{-1} = VU$.
Moreover, we have $L U = \begin{pmatrix} LC &  -L' \end{pmatrix}$ and $ V E_1 = E_1$,
so $L P = LC E_1^{\rmt}$ and  $C =  P E_1$. 

Let us now deal with the general case.
For each $i \in \seg{1}{d}$, put $\lambda_i = L E_i$ and $\gamma_i = E_i^\rmt C$.
Since 
$$
 \sum_{i = 1}^d \lambda_i \gamma_i  = L C \ne 0 \, ,
$$
there exists $j \in \seg{1}{d}$ such that $\lambda_j \gamma_j \ne 0$.
Applying Lemma~\ref{lem:mat-perm} with $D = \{ 1 \}$, 
we see that there exists $T \in\mat{d}{\bQ}$ such that $T$ is non-singular and $T E_1 = E_j$.
Put 
$\bar L  = \lambda_j^{-1} L T$
and
$\bar C  =    \gamma_j^{-1} T^{-1} C$.
By construction, 
we have $\bar L \bar C = \lambda_j^{-1}  \gamma_j^{-1}  LC\ne 0$ 
and 
both the leftmost entry of $\bar L$ and the uppermost entry of $\bar C$ equal $1$.
Therefore, there exists $\bar P \in \mat{d}{\bQ}$ such that 
$\bar P$ is non-singular,
$\bar L \bar P = \bar L \bar C E_1^{\rmt}$, and
$\bar C =  \bar P E_1$.
Put $P = \gamma_j T \bar P$.
It is easy to see that $P$ satisfies the desired properties.
\end{proof}

Let $R$ be a division ring, 
let $L$, $L' \in R^{1 \times d}$, and 
let $C$, $C' \in R^{d \times 1}$ be such that none of $C$, $L$, $C'$, and $L'$ is a zero matrix.
We claim that $L C = L'C'$ if, and only if, there exists 
$P \in \mat{d}{R}$ such that $P$ is multiplicatively invertible in $\mat{d}{R}$,  $L P = L'$, and  $C =  PC'$.
Our claim nicely generalizes Lemma~\ref{lem:LPEC}; 
its proof is left to the reader.
It follows from our claim that the restriction of $\pbZ^d(k)$ to degenerated instances is equivalent to~$\pbR^d(k)$; 
the verification is left to the reader.

\ZiiZ*

\begin{proof} 
By Lemmas~\ref{lem:ZiiZ11} and \ref{lem:ZijZ}, 
it suffices to show that  $\pbZ^d(k)$ reduces to $\pbZ_{1, 1}^d(k)$.
Moreover, 
by Lemma~\ref{lem:S-Star}, we only need to reduce non-degenerated instances of~$\pbZ^d(k)$.

Let $(L, C, \calx)$ be a non-degenerated instance of $\pbZ^d(k)$.
By Lemma~\ref{lem:LPEC}, there exists  $P \in \mat{d}{\bQ}$ such that $P$ is non-singular,  
$L P = LC E_1^{\rmt}$, and  $C =  P E_1$.
Put $\calx' = \left\{ P^{-1} X P  : X \in \calx \right\}$.
Since the cardinality of $\calx'$ equals that of $\calx$,  
$\calx'$ is an instance of $\pbZ_{1, 1}^d(k)$. 
Moreover, 
$P$ is computable from $L$ and $C$ (a more efficient method than brute-force enumeration can be derived from a simple examination of the proof of Lemma~\ref{lem:LPEC}), 
so $\calx'$ is computable from $(L, C, \calx)$.
Finally, remark that for every $Y \in \mat{d}{\bQ}$, 
the $(1, 1)$th entry of $ P^{-1} Y P$  equals ${(LC)}^{-1} L Y C$. 
Therefore, $(L, C, \calx)$ is a yes-instance of $\pbZ^d$ if, and only if, $\calx'$ is a yes-instance of $\pbZ_{1, 1}^d$.
\end{proof}

Lemma~\ref{lem:ZijZ} ensures that $\pbR^d(k)$ reduces to $\pbZ^d(k)$; 
whether $\pbZ^d(k)$  reduces to $\pbR^d(k)$ is an open question.
However, it holds true that:

\ZR*

\begin{proof}
By Proposition~\ref{prop:ZiiZ}, it suffices to prove that $\pbZ_{1, 1}^d(k)$  reduces to  $\pbR^{d + 1}(k)$.

Let $\Phi\colon \mat{d}{\bQ} \to \mat{(d + 1)}{\bQ}$ be the morphism defined by:
$$
\Phi(X) = 
\begin{pmatrix}
X & X E_1 \\
O  & 0
\end{pmatrix}
$$
for every $X \in \mat{d}{\bQ}$.
For every $Y \in \mat{d}{\bQ}$,
the $(1, 1)$th entry of $Y$ equals the $(1, d + 1)$th entry of $\Phi(Y)$.
Hence, $\Phi$ induces a reduction from $\pbZ_{1, 1}^d(k)$  to  $\pbR^{d + 1}(k)$.
\end{proof}

Proposition~\ref{prop:ZR} improves on the following result, 
which is implicitly used in at least two papers:

\begin{proposition}[\cite{HalavaH07,GaubertK06}] \label{prop:S-K+2}
 $\pbZ^d(k)$ reduces to $\pbR^{d + 2}(k)$.
\end{proposition}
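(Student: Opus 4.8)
The plan is to imitate the morphism technique used for Proposition~\ref{prop:ZR}, only with one extra coordinate, so as to record the value of the bilinear form $(L,C)$ in the right-upper corner. Writing $(d+2)$-by-$(d+2)$ matrices in block form with the row/column partition $(1,d,1)$, I would define $\Phi\colon \mat{d}{\bQ} \to \mat{(d+2)}{\bQ}$ by
$$
\Phi(X) =
\begin{pmatrix}
0 & LX & LXC \\
O_{d,1} & X & XC \\
0 & O_{1,d} & 0
\end{pmatrix}
$$
for every $X \in \mat{d}{\bQ}$. The first step is the routine block computation showing that $\Phi(X)\Phi(X') = \Phi(XX')$ for all $X, X' \in \mat{d}{\bQ}$: the first column and the last row of $\Phi(X')$ are zero except for a single $0$ entry, so the top block-row and the rightmost block-column of the product collapse exactly to $LXX'$, $LXX'C$, and $XX'C$, while the $(2,2)$ block is $XX'$. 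Hence $\Phi$ is a morphism, and therefore $\Phi(\calx^+) = \Phi(\calx)^+$ for every $\calx \subseteq \mat{d}{\bQ}$.

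Next, given an arbitrary instance $(L,C,\calx)$ of $\pbZ^d(k)$, I would take $\Phi(\calx) \subseteq \mat{(d+2)}{\bQ}$ as the output. Since $\Phi$ is injective (its $(2,2)$ block is $X$), $\lgr{\Phi(\calx)} = \lgr{\calx} \le k$, so $\Phi(\calx)$ is a legitimate instance of $\pbR^{d+2}(k)$, and it is clearly computable from $(L,C,\calx)$. Finally, for every $Y \in \calx^+$ the $(1,d+2)$th entry of $\Phi(Y)$ equals $LYC$; combined with $\Phi(\calx^+) = \Phi(\calx)^+$ this gives that $(L,C,\calx)$ is a yes-instance of $\pbZ^d$ if and only if some matrix in $\Phi(\calx)^+$ has a zero in its right-upper corner, i.e. if and only if $\Phi(\calx)$ is a yes-instance of $\pbR^{d+2}$. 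Thus $\Phi$ induces the desired reduction. Note that no appeal to Lemma~\ref{lem:S-Star} is required, since the corner entry of $\Phi(Y)$ is $LYC$ for every instance, degenerate or not.

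I do not expect any genuine obstacle here: the construction is essentially forced, and the only things to verify are the multiplicativity of $\Phi$ and the preservation of cardinality, both of which are immediate. In fact the statement is now a corollary of Proposition~\ref{prop:ZR}: that proposition yields a reduction from $\pbZ^d(k)$ to $\pbR^{d+1}(k)$, and $\pbR^{d+1}(k) = \pbZ^{d+1}_{1,d+1}(k)$ reduces to $\pbZ^{d+2}_{1,d+1}(k)$ by padding every matrix with a zero last row and last column (a morphism that is the identity on the $(1,d+1)$ entry), which in turn reduces to $\pbR^{d+2}(k)$ by Proposition~\ref{prop:ZijR} since $d+1 \ge 2 \ne 1$. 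I would present the explicit $\Phi$ above as the self-contained argument and remark on this shorter route at the end.
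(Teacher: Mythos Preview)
Your proposal is correct and is essentially identical to the paper's own proof: the paper defines the very same block map $\Phi_{L,C}(X)=\begin{pmatrix}0&LX&LXC\\O&X&XC\\0&O&0\end{pmatrix}$ and observes that $(L,C,\calx)$ is a yes-instance of $\pbZ^d$ iff $\Phi_{L,C}(\calx)$ is a yes-instance of $\pbR^{d+2}$. Your write-up is slightly more detailed (you spell out multiplicativity and the cardinality-preservation via injectivity), and your closing remark that the statement now follows from Proposition~\ref{prop:ZR} is apt---indeed the paper presents Proposition~\ref{prop:S-K+2} precisely as the weaker predecessor that Proposition~\ref{prop:ZR} improves upon.
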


\begin{proof}
For every $L \in \bQ^{1 \times d}$ and every $C \in \bQ^{d \times 1}$,
let $\Phi_{L, C} \colon \mat{d}{\bQ} \to \mat{(d + 2)}{\bQ}$ be the  morphism defined by:
$$
\Phi_{L, C}(X) = 
\begin{pmatrix}
 0 & L X & LXC \\
 O & X & X C \\
 0 & O & 0
\end{pmatrix}
$$
for every $X \in \mat{d}{\bQ}$.
For every instance $(L, C, \calx)$ of $\pbZ^d$,
$(L, C, \calx)$ is a yes-instance of  $\pbZ^d$ 
if, and only if,
$\Phi_{L, C}(\calx)$ is a yes-instance of $\pbR^{d + 2}(k)$.
\end{proof}

\begin{proposition} \mbox{}
\begin{itemize}
  \item $\pbR^d(k)$ reduces to $\pbR^{d + 1}(k)$.
  \item $\pbZ^d(k)$ reduces to $\pbZ^{d + 1}(k)$.
  \item $\pbM^d(k)$ reduces to $\pbM^{d + 1}(k)$.
\end{itemize}
 \end{proposition}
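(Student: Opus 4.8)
The plan is to handle the three dimension-padding reductions uniformly by exhibiting, in each case, an injective morphism $\Phi\colon \mat{d}{\bQ} \to \mat{(d+1)}{\bQ}$ that embeds a $d$-by-$d$ matrix into the top-left block of a $(d+1)$-by-$(d+1)$ matrix, padding with a single extra row and column of zeros:
$$
\Phi(X) =
\begin{pmatrix}
X & O_{d, 1} \\
O_{1, d} & 0
\end{pmatrix} \, .
$$
The key observation is that $\Phi$ is multiplicative, that $\Phi(XY)=\Phi(X)\Phi(Y)$ follows by block multiplication since the new row and column are zero, and that $\Phi$ is injective. Consequently $\Phi(\calx)^+ = \Phi(\calx^+)$ for every $\calx \subseteq \mat{d}{\bQ}$, and $\lgr{\Phi(\calx)} = \lgr{\calx}$, so $\Phi(\calx)$ is a legal instance whenever $\calx$ is (the cardinality bound $k$ is preserved).

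For the mortality statement, $\pbM^d(k)$ reduces to $\pbM^{d+1}(k)$: given $\calx$, map it to $\Phi(\calx)$. Since $\Phi(O_{d,d}) = O_{d+1,d+1}$ and $\Phi$ is injective, $O_{d,d} \in \calx^+$ if and only if $O_{d+1,d+1} \in \Phi(\calx)^+$. For the left-upper-corner statement, $\pbZ^d(k)$ reduces to $\pbZ^{d+1}(k)$: via Proposition~\ref{prop:ZiiZ} it suffices to reduce $\pbZ_{1,1}^d(k)$ to $\pbZ_{1,1}^{d+1}(k)$, and the $(1,1)$th entry of $\Phi(Y)$ equals the $(1,1)$th entry of $Y$, so $\Phi$ does the job. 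For the right-upper-corner statement, the naive $\Phi$ does not directly work because padding on the right pushes the right-upper corner to position $(1, d+1)$, which is always zero under $\Phi$; instead I would pad on the \emph{left}, using
$$
\Psi(X) =
\begin{pmatrix}
0 & O_{1, d} \\
O_{d, 1} & X
\end{pmatrix} \, ,
$$
an injective morphism for the same reason, and note that the $(1, d)$th entry of $Y$ equals the $(2, d+1)$th entry of $\Psi(Y)$, which gives a reduction from $\pbR^d(k) = \pbZ_{1,d}^d(k)$ to $\pbZ_{2, d+1}^{d+1}(k)$; the latter is equivalent to $\pbR^{d+1}(k) = \pbZ_{1,d+1}^{d+1}(k)$ by Proposition~\ref{prop:ZijR}, since $2 \ne d+1$ (note $d+1 \ge 2$, and if $d = 1$ then $\pbR^1(k) = \pbZ_{1,1}^1(k)$, which is handled via the chain $\pbZ^1(k) \to \pbZ^2(k)$ together with Propositions~\ref{prop:ZiiZ} and~\ref{prop:ZijR}, or simply by observing $\mat{1}{\bQ}$ trivially embeds).

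The only mild obstacle is the right-upper-corner case: one must resist the temptation to reuse the top-left embedding and instead realize that the corner position is coordinate-sensitive, so the padding block must be chosen so that the target corner entry still ranges over the original $\calx^+$; once the left-padding $\Psi$ is chosen and combined with the already-established permutation-conjugation equivalences (Proposition~\ref{prop:ZijR}), everything is routine block arithmetic. I would present all three reductions in a single short proof, stating the common multiplicativity and injectivity facts once.
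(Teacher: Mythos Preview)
Your proof is correct, and for the Mortality part it coincides exactly with the paper's argument (same block embedding $\Phi$). For the other two parts you take a genuinely different route. The paper does not build any new embedding at all: it simply chains together results already proved, observing that $\pbR^d(k) \to \pbZ^d(k)$ by Lemma~\ref{lem:ZijZ} and $\pbZ^d(k) \to \pbR^{d+1}(k)$ by Proposition~\ref{prop:ZR}, which immediately gives both $\pbR^d(k) \to \pbR^{d+1}(k)$ and $\pbZ^d(k) \to \pbZ^{d+1}(k)$ by composing these two reductions in the appropriate order. Your approach instead stays with explicit block-diagonal embeddings ($\Phi$ padding on the right, $\Psi$ padding on the left) and then invokes Propositions~\ref{prop:ZiiZ} and~\ref{prop:ZijR} to realign coordinates. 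This is perfectly valid and arguably more transparent, but it forces you into a small case split at $d=1$ for the $\pbR$ reduction (where $2 = d+1$ so Proposition~\ref{prop:ZijR} does not apply); that case is harmless since $\pbR^1(k)$ is trivially decidable, but the chain you sketch there via $\pbZ^2(k)$ does not quite reach $\pbR^2(k)$ with the propositions you cite. The paper's route avoids this edge case entirely and needs no new construction, at the cost of relying on the slightly heavier Proposition~\ref{prop:ZR}.
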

 
 \begin{proof}
Remark that $\pbR^d(k)$ reduces to $\pbZ^d(k)$ by Lemma~\ref{lem:ZijZ}
and that 
$\pbZ^d(k)$ reduces to $\pbR^{d + 1}(k)$ by Proposition~\ref{prop:ZR}.
Therefore,  the first part of the proposition holds true.

The second part  of the proposition can be proven in the same way:
$\pbZ^d(k)$ reduces to $\pbR^{d + 1}(k)$ by Proposition~\ref{prop:ZR}
and 
$\pbR^{d + 1}(k)$ reduces to $\pbZ^{d + 1}(k)$ by Lemma~\ref{lem:ZijZ}.

Let $\Phi \colon \mat{d}{\bQ} \to \mat{(d + 1)}{\bQ}$ be the  morphism defined by:
$$
\Phi(X) = 
\begin{pmatrix}
X & O \\
O & 0
\end{pmatrix}
$$
for every $X \in \mat{d}{\bQ}$.
For every $Y \in \mat{d}{\bQ}$,
$Y$  equals the $d$-by-$d$ zero matrix 
if, and only if, 
$\Phi(Y)$ equals $(d + 1)$-by-$(d + 1)$ zero matrix.
Hence, $\Phi$ induces a reduction from $\pbM^d(k)$  to $\pbM^{d + 1}(k)$,
and thus the third part of the proposition holds true.
 \end{proof}

%

\begin{lemma} \label{lem:Zd-Md}
 Let $L \in \bQ^{1 \times d}$, let $C \in \bQ^{d \times 1}$, and let $\calx \subseteq \mat{d}{\bQ}$.
 The following two assertions are equivalent:
 \begin{enumerate}
  \item There exists $Y \in  \calx^\star$ such that  $L Y C = 0$.
 \item The $d$-by-$d$ zero matrix belongs to $\left( \calx \cup \{ CL \} \right)^+$.
 \end{enumerate}
  
\end{lemma}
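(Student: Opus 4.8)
The plan is to prove the equivalence by chasing what the products $LYC$ look like for $Y \in \calx^\star$ and relating them to products drawn from $\calx \cup \{CL\}$. The key observation is that multiplying by $CL$ in the middle of a product ``resets'' the left and right factors: if $A, B \in \calx^\star$ then $A (CL) B = (AC)(LB)$, and the scalar $LB \cdot C = \ldots$ actually one should track $L A C$ and $L B C$ as scalars. Concretely, any product over $\calx \cup \{CL\}$ that uses $CL$ at least once can be written, by grouping the maximal $\calx$-blocks between consecutive occurrences of $CL$, in the form
$$
Y_0 (CL) Y_1 (CL) Y_2 \cdots (CL) Y_m
$$
with $m \ge 1$ and each $Y_i \in \calx^\star$. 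Using $CL \cdot Y_i \cdot CL = C (L Y_i C) L = (L Y_i C)\, CL$ repeatedly, this collapses to a scalar multiple of $Y_0 (CL) Y_m = \bigl(\prod_{i=1}^{m-1} L Y_i C\bigr)\, (Y_0 C)(L Y_m)$. Hence this product is the zero matrix if and only if either $(Y_0 C)(L Y_m) = O$ or some interior scalar $L Y_i C$ vanishes; and $(Y_0 C)(L Y_m) = O$ is equivalent to $L Y_m = O_{1,d}$ or $Y_0 C = O_{d,1}$, neither of which can happen for a nonzero product unless — hmm, actually $Y_0 C$ or $L Y_m$ could legitimately be zero. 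Let me restructure: the cleanest route is to show $O_{d,d} \in (\calx \cup \{CL\})^+$ iff there exist $Y, Y' \in \calx^\star$ with $L Y C = 0$, and then observe this last condition is equivalent to assertion~(1) (taking $Y' = I$ or absorbing).

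First I would handle the direction (1)$\Rightarrow$(2). Suppose $Y \in \calx^\star$ with $LYC = 0$. Then $(CL) Y (CL) = C (LYC) L = O_{d,d}$, and since $CL \in \calx \cup \{CL\}$ and $Y \in \calx^\star$, the product $(CL)Y(CL)$ — wait, if $Y = I_d$ we get $(CL)(CL) = C(LC)L$, which is zero iff $LC = 0$; and if $Y \in \calx^+$ the product $(CL)Y(CL)$ is a genuine element of $(\calx \cup \{CL\})^+$. So I need the case $Y = I_d$, i.e. $LC = 0$, separately: then $(CL)(CL) = O$ already lies in $(\calx \cup \{CL\})^+$. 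Either way, $O_{d,d} \in (\calx \cup \{CL\})^+$, giving (2).

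For (2)$\Rightarrow$(1), take any product $Z = Z_1 Z_2 \cdots Z_n \in (\calx \cup \{CL\})^+$ equal to $O_{d,d}$, with each $Z_i \in \calx \cup \{CL\}$. If no $Z_i$ equals $CL$, then $Z \in \calx^+$ and $Z = O$; but then $L Z C = 0$ with $Z \in \calx^+ \subseteq \calx^\star$, and we are done (this is the case where $O \in \calx^+$ outright). Otherwise $CL$ occurs; group $Z$ as $Y_0 (CL) Y_1 (CL) \cdots (CL) Y_m$ with $m \ge 1$, each $Y_i \in \calx^\star$ (the blocks $Y_1, \ldots, Y_{m-1}$ strictly interior, possibly equal to $I_d$ if two copies of $CL$ are adjacent, and $Y_0, Y_m$ possibly $I_d$ as well). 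Collapsing the interior as above yields $Z = \bigl(\prod_{i=1}^{m-1} L Y_i C\bigr)\,(Y_0 C)(L Y_m)$. Since $Z = O$, either some factor $L Y_i C$ ($1 \le i \le m-1$) is $0$ — and then that $Y_i \in \calx^\star$ witnesses (1) — or $(Y_0 C)(L Y_m) = O_{d,d}$, which forces $Y_0 C = O_{d,1}$ or $L Y_m = O_{1,d}$ (a rank-one product of a column and a row vanishes iff one of them does). In the first subcase $L Y_0 C = L (Y_0 C) = 0$ with $Y_0 \in \calx^\star$; in the second, $L Y_m C = (L Y_m) C = 0$ with $Y_m \in \calx^\star$. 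In all cases assertion~(1) holds.

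The main obstacle — and the part deserving care rather than cleverness — is the bookkeeping around degenerate blocks: when two occurrences of $CL$ are adjacent the intervening $Y_i$ is $I_d$, and when $Z$ begins or ends with $CL$ the boundary blocks $Y_0$ or $Y_m$ are $I_d$; one must check the collapse identity $(CL) Y (CL) = (LYC)\, CL$ and the rank-one vanishing argument still go through with $Y = I_d$, and that $I_d \in \calx^\star$ by definition so it is a legitimate witness for~(1). I would also state once, up front, the elementary fact that for $u \in \bQ^{d\times 1}$ and $v \in \bQ^{1 \times d}$ we have $uv = O_{d,d}$ iff $u = O_{d,1}$ or $v = O_{1,d}$, since it is used twice. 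No induction is really needed beyond the obvious one collapsing the chain of $CL$'s; everything else is direct computation.
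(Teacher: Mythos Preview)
Your argument is correct. Both directions go through: for (1)$\Rightarrow$(2) you use $(CL)Y(CL)=C(LYC)L$, and for (2)$\Rightarrow$(1) your block decomposition $Y_0(CL)Y_1\cdots(CL)Y_m$ and the collapse to $\bigl(\prod_{i=1}^{m-1}LY_iC\bigr)(Y_0C)(LY_m)$ are fine, as is the rank-one vanishing argument for the boundary case. The false starts in your planning paragraph should of course be trimmed, but the proof that emerges is sound.

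The paper's proof rests on the same decomposition but packages the reverse direction more cleanly. Rather than analysing when the \emph{matrix} product $Z$ vanishes (which forces you into the outer-product case $(Y_0C)(LY_m)=O$), the paper multiplies $Z$ by $L$ on the left and $C$ on the right and works entirely in $\bQ$: it observes the semigroup identity
\[
L\bigl(\calx\cup\{CL\}\bigr)^\star C=(L\calx^\star C)^+,
\]
so $O\in(\calx\cup\{CL\})^+$ gives $0=LOC\in(L\calx^\star C)^+$, and then the zero-product property of $\bQ$ yields $0\in L\calx^\star C$ directly. This sidesteps your three-way case split ($LY_iC=0$ interior, $Y_0C=0$, $LY_m=0$): those cases are all absorbed into a single appeal to ``a product of rationals is zero iff some factor is.'' Your approach is more explicit and arguably more elementary (no identity to verify), at the cost of the extra bookkeeping you yourself flagged around degenerate boundary blocks.
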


\begin{proof}
Note that $L \calx^\star C \subseteq \bQ$.
Since $(L Y C) CL  \in \left(  \calx \cup \{ CL \}  \right)^+$ for every $Y \in \calx^ \star$,
the first considered assertion implies the second one.
Since
$$
L \left( \calx \cup \{ CL \} \right)^\star C  = \left( L \calx^\star C  \right)^+ \,, 
$$
the second considered assertion implies $0 \in \left( L \calx^\star C  \right)^+$.
Besides, 
$ 0 \in \left( L \calx^\star C  \right)^+$ is equivalent to $0 \in L \calx^\star C$
because $\bQ$ has the zero-product property.
Therefore, the considered assertions are equivalent.
\end{proof}

\ZM*

\begin{proof}
By Lemma~\ref{lem:S-Star}, we only need to reduce non-degenerated instances of $\pbZ^d(k)$.

Let $(L, C, \calx)$ be a non-degenerated instance of  $\pbZ^d(k)$.
Clearly, $\calx \cup \{ CL \}$ is an instance of $\pbM^d(k + 1)$ and $\calx \cup \{ CL \}$ is computable from $(L, C, \calx)$.
To conclude the proof of the proposition, we only need to check that the following three assertions are equivalent:
 \begin{enumerate}
  \item $(L, C, \calx)$ is a yes-instance  of  $\pbZ^d$.
  \item There exists $Y \in  \calx^\star$ such that  $L Y C = 0$.
  \item $\calx \cup \{ CL \}$ is a yes-instance of $\pbM^d$.
 \end{enumerate}
The first two considered assertions are equivalent because $LC \ne 0$;
the last two considered assertions are equivalent by Lemma~\ref{lem:Zd-Md}.
\end{proof}

\begin{proposition}[\cite{BournezB02}] \label{prop:Z2-M2}
 $\pbZ^2(k)$  is equivalent to $\pbM^2(k + 1)$.
\end{proposition}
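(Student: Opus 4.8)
The plan is to establish the two reductions separately. One direction, that $\pbZ^2(k)$ reduces to $\pbM^2(k+1)$, requires nothing new: it is the special case $d=2$ of Proposition~\ref{prop:ZM}. All the work lies in the converse reduction, $\pbM^2(k+1)$ to $\pbZ^2(k)$, and the idea is to run Lemma~\ref{lem:Zd-Md} \emph{backwards}, exploiting a feature special to dimension~$2$: a $2$-by-$2$ matrix fails to be invertible precisely when it has rank at most~$1$, i.e.\ is of the form $CL$ with $C$ a column and $L$ a row — exactly the shape of the generator adjoined in Lemma~\ref{lem:Zd-Md}. So, given an instance $\calx$ of $\pbM^2(k+1)$, the goal is to peel one singular generator off $\calx$ and reinterpret it as the vectors $L$ and $C$ of a Zero Reachability instance.

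Concretely, I would split into three cases. If $O_2 \in \calx$, then $\calx$ is trivially a yes-instance of $\pbM^2$ and one outputs a fixed yes-instance of $\pbZ^2(k)$. If $O_2 \notin \calx$ but every matrix of $\calx$ is non-singular, then every element of $\calx^+$ is non-singular (determinants are multiplicative), so $O_2 \notin \calx^+$ and one outputs a fixed no-instance. Otherwise some $M \in \calx$ is singular and nonzero; being a $2$-by-$2$ rank-one matrix, $M$ can be written, computably, as $M = CL$ with $C \in \bQ^{2 \times 1}$ and $L \in \bQ^{1 \times 2}$ both nonzero. Put $\caly = \calx \setminus \{M\}$, so that $|\caly| \le k$ and $\caly \cup \{CL\} = \calx$; Lemma~\ref{lem:Zd-Md} then yields that $O_2 \in \calx^+$ if and only if there exists $Y \in \caly^\star$ with $LYC = 0$.

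To finish, I would distinguish $LC = 0$ from $LC \ne 0$. If $LC = 0$, then $Y = I_2$ already witnesses the condition (equivalently, $M^2 = C(LC)L = O_2 \in \calx^+$), so one outputs a fixed yes-instance. If $LC \ne 0$, then $Y = I_2$ cannot satisfy $LYC = 0$, so ``$\exists Y \in \caly^\star$'' collapses to ``$\exists Y \in \caly^+$'', whence $O_2 \in \calx^+$ if and only if $(L, C, \caly)$ is a yes-instance of $\pbZ^2$; one outputs $(L, C, \caly)$, a legal instance of $\pbZ^2(k)$ since $|\caly| \le k$. Every branch is computable from $\calx$, completing the reduction. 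There is no deep obstacle here, but the point that makes the argument work — and the thing to state carefully — is the dimension-$2$ dichotomy ``non-singular versus outer product $CL$''; without it one could not capture, by peeling a single generator, all the ways $O_2$ can be produced. The one piece of bookkeeping to handle with care is the $LC = 0$ versus $LC \ne 0$ split, which is exactly what mediates between the $\caly^\star$ appearing in Lemma~\ref{lem:Zd-Md} and the $\caly^+$ in the definition of $\pbZ^2$, and which is also why Lemma~\ref{lem:S-Star} alone does not suffice.
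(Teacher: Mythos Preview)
Your proof is correct and follows essentially the same route as the paper: one direction is Proposition~\ref{prop:ZM}, and for the converse you invoke the $2\times 2$ dichotomy ``non-singular versus outer product'' to peel off a singular generator $M=CL$ and then apply Lemma~\ref{lem:Zd-Md}, handling the $LC=0$ vs.\ $LC\ne 0$ split exactly as the paper does. The only cosmetic difference is that you treat $O_2\in\calx$ as a separate preliminary case, whereas the paper absorbs it into the outer-product case (allowing $C$ or $L$ to be zero).
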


\begin{proof}
By Proposition~\ref{prop:ZM}, 
it suffices to show that  $\pbM^2(k + 1)$ reduces to $\pbZ^2(k)$.
The proof is based on Lemma~\ref{lem:Zd-Md} and the following property of $2$-by-$2$ matrices: 
for every $X \in \mat{2}{\bQ}$, 
either $X$ is non-singular 
or $X$ can be written as an outer product.

Let $\calx$ be an instance of  $\pbM^2(k + 1)$.

First, assume that all matrices in $\calx$ are non-singular.
Then, $\calx$ is a no-instance of $\pbM^2$ because  all matrices in $\calx^+$ are non-singular.

Second, assume that some matrix in $\calx$ can be written as an outer product.
Then, there exist $L \in {\bQ}^{1 \times 2}$ and $C \in \bQ^{2 \times 1}$ such that $CL \in \calx$. 
Clearly, $(L, C, \calx \setminus \{ C L \})$ is an instance of $\pbZ^2(k)$ and $(L, C, \calx \setminus \{ C L \})$ is computable from $\calx$.
To conclude the proof of the proposition, we only need to check that the following three assertions are equivalent:
\begin{enumerate}
 \item $\calx$ is a yes-instance of $\pbM^2$.
 \item There exists $Y \in \left( \calx \setminus \{ C L \} \right)^\star$ such that $L Y C = 0$.
 \item $LC = 0$ or $(L, C, \calx \setminus \{ C L \})$ is a yes-instance of $\pbZ^2$.
\end{enumerate}
The first two considered assertions are equivalent by Lemma~\ref{lem:Zd-Md};
the last two considered assertions are clearly equivalent.
\end{proof}

\section{Three-by-three matrices} \label{sec:three-three}

In this section, 
we prove that $\pbZ^3(5)$ and $\pbR^3(6)$  are undecidable by reduction from the generalized Post correspondence problem.
Let  $k \in \bN \cup \{ \infty \}$.

\subsection{The (generalized) Post correspondence problem}

Precise definitions of the  \emph{Post Correspondence Problem} (PCP) \cite{Post46PCP} 
and 
its best-known generalization are presented in this section.

An \emph{alphabet} is a finite set of symbols.
The canonical alphabet is the binary alphabet $\zeon$.
A \emph{word} is a finite sequence of symbols.
Word concatenation is denoted  multiplicatively.
For every word $w$, $\lgr{w}$ denotes the \emph{length} of $w$.
The word of length $0$ is called the \emph{empty word} and denoted $\mv$.
Let $A$ be an alphabet.
The set of all words over $A$ is denoted $A^\star$.
Note that $A^\star$ is a monoid under concatenation.
Set $A^+ = A^\star \setminus \{ \mv \}$.

Two slightly different definitions of the \emph{Generalized Post Correspondence Problem} (GPCP) can be found in the literature.
Let $e \in \{ {\star}, {+} \}$. 
Define $\pbgpcp_e$ as the following problem:
given an alphabet $A$, 
two morphisms $f$, $g\colon A^\star \to \zeon^\star$,
and 
$x$, $x'$, $y$, $y' \in \zeon^\star$, 
decide whether there exists $w \in A^e$ such that $x f(w) x' = y g(w) y'$;
it is understood that the instance $(A, f, g, x, x', y, y')$ is encoded by the quintuple 
$$(\left\{ (f(a), g(a)) : a \in A \right\}, x, x', y, y') \, .$$
Define $\pbgpcp_e(k)$ as the restriction of $\pbgpcp_e$ to those instances $(A, f, g, x, x', y, y')$ for which the cardinality of $A$ is not greater than $k$.
The subscript $e$ is sometimes dropped when there is no ambiguity.

\begin{proposition}
$\pbgpcp_{\star}(k)$ and $\pbgpcp_{+}(k)$ are equivalent.
\end{proposition}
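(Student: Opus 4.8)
The plan is to show each of $\pbgpcp_{\star}(k)$ and $\pbgpcp_{+}(k)$ reduces to the other by simple, instance-level transformations on the tuple $(A, f, g, x, x', y, y')$, keeping the alphabet size unchanged (this is why the cardinality bound $k$ is preserved). The only difference between the two problems is whether the witness word $w$ is allowed to be empty; so one direction is nearly trivial and the other requires ``simulating'' a fresh leading letter.

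First I would handle the easy direction: $\pbgpcp_{+}(k)$ reduces to $\pbgpcp_{\star}(k)$. Given an instance $(A, f, g, x, x', y, y')$ of $\pbgpcp_{+}$, I want to forbid the empty word. The cleanest way is to introduce a new symbol, but to keep the alphabet size fixed I instead observe that a nonempty word $w \in A^{+}$ can be written as $w = a w''$ with $a \in A$ and $w'' \in A^{\star}$. So I loop over the $|A|$ choices of the first letter $a$: for each $a$, build the $\pbgpcp_{\star}$ instance $(A, f, g, x f(a), x', y g(a), y')$, which has a solution $w'' \in A^{\star}$ exactly when $x f(a) f(w'') x' = y g(a) g(w'') y'$, i.e. exactly when $a w''$ is a solution of the original instance. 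The original $\pbgpcp_{+}$ instance is a yes-instance iff at least one of these $|A| \le k$ derived $\pbgpcp_{\star}(k)$ instances is a yes-instance; since the oracle Turing machine may query the $\pbgpcp_{\star}(k)$ oracle finitely many times, this is a legitimate reduction.

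For the converse, $\pbgpcp_{\star}(k)$ reduces to $\pbgpcp_{+}(k)$: given $(A, f, g, x, x', y, y')$, I must also account for the empty-word solution. I first test directly whether $x x' = y y'$; if so, the $\pbgpcp_{\star}$ instance is a yes-instance and the reduction outputs a fixed yes-instance of $\pbgpcp_{+}(k)$. Otherwise a solution must be nonempty, and I decompose it by its first letter exactly as above: for each $a \in A$ form $(A, f, g, x f(a), x', y g(a), y')$, which now I treat as a $\pbgpcp_{+}$ instance (asking for a nonempty completion $w''$) — wait, I actually need a $\pbgpcp_{\star}$-style completion here since $w''$ may be empty. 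The right fix is to have the derived instances ask ``does there exist $w'' \in A^{+}$...'' fail to capture $w = a$; so instead I re-run the same first-letter peeling recursively is not finite. The correct clean move is: form, for each $a \in A$, the $\pbgpcp_{+}(k)$ instance $(A, f, g, x, x', y, y')$ itself is no help — rather I note $w''\in A^\star$ with $|A|$ choices of $a$ reduces to asking whether the pair $(x f(a), x', y g(a), y')$ admits a $\pbgpcp_{+}$ witness \emph{or} $x f(a) x' = y g(a) y'$; the latter equality is checkable directly, and the former is a single $\pbgpcp_{+}(k)$ query. So the original $\pbgpcp_{\star}(k)$ instance is a yes-instance iff $xx' = yy'$, or some $a$ makes $xf(a)x' = yg(a)y'$, or some $a$ makes $(A,f,g,xf(a),x',yg(a),y')$ a yes-instance of $\pbgpcp_{+}(k)$.

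\textbf{The main obstacle} is purely bookkeeping: making sure every transformation keeps $|A| \le k$ (it does — we never enlarge the alphabet, only adjust the four border words $x, x', y, y'$), and making sure the finitely many oracle calls genuinely simulate the existential quantifier over the first letter. One subtlety worth stating explicitly in the write-up is that the encoding convention folds the alphabet into the pair-set $\{(f(a),g(a)) : a \in A\}$, so two distinct letters with the same image pair collapse; the first-letter case analysis must range over \emph{image pairs}, i.e. over the actual encoded set, not over an abstract $A$ — but since a solution word's effect depends only on those image pairs, this is harmless and in fact is exactly what the encoded instance gives us. I would close with the remark that composing the two reductions shows equivalence, as required.
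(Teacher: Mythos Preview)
Your argument is correct, and for the reduction $\pbgpcp_{+}(k) \to \pbgpcp_{\star}(k)$ you do exactly what the paper does: peel off the first letter and query the oracle on the instances $(A, f, g, x f(a), x', y g(a), y')$ for $a \in A$. For the converse direction, however, you work harder than necessary. The paper simply observes that $\cali$ is a yes-instance of $\pbgpcp_\star$ if and only if $xx' = yy'$ or $\cali$ itself is a yes-instance of $\pbgpcp_+$; once the empty-word case is handled by a direct equality check, any remaining witness is nonempty, so a \emph{single} oracle call on the \emph{unchanged} instance suffices. Your version (check $xx'=yy'$, then peel a letter, then check $xf(a)x'=yg(a)y'$, then query $\pbgpcp_+$ on the peeled instance) is valid but amounts to re-deriving that observation by an unnecessary case split on whether $|w|$ is $0$, $1$, or $\ge 2$. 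In a clean write-up that whole direction is one line; your exposition there is also visibly stream-of-consciousness and should be tightened.
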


\begin{proof}
 Let $\cali = (A, f, g, x, x', y, y')$ be an instance of $\pbgpcp(k)$.
 
First, $\cali$ is a yes-instance of $\pbgpcp_\star$ if, and only if, 
at least one of the following two holds true: 
$x x' = yy'$ or $\cali$ is a yes-instance of $\pbgpcp_+$.
Therefore, $\pbgpcp_\star(k)$ reduces to $\pbgpcp_+(k)$.
Second, $\cali$ is a yes-instance of $\pbgpcp_+$ if, and only if, there exists $a \in A$ such that 
$$
(A, f, g, xf(a),  x', y g(a), y')
$$
is a yes-instance of $\pbgpcp_\star$.
Therefore, $\pbgpcp_+(k)$ reduces to $\pbgpcp_\star(k)$.
\end{proof}

Define $\pbpcp(k)$ as the restriction of $\pbgpcp_+(k)$ to those instances 
$(A, f, g, x, x', y, y')$ that satisfy $xx'yy' = \mv$.
$\pbpcp(\infty)$ is the PCP.
The fundamental property of PCP is its undecidability \cite{Post46PCP,HopcroftMU01,SipserBook,Manna74}.
The undecidabilities of many decision problems are proven by reductions from PCP \cite{HopcroftMU01,Manna74}.
As far as we know, undecidability in $3$-by-$3$ matrices has always been proven by reductions from PCP or GPCP.
Note that the restriction of $\pbpcp(k + 2)$ to \emph{Claus instances}  \cite{HalavaHH07}  is equivalent to $\pbgpcp(k)$ \cite{HalavaHH07,Claus80,HarjuKHandbook}.

Define $\kgp$ as the smallest $k \in \bN$ such that $\pbgpcp(k)$ is undecidable; 
define $\kp$ as the smallest $k \in \bN$ such that $\pbpcp(k)$ is undecidable.
The exact values of $\kp$ and $\kgp$ are still unknown.
However, it is known that 
$\kp \le \kgp + 2$ \cite{HarjuKHandbook}, 
$2 < \kgp$,  \cite{HalavaHH02},
$\kp \le 7$ \cite{MatiyasevichS05},
and 
$\kgp \le 5$ \cite{HalavaHH07}:
$$
3 \le \kgp \le \kp \le \kgp + 2 \le 7 \, .
$$
The decidabilities of $\pbgpcp(3)$, $\pbgpcp(4)$,  $\pbpcp(3)$, $\pbpcp(4)$, $\pbpcp(5)$, and $\pbpcp(6)$  are open.

\subsection{Undecidability bounds} 
\label{sec:33-bounds}

In this section, 
we prove that $\pbZ^3(\kgp)$,  $\pbR^3(\kp)$, and $\pbR^3(\kgp + 1)$ are undecidable;
the undecidabilities of $\pbZ^3(\kgp)$ and $\pbR^3(\kp)$ were already known \cite{Claus80,HalavaHH07}.
However, it is still unknown whether $\kp \le \kgp + 1$.
Besides, the undecidability of $\pbZ^3(\kgp)$ implies
that of  $\pbR^4(\kgp)$ by Proposition~\ref{prop:ZR}
and that of 
$\pbM^3(\kgp + 1)$ by Proposition~\ref{prop:ZM}.
Previous related undecidability results are listed in Table~\ref{tab:33gpcp}.
As $\kgp \le 5$ \cite{HalavaHH07}, 
$\pbZ^3(5)$, $\pbR^3(6)$, $\pbR^4(5)$, and $\pbM^3(6)$ are undecidable.

 \begin{table}
 \begin{center}
 \begin{tabular}{|l|l|l|}
\hline
Year & Undecidable problem & Reference  \\
\hline
1970 & $\pbM^3(2 \kp + 2)$ &  \cite{Paterson70} \\
1974 & $\pbZ^3_{3, 2}(\kp)$ & \cite{Manna74} \\
1980 & $\pbpcp(10)$ &  \cite{Claus80} \\
     & $\pbR^3(\kp)$ & \cite{Claus80} (see also \cite{HarjuKHandbook} and Theorem~\ref{th:pcp-R}) \\
1981 &  $\pbpcp(9)$  & \cite{Pansiot81} (see also \cite{HarjuKK96,HarjuKHandbook}) \\
1996 &  $\pbgpcp(7)$ & \cite{HarjuKK96} (see also \cite{HarjuKHandbook}) \\
1997 & $\pbM^3(2 \kp + 1)$ &  \cite{HarjuKHandbook} \\
1999 & $\pbM^3(\kp + 2)$ & \cite{BournezB99} (see also \cite{BournezB02}) \\    
2001 & $\pbZ_{1, 1}^3(2 \kp)$ & \cite{HalavaH01} \\
     & $\pbM^3(\kp + 1)$ & \cite{HalavaH01} \\
2005 & $\pbpcp(7)$ & \cite{MatiyasevichS05} \\
2007 & $\pbgpcp(5)$ & \cite{HalavaHH07} \\
     & $\pbZ^3(\kgp)$ & \cite{HalavaHH07} (see also Theorem~\ref{th:G-Z}) \\
     & $\pbZ_{1, 1}^3(\kgp + 2)$ & \cite{HalavaHH07} \\
     & $\pbM^3(\kgp + 2)$ & \cite{HalavaHH07} \\
\hline
\end{tabular}
\end{center}
\caption{\label{tab:33gpcp}
Undecidability in $3$-by-$3$ matrices and the (generalized) Post correspondence problem.}
\end{table}

\begin{lemma}[\cite{Manna74,HarjuKHandbook,BellP08bounds,Claus80}] \label{lem:psi-ZRUC}
 There exists a morphism $\Psi\colon \zeon^\star \times \zeon^\star \to \mat{3}{\bQ}$ such that for all $u$, $v \in \zeon^\star$,
 the $(1, 3)$th entry of $\Psi(u, v)$ equals $0$ if, and only if, $u = v$.
\end{lemma}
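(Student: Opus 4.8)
The plan is to encode words over $\zeon$ as integers in a positional numeral system and to realize concatenation by an affine-like action on a small matrix. The classical trick is to send a pair of words $(u,v)$ to an upper-triangular $3$-by-$3$ matrix whose entries record, roughly, the numerical value of $u$, the numerical value of $v$, and a ``common scaling'' so that the $(1,3)$ entry measures the difference between the two numbers; this difference vanishes exactly when $u=v$. Concretely, fix a base, say base $3$, and for a word $w\in\zeon^\star$ let $\overline{w}$ be the integer obtained by reading $w$ as a base-$3$ digit string (so that $\overline{wa}=3\overline{w}+\overline{a}$, with $\overline{\mv}=0$). Then $w\mapsto 3^{\lgr{w}}$ is a multiplicative morphism from $(\zeon^\star,\cdot)$ to $(\bQ,\cdot)$, and one checks the cocycle identity $\overline{uv}=3^{\lgr{v}}\overline{u}+\overline{v}$.

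The construction I would write down is
$$
\Psi(u,v)=
\begin{pmatrix}
1 & 0 & \overline{u}-\overline{v} \\
0 & 3^{\lgr{u}} & -\overline{v} \\
0 & 0 & 3^{\lgr{v}}
\end{pmatrix}\,,
$$
or a column/row-rescaled variant thereof, and then verify by a direct $3\times 3$ block computation that $\Psi(u_1,v_1)\Psi(u_2,v_2)=\Psi(u_1u_2,v_1v_2)$. The top-left $1\times 1$ block is the identity and the bottom-right $1\times1$ block multiplies $3^{\lgr{v}}$'s, so the only thing to check carefully is the $(1,3)$ entry, where the product yields $(\overline{u_1}-\overline{v_1})\cdot 3^{\lgr{v_2}}+1\cdot(\overline{u_2}-\overline{v_2})$, which by the cocycle identity equals $\overline{u_1u_2}-\overline{v_1v_2}$, as desired. (One also uses the $(2,3)$ slot to carry $\overline{v}$ so that the recursion closes; the $(1,2)$ slot can be kept zero.) Since $\Psi$ is determined on the two generating pairs $(\one,\mv),(\two,\mv),(\mv,\one),(\mv,\two)$ — or, if one prefers $\Psi$ to be defined on $\zeon^\star\times\zeon^\star$ as a whole, on all pairs — it suffices to define it by these formulas and check multiplicativity once.

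Finally, for the equivalence ``$(1,3)$ entry is $0$ iff $u=v$'': the $(1,3)$ entry of $\Psi(u,v)$ is $\overline{u}-\overline{v}$, and because the base-$3$ value map $w\mapsto\overline{w}$ is injective on $\zeon^\star$ (the digit strings use only the digits $1,2$, so there are no leading-zero ambiguities and distinct words give distinct integers), $\overline{u}=\overline{v}$ forces $u=v$; conversely $u=v$ trivially gives difference $0$. I expect the only genuinely fiddly part to be bookkeeping in the block matrix product to confirm the cocycle identity propagates correctly, together with choosing the normalization (which corner carries which power of $3$, and whether to rescale rows so that, e.g., the relevant morphism maps into integer rather than rational matrices); none of this is deep, but it must be written with care so that associativity of the encoding is transparent. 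Everything else — injectivity of the numeral map, the identity element — is immediate.
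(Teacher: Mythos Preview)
Your overall strategy matches the paper's: encode words in base $3$ via $w\mapsto\overline{w}$, place $\overline{u}-\overline{v}$ in the $(1,3)$ entry of an upper-triangular $3\times 3$ matrix, and use injectivity of $w\mapsto\overline{w}$ on $\zeon^\star$ (no digit $\zero$, hence no leading-zero ambiguity). However, the concrete matrix you wrote down is \emph{not} a morphism, and the claim that ``the $(1,2)$ slot can be kept zero'' is precisely where it breaks. With $(1,2)=0$ and $(3,3)=3^{\lgr{v}}$, the $(1,3)$ entry of $\Psi(u_1,v_1)\Psi(u_2,v_2)$ is indeed
\[
(\overline{u_1}-\overline{v_1})\,3^{\lgr{v_2}}+(\overline{u_2}-\overline{v_2}),
\]
as you compute; but $\overline{u_1u_2}=3^{\lgr{u_2}}\overline{u_1}+\overline{u_2}$, not $3^{\lgr{v_2}}\overline{u_1}+\overline{u_2}$, so this equals $\overline{u_1u_2}-\overline{v_1v_2}$ only when $\lgr{u_2}=\lgr{v_2}$. (The $(2,3)$ entry fails too: it comes out as $-3^{\lgr{u_1}}\overline{v_2}-3^{\lgr{v_2}}\overline{v_1}$, not $-\overline{v_1v_2}$.) No diagonal rescaling repairs this, because the defect is structural: $\overline{u}$ and $\overline{v}$ obey cocycle relations with \emph{different} multipliers $3^{\lgr{u_2}}$ and $3^{\lgr{v_2}}$, and with $(1,2)=0$ there is no cross-term available to reconcile them.

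The paper resolves this by making the $(1,2)$ entry nonzero. It sets
\[
\Psi(u,v)=
\begin{pmatrix}
1 & \overline{v} & \overline{u}-\overline{v}\\
0 & 3^{\lgr{v}} & 3^{\lgr{u}}-3^{\lgr{v}}\\
0 & 0 & 3^{\lgr{u}}
\end{pmatrix},
\]
so that $(3,3)=3^{\lgr{u}}$ scales $\overline{u_1}$ correctly in the $(1,3)$ slot, while the cross-term $(1,2)\cdot(2,3)=\overline{v_1}\bigl(3^{\lgr{u_2}}-3^{\lgr{v_2}}\bigr)$ supplies exactly the correction that converts $3^{\lgr{u_2}}\overline{v_1}$ into $3^{\lgr{v_2}}\overline{v_1}$. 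With this choice both the morphism identity and the $(1,3)$ characterization go through by a direct check; the rest of your argument (injectivity on $\zeon^\star$, hence $(1,3)$ entry zero iff $u=v$) is correct as stated.
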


\begin{proof}
 Let $\sigma\colon \zeontw^\star \to \bN$ be the function defined by:
for each $w \in \zeontw^+$, 
 $w$ is a base-$3$ representation of the integer $\sigma(w)$
(we convene that $\mv$ is a representation of $0$). 
Hence, $\sigma$ satisfies
 $\sigma(\zero) = 0$,
 $\sigma(\one) = 1$,
 $\sigma(\two) = 2$,
 and 
$$
\sigma(w w') = 3^{\lgr{w'}} \sigma(w)   + \sigma(w')
$$
 for all $w$, $w' \in \zeontw^\star$. 
Set 
\begin{equation} \label{eq:def-psi}
 \Psi(u, v) 
 =
 \begin{pmatrix}
 1 & \sigma(v) & \sigma(u) - \sigma(v) \\
 0 & 3^{\lgr{v}} & 3^{\lgr{u}} - 3^{\lgr{v}} \\
 0 & 0 & 3^{\lgr{u}} \
\end{pmatrix} 
\end{equation}
for every $u$, $v \in  \zeon^\star$.
Straightforward computations yield
$$
\Psi(u u', v v')   = \Psi(u, v) \Psi(u', v')  
$$
for all $u$, $v$, $u'$, $v' \in \zeon^\star$, so $\Psi$ is a morphism.
Now, remark that $\sigma$ is not injective because $\sigma(\zero w ) = \sigma(w)$ for every $w \in \zeontw^\star$.
However, $\sigma$ is injective on the set of those words in $\zeontw^\star$ that do not begin with $\zero$.
In particular, $\sigma$ is injective on $\left\{ \one, \two \right\}^\star$.
Since the $(1, 3)$th entry of $\Psi(u, v)$ equals $\sigma(u) - \sigma(v)$ for all $u$, $v \in \zeon^\star$,
$\Psi$ satisfy the desired property.
\end{proof}

Let $\cals$ be a multiplicative semigroup, 
let $A$ and $B$ be alphabets, and 
let $\Psi\colon A^\star \times B^\star \to \cals$  be a morphism.
If the operation of $\cals$ is computable then $\Psi$ is computable.

\begin{theorem}[\cite{HalavaHH07}] \label{th:G-Z} 
 $\pbZ^3(\kgp)$ is undecidable.
\end{theorem}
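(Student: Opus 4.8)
The goal is to reduce $\pbgpcp(\kgp)$ to $\pbZ^3(\kgp)$, using the morphism $\Psi$ from Lemma~\ref{lem:psi-ZRUC}. Let $(A, f, g, x, x', y, y')$ be an instance of $\pbgpcp(\kgp)$ with $|A| \le \kgp$; it is a yes-instance (in the $\pbgpcp_+$ sense) iff there is a nonempty word $w = a_1 a_2 \cdots a_n \in A^+$ such that $x f(w) x' = y g(w) y'$. The idea is to package the three pieces — the fixed prefix pair $(x, y)$, the variable middle, and the fixed suffix pair $(x', y')$ — into $3$-by-$3$ matrices via $\Psi$, and read off the condition $xf(w)x' = yg(w)y'$ as an equation of the form $LYC = 0$ for a suitable row vector $L$, column vector $C$, and $Y$ ranging over the semigroup generated by finitely many matrices.

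First I would set, for each $a \in A$, the matrix $X_a = \Psi(f(a), g(a))$, and let $\calx = \{ X_a : a \in A \}$, so $|\calx| \le \kgp$. Since $\Psi$ is a morphism, $\Psi(f(w), g(w)) = X_{a_1} \cdots X_{a_n} \in \calx^+$ for every $w = a_1 \cdots a_n \in A^+$, and conversely every element of $\calx^+$ arises this way. Next I would choose $L = E_1^\rmt \Psi(x, y)$ and $C = \Psi(x', y') E_3$; then for $Y = \Psi(f(w), g(w))$ one has $LYC = E_1^\rmt \Psi(x f(w) x', y g(w) y') E_3$, which is exactly the $(1,3)$th entry of $\Psi(xf(w)x', yg(w)y')$. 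By Lemma~\ref{lem:psi-ZRUC} this entry is $0$ iff $xf(w)x' = yg(w)y'$. Hence $(L, C, \calx)$ is a yes-instance of $\pbZ^3$ iff the GPCP instance has a solution in $A^+$. Since $\Psi$ is computable (its defining formula~\eqref{eq:def-psi} is explicit) and $L$, $C$, $\calx$ are computable from the input, this is an effective reduction, and $\pbZ^3(\kgp)$ inherits undecidability from $\pbgpcp(\kgp)$.

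A point that needs a little care: the reduction as described uses $\pbgpcp_+$, i.e. solutions $w \in A^+$; the statement $\pbgpcp(\kgp)$ in the excerpt is $\pbgpcp_+(\kgp)$ with parameter $\kgp = $ the least $k$ with $\pbgpcp(k)$ undecidable, so this matches, and the $\pbgpcp_\star$ variant is anyway equivalent by the proposition proved just before Lemma~\ref{lem:psi-ZRUC}. One should also double-check the edge cases where some of $x, x', y, y'$ are empty — but $\Psi(\mv, \mv) = I_3$ by the morphism property, so those cause no trouble, and $L, C$ are still well-defined row/column vectors. The only genuinely substantive ingredient is Lemma~\ref{lem:psi-ZRUC} itself (the explicit base-$3$ encoding and its injectivity on $\zeon^\star$), which is already available; everything else is bookkeeping about composing a morphism with the linear forms $E_1^\rmt(\cdot)$ and $(\cdot)E_3$. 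So I do not expect a real obstacle — the main thing to get right is aligning the three concatenation blocks with the matrix product $L \cdot Y \cdot C$ so that the fixed words $x, y$ sit on the left, $x', y'$ on the right, and the corner entry extracts precisely $\sigma(xf(w)x') - \sigma(yg(w)y')$.
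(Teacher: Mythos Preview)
Your proposal is correct and follows essentially the same approach as the paper: the paper also reduces $\pbgpcp_+(k)$ to $\pbZ^3(k)$ by setting $L = E_1^\rmt \Psi(x, y)$, $C = \Psi(x', y') E_3$, $\calx = \{\Psi(f(a), g(a)) : a \in A\}$, and then verifying the same chain of equivalences via Lemma~\ref{lem:psi-ZRUC}. One small remark: your justification ``$\Psi(\mv,\mv)=I_3$ by the morphism property'' is not quite right, since the paper's morphisms are only semigroup morphisms; however, the explicit $\Psi$ from Equation~\eqref{eq:def-psi} does satisfy $\Psi(\mv,\mv)=I_3$, and in any case this observation is not needed for the present proof.
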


\begin{proof}
Let us show that $\pbgpcp_+(k)$ reduces to $\pbZ^3(k)$ for any $k$.
Set $E_1   = \begin{pmatrix} 1 &  0 & 0 \end{pmatrix}^ \rmt$ and $E_3  = \begin{pmatrix} 0 & 0 & 1 \end{pmatrix}^ \rmt$; 
such a notation is consistent with Section~\ref{sec:general}.
Let $\Psi$ be as in Lemma~\ref{lem:psi-ZRUC}.

Let $\cali = (A, f, g, x, x', y, y')$ be an instance of $\pbgpcp(k)$.
Put 
 \begin{gather*}
 L  = E_1^\rmt\Psi(x, y) \, , \\
 C  =  \Psi(x', y') E_3 \,, \\
 X(w) =\Psi(f(w), g(w))  
\intertext{for every $w \in A^\star$, and}
  \calx = \left\{ X(a) : a \in A \right\} \, .
 \end{gather*}
Since $\Psi$ is computable, 
$(L, C, \calx)$ is computable from $\cali$.
Moreover, the cardinality of $\calx$ is not greater than that of $A$,
so $(L, C, \calx)$ is an instance of $\pbZ^3(k)$.
To conclude the proof of the theorem, we only need to check that the following three assertions are equivalent:
 \begin{enumerate}
  \item
 $\cali$ is a yes-instance of $\pbgpcp_+$.
  \item
 There exists  $w \in A^+$ such that $L X(w) C = 0$.
  \item
 $(L, C, \calx)$ is a yes-instance of $\pbZ^3$. 
 \end{enumerate}
 For every $w \in A^\star$, the $(1, 3)$th entry of $\Psi(x f(w) x', y g(w) y')$ equals $L X(w) C$, and thus  
\begin{equation} \label{eq:LXC-xfx-ygy}
L X(w) C = 0 \iff x f(w) x' =  y g(w) y' \, .  
\end{equation}
Therefore, the first two considered assertions are equivalent.
Now,  remark that 
$X(w w') = X(w) X(w')$ for all $w$, $w' \in A^\star$.
It follows that 
\begin{equation}  \label{eq:X+-A+}
\calx^+ = \left\{ X(w) : w \in A^+ \right\} \, .  
\end{equation}
Therefore, the last  two considered assertions are equivalent.
\end{proof}
%
%


\begin{theorem}[\cite{Manna74,HarjuKHandbook,Claus80}] \label{th:pcp-R}
$\pbR^3(\kp)$ is undecidable. 
\end{theorem}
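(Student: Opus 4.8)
The plan is to mimic the proof of Theorem~\ref{th:G-Z} (the $\pbZ^3$ case), but using the morphism $\Psi$ of Lemma~\ref{lem:psi-ZRUC} directly in a $\pbR^3$ instance, so that no extra ``zero-product'' trick is needed and hence no ambient linear functionals $L$, $C$ have to be absorbed. Concretely, I would show that $\pbpcp_+(k)$ reduces to $\pbR^3(k)$ for every $k$; the undecidability of $\pbR^3(\kp)$ then follows from the definition of $\kp$ together with the fact that $\pbpcp(k)$ is the restriction of $\pbgpcp_+(k)$ to instances with $xx'yy' = \mv$.

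First I would recall $\Psi$ from Lemma~\ref{lem:psi-ZRUC} and the identity \eqref{eq:LXC-xfx-ygy}, specialized to the PCP situation where $x = x' = y = y' = \mv$, so that $\Psi(\mv,\mv) = I_3$. Given a PCP instance $\cali = (A, f, g, \mv, \mv, \mv, \mv)$, set $X(w) = \Psi(f(w), g(w))$ for $w \in A^\star$ and $\calx = \{ X(a) : a \in A \}$; this is computable from $\cali$ and has cardinality at most $|A| \le k$, so it is a legitimate instance of $\pbR^3(k)$. Since $\Psi$ is a morphism and $f$, $g$ are morphisms, $X$ is a morphism, whence $\calx^+ = \{ X(w) : w \in A^+ \}$ exactly as in \eqref{eq:X+-A+}. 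By Lemma~\ref{lem:psi-ZRUC}, the $(1,3)$th entry of $X(w)$ is $\sigma(f(w)) - \sigma(g(w))$, which vanishes iff $f(w) = g(w)$ (using injectivity of $\sigma$ on $\zeon^\star$). Therefore $\cali$ is a yes-instance of $\pbpcp_+$ iff there is $w \in A^+$ with the $(1,3)$th entry of $X(w)$ equal to $0$ iff $\calx$ is a yes-instance of $\pbR^3 = \pbZ_{1,3}^3$; this chain of equivalences is the substance of the argument and is essentially a transcription of the corresponding steps in Theorem~\ref{th:G-Z}.

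The only genuine subtlety — and the step I expect to require the most care — is the interface between PCP and GPCP: the definition of $\kp$ is in terms of $\pbpcp(k) = $ the restriction of $\pbgpcp_+(k)$ to instances with $xx'yy' = \mv$, so I must make sure the reduction is applied to such restricted instances and that the resulting $\calx$ indeed uses only the $|A|$ generators $X(a)$ with no auxiliary matrix smuggled in (which is exactly what the choice $x=x'=y=y'=\mv$ guarantees, since then $L = E_1^\rmt$ and $C = E_3$ are constants, not part of the data). I would remark that, by the statement in the excerpt ($\pbpcp(k+2)$ restricted to Claus instances is equivalent to $\pbgpcp(k)$, and $3 \le \kgp \le \kp \le \kgp + 2$), one also recovers the bound $\pbR^3(\kp)$ is undecidable with $\kp \le 7$, matching the table; but that observation is not needed for the proof itself. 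Finally I would note in passing that this same construction, applied to a full GPCP instance and tracking the nonconstant $L = E_1^\rmt \Psi(x,y)$ and $C = \Psi(x',y') E_3$, would reprove Theorem~\ref{th:G-Z} and also give $\pbR^3(\kgp+1)$ by folding $L$, $C$ into one additional generator — which is presumably how the surrounding section proceeds.
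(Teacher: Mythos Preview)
Your proposal is correct and is essentially the paper's own proof: the paper also reduces $\pbpcp(k)$ to $\pbR^3(k)$ by reusing the notation of Theorem~\ref{th:G-Z}, observing that when $xx'yy'=\mv$ one has $L=E_1^\rmt$ and $C=E_3$ (using $\Psi(\mv,\mv)=I_3$), so that the $\pbZ^3$ instance collapses directly to a $\pbR^3$ instance on the same $|A|$ generators. Your closing remark about folding $L,C$ into one extra generator to obtain $\pbR^3(\kgp+1)$ is exactly the content of the subsequent Theorem~\ref{th:G-R}.
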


\begin{proof}
Let us show that $\pbpcp(k)$ reduces to $\pbR^3(k)$ for any $k$.
Let the notation be as in the proof of Theorem~\ref{th:G-Z}.
Without loss of generality, we assume $\Psi(\mv, \mv) = I_3$.
Hence, if $xx'yy' = \mv$ then $L = E_1^\rmt$ and $C = E_3$.
The following three assertions are thus equivalent in the case where $\cali$ is an instance of $\pbpcp$:
\begin{enumerate}
 \item $\cali$ is a yes-instance of $\pbpcp$.
  \item There exists $w \in A^+$ such that $E_1^\rmt X(w) E_3 = 0$.
  \item $\calx$ is a yes-instance of $\pbR^3$. \qedhere
\end{enumerate}
\end{proof}

The last important result of Section~\ref{sec:three-three} is:

\begin{theorem} \label{th:G-R}
$\pbR^3(\kgp + 1)$ is undecidable.
\end{theorem}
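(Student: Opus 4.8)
The plan is to reduce $\pbgpcp_\star(k)$ to $\pbR^3(k+1)$ (equivalently, using the earlier proposition, reduce $\pbgpcp_+(k)$), which together with $\kgp \le 5$ and the fact that $\pbgpcp(\kgp)$ is undecidable yields the undecidability of $\pbR^3(\kgp+1)$, and in particular of $\pbR^3(6)$. The obvious obstacle is that the construction in Theorem~\ref{th:G-Z} produces an \emph{instance of $\pbZ^3$}, i.e.\ a triple $(L,C,\calx)$ with $L = E_1^\rmt\Psi(x,y)$ and $C = \Psi(x',y')E_3$ in general not equal to $E_1^\rmt$ and $E_3$; so the $(1,3)$-entry of elements of $\calx^+$ does not directly encode the GPCP instance. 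The idea is to absorb the boundary words $x,x',y,y'$ into two extra generators rather than into $L$ and $C$, at the cost of exactly one extra matrix in the generating set.

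Concretely, let $\Psi$ be as in Lemma~\ref{lem:psi-ZRUC}, normalized so that $\Psi(\mv,\mv)=I_3$, and keep the notation $X(w)=\Psi(f(w),g(w))$ and $\calx=\{X(a):a\in A\}$ from the proof of Theorem~\ref{th:G-Z}. Put $B=\Psi(x,y)$, $D=\Psi(x',y')$, and consider the single auxiliary matrix $M = D\,E_3E_1^\rmt\,B = (DE_3)(E_1^\rmt B)$, the outer product of the column $C=DE_3$ with the row $L=E_1^\rmt B$. The key computation is that for any $Y\in\calx^\star$, the product $M Y M$ equals $(L Y C)\,M$, so that the $(1,3)$-entry of $MYM$ is $L Y C$ times the $(1,3)$-entry of $M$; and more generally any product over the generating set $\calx\cup\{M\}$ that uses $M$ at least twice has the form $(\text{scalar})\cdot M$ with the scalar lying in the semigroup generated by the values $L X(w) C$, while products that use $M$ at most once are either elements of $\calx^+$ or have a single factor $M$ on one side. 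I would then show, exactly as in the derivation of \eqref{eq:LXC-xfx-ygy}, that $LX(w)C$ is the $(1,3)$-entry of $\Psi(xf(w)x',yg(w)y')$, so $LX(w)C=0$ iff $xf(w)x'=yg(w)y'$.

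The verification then splits into checking that $\calx\cup\{M\}$ has a zero in the right-upper corner iff $\cali$ is a yes-instance of $\pbgpcp_+$. For the ``if'' direction, if $w\in A^+$ witnesses $xf(w)x'=yg(w)y'$ then $M X(w) M = (LX(w)C)M = O$, which lies in $(\calx\cup\{M\})^+$, so the corner entry is trivially $0$. For the ``only if'' direction one has to argue that no spurious zero in the $(1,3)$-corner can arise from a product: products lying in $\calx^+$ have strictly positive $(1,3)$-entry whenever $u\ne v$... actually one must be slightly careful, since $\Psi(u,v)$ has $(1,3)$-entry $\sigma(u)-\sigma(v)$, which can vanish for $u\ne v$; this is precisely why the statement is about \emph{GPCP} ($\kgp$) rather than \emph{PCP} ($\kp$), and why the boundary words cannot simply be dropped — one needs the full flexibility of arbitrary $x,x',y,y'$, and one should observe that $\Psi$ restricted to $\{\one,\two\}^\star$ (i.e.\ via a morphism $f,g$ into $\{\one,\two\}^\star$, which is harmless after the standard coding of the alphabet) makes $\sigma$ injective. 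So I would first replace $f,g$ by morphisms into $\left\{\one,\two\right\}^\star$ (padding each letter image), which does not change the answer to the GPCP instance, and then the only way the $(1,3)$-corner of a product can be $0$ is if the product factors through $MYM$ for some $Y\in\calx^\star$ with $LYC=0$, i.e.\ through a genuine solution.

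The main obstacle, and the step deserving the most care, is the case analysis on how $M$ can appear in an arbitrary product over $\calx\cup\{M\}$: one must rule out that a product using $M$ \emph{once}, of the shape $X(w)M X(w')$ read inside a longer product, or $M X(w)$ or $X(w)M$, can produce a zero $(1,3)$-entry for the wrong reason, and one must confirm the algebraic identity $MYM=(LYC)M$ and that $M^2=(LC)M$ with $LC=\sigma(x)-\sigma(y)+\cdots$ potentially nonzero so that $M$ is not nilpotent by itself. Once the identity $MYM=(LYC)M$ is in hand and $\sigma$ is injective on the relevant set, the equivalence follows in the same style as Theorem~\ref{th:G-Z}, and combining with Theorem~\ref{th:G-Z}'s input that $\pbgpcp(\kgp)$ is undecidable gives the claim.
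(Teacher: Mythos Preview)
Your construction is exactly the one the paper uses: the extra matrix is $M=CL$ with $L=E_1^\rmt\Psi(x,y)$ and $C=\Psi(x',y')E_3$, and the scalar identity $MYM=(LYC)M$ is precisely what makes the ``if'' direction immediate. The decomposition
\[
E_1^\rmt\bigl(\calx\cup\{CL\}\bigr)^+ E_3
\;=\;
\bigl(E_1^\rmt\calx^+ E_3\bigr)\;\cup\;
\bigl(E_1^\rmt\calx^\star C\bigr)\bigl(L\calx^\star C\bigr)^\star\bigl(L\calx^\star E_3\bigr)
\]
is also what the paper writes down.

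Where your argument has a real gap is the ``only if'' direction. Your proposed fix --- pad $f,g$ so that they land in $\zeon^\star$ --- does nothing: $f,g$ already map into $\zeon^\star$ by definition, and $\sigma$ is already injective there. The genuine obstructions you must kill are that, for \emph{some} $w\in A^\star$, one of
\[
E_1^\rmt X(w)E_3=0\quad(\text{i.e.\ }f(w)=g(w)),\qquad
L X(w)E_3=0\quad(\text{i.e.\ }xf(w)=yg(w)),\qquad
E_1^\rmt X(w)C=0\quad(\text{i.e.\ }f(w)x'=g(w)y')
\]
may hold without $xf(w)x'=yg(w)y'$ holding. A single counterexample: if $x=y$ then $LE_3=0$, so $E_1^\rmt X(a)ME_3=(E_1^\rmt X(a)C)(LE_3)=0$ for every $a\in A$, and $\calx\cup\{M\}$ is a yes-instance of $\pbR^3$ regardless of whether the GPCP instance has a solution. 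No amount of padding $f$ and $g$ touches this.

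The paper resolves this by a preprocessing step you are missing: it first reduces $\pbgpcp(k)$ to its restriction to \emph{Claus-like} instances, meaning instances for which $xf(w)\ne yg(w)$, $f(w)x'\ne g(w)y'$, and $f(w)=g(w)\Leftrightarrow w=\mv$ for all $w\in A^\star$. This is achieved by replacing $f,g,x,x',y,y'$ with $\lambda\circ f$, $\rho\circ g$, $\lambda(x)$, $\lambda(x')\one\two$, $\one\two\rho(y)$, $\rho(y')$, where $\lambda(a)=\one\two a$ and $\rho(a)=a\one\two$; simple length-modulo-$3$ and letter-position arguments then force the three disequalities. With the Claus-like hypothesis in hand, the three parasite factors above are all nonzero, and the displayed decomposition shows $0\in E_1^\rmt(\calx\cup\{CL\})^+E_3$ iff $0\in L\calx^\star C$, which is the desired equivalence.
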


 Our proof of Theorem~\ref{th:G-R}  requires the introduction of additional material, 
 including the proofs of two lemmas.
An instance  $(A, f, g, x, x', y, y')$ of $\pbgpcp$ is called \emph{Claus-like} if it satisfies the following three conditions for any $w \in A^\star$:
   \begin{gather*} 
  x f(w)  \ne y g(w)\,,  \\
    f(w) x'  \ne  g(w)y' \,,
    \intertext{and} 
     f(w) = g(w) \iff w = \mv \, .
   \end{gather*} 
Let  $\lambda$ and  $\rho$ be the morphisms from $\zeon^\star$ to itself defined by:
$\lambda(a) = \one \two a$ and $\rho(a) = a \one \two$ for each $a \in \zeon$.
The useful properties of $\lambda$ and $\rho$ are summarized in the following lemma:

\begin{lemma}
 The following properties hold true for any $u$, $v \in \zeon^\star$: 
 \begin{gather}
  \lambda(u) \one \two  = \one \two  \rho(v) \iff u = v \,,  \label{eq:l12-12r-u-v}\\
  \lambda(u) \ne  \one \two  \rho(v)\,, \label{eq:l-12r}\\
  \lambda(u)  \one \two \ne \rho(v) \,, \label{eq:l12-r}
  \intertext{and} 
  \lambda(u) = \rho(v) \iff uv = \mv  \, .  \label{eq:l-r-u-v}
 \end{gather}
\end{lemma}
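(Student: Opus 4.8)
The plan is to prove the four identities \eqref{eq:l12-12r-u-v}--\eqref{eq:l-r-u-v} simultaneously, exploiting the fact that $\one\two$ is a short synchronizing block: the only occurrences of the factor $\one\two$ in $\lambda(u)\one\two$ and in $\one\two\rho(v)$ are the ``visible'' ones separating consecutive letters of $u$ (respectively $v$), plus the appended/prepended copy. First I would record the structural shape of the words involved. Writing $u = a_1 a_2 \cdots a_m$ and $v = b_1 b_2 \cdots b_n$ with each $a_i$, $b_j \in \zeon$, we have
\begin{gather*}
\lambda(u) = \one\two a_1\, \one\two a_2 \cdots \one\two a_m\,, \\
\rho(v) = b_1\, \one\two\, b_2\, \one\two \cdots b_n\, \one\two\,,
\end{gather*}
so that $\lambda(u)\one\two = \one\two a_1\,\one\two a_2 \cdots \one\two a_m\,\one\two$ and $\one\two\rho(v) = \one\two b_1\,\one\two b_2\cdots \one\two b_n\,\one\two$. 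For \eqref{eq:l12-12r-u-v}, the ``if'' direction is immediate from these formulas with $u=v$. For ``only if'', I would compare lengths to get $3|u|+2 = 3|v|+2$, hence $m=n$, and then read off $a_i = b_i$ for each $i$ by looking at the symbol in position $3i$ (the letter immediately following the $i$th occurrence of $\one\two$ in each word); since $\one,\two\notin\{a_i,b_i\}$ is not needed, only positional matching is, this is a routine induction (or direct positional argument).

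Next I would dispatch \eqref{eq:l-12r} and \eqref{eq:l12-r} by a parity/counting obstruction rather than case analysis. Both $\lambda(u)$ and $\one\two\rho(v)$, as well as $\lambda(u)\one\two$ and $\rho(v)$, can be compared via the count of the letter $\two$ versus total length, or more robustly: note $|\lambda(u)| = 3|u|$ while $|\one\two\rho(v)| = 3|v|+2$ and $|\lambda(u)\one\two| = 3|u|+2$ while $|\rho(v)| = 3|v|$; equality of the words would force $3|u| = 3|v|+2$ (resp. $3|u|+2 = 3|v|$), which is impossible modulo $3$. So \eqref{eq:l-12r} and \eqref{eq:l12-r} hold for the trivial reason that the two sides always have incongruent lengths mod $3$. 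Finally, \eqref{eq:l-r-u-v}: if $uv = \mv$ then $u = v = \mv$ and both sides are $\mv$. Conversely, $\lambda(u) = \rho(v)$ forces $3|u| = 3|v|$, hence $|u| = |v| = m = n$; if $m \ge 1$ then the first letter of $\lambda(u)$ is $\one$ while the first letter of $\rho(v) = b_1\one\two\cdots$ is $b_1 \in \zeon$, so we would need $b_1 = \one$, and comparing position $2$ gives $\two$ on both sides (consistent), but comparing position $3$ gives $a_1$ on the left and $\two$'s successor which is $b_2$ on the right when $m\ge 2$, or the end of the word when $m=1$ — in the latter case $\lambda(u) = \one\two a_1$ has length $3$ and $\rho(v) = b_1\one\two$ has length $3$, and equality forces $a_1 = \two$ and $\one\two a_1 = \one\two\two$ versus $b_1\one\two = \one\one\two$, a contradiction since position $2$ reads $\two$ vs $\one$. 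A cleaner uniform argument: $\lambda(u)$ begins with $\one\two$ whenever $u\ne\mv$, whereas $\rho(v)$ begins with $\one\two$ only if $b_1 = \one$ and then position $3$ of $\rho(v)$ is $\two$, but position $3$ of $\lambda(u)$ is $a_1 \in \zeon$, forcing $a_1 = \two$; iterating, one shows $\rho(v)$ would have to equal $\one\two\one\two\cdots$ truncated, which cannot match $\lambda(u) = \one\two a_1 \one\two a_2\cdots$ unless both are empty. I would phrase this as: $\lambda(u)\,\one\two = \one\two\,\lambda'(u)$ and $\one\two\,\rho(v) = \rho'(v)\,\one\two$ for suitable conjugate morphisms, making the ``$\one\two$ on opposite ends'' structure transparent.

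The main obstacle is purely bookkeeping: there is no real mathematical content beyond the observation that $\one\two$ acts as a rigid delimiter, so the risk is simply getting the positional indices and the boundary ($m=0$ or $m=1$) cases right. I would therefore structure the write-up around the explicit factorizations displayed above and reduce every claim to (i) a length congruence mod $3$ (killing \eqref{eq:l-12r}, \eqref{eq:l12-r}, and the length-matching halves of the other two) followed by (ii) a one-line positional comparison, keeping the induction implicit. I expect the whole proof to be under a dozen lines once the notation $u = a_1\cdots a_m$, $v = b_1\cdots b_n$ is fixed.
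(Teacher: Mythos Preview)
Your approach is essentially the same as the paper's: leave \eqref{eq:l12-12r-u-v} as a direct positional comparison, dispose of \eqref{eq:l-12r} and \eqref{eq:l12-r} by the length-mod-$3$ obstruction, and handle \eqref{eq:l-r-u-v} by a short case analysis on whether $u$ or $v$ is empty. The paper does exactly this.

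There is, however, a genuine bookkeeping slip in your treatment of \eqref{eq:l-r-u-v}. You write that once $b_1 = \one$ is forced, ``comparing position $2$ gives $\two$ on both sides (consistent)''. That is wrong: with $b_1 = \one$ we have $\rho(v) = \one\,\one\,\two\cdots$, so position $2$ of $\rho(v)$ is $\one$, while position $2$ of $\lambda(u) = \one\,\two\,a_1\cdots$ is $\two$. Position $2$ is precisely where the contradiction lives, not a consistency. You in fact rediscover this in your $m=1$ sub-case (``position $2$ reads $\two$ vs $\one$''), which shows the general argument you wrote just above it cannot be right as stated. The paper's version of this step is the clean one you were groping for: when $u \ne \mv$ and $v \ne \mv$, simply observe that the second letter of $\lambda(u)$ is $\two$ while the second letter of $\rho(v)$ is $\one$, and you are done in one line. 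Replace your tangled iteration with that single positional observation and the proof is complete.
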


\begin{proof}
 The proof of Equation~\eqref{eq:l12-12r-u-v} is left to the reader.
 The length of $\lambda(u)$ is a multiple of $3$ 
 whereas 
 the length of  $\one \two  \rho(v)$  is congruent to $2$ modulo $3$.
  Therefore, Equation~\eqref{eq:l-12r} holds true.
 Equation~\eqref{eq:l12-r} is proven in the same way as Equation~\eqref{eq:l-12r}.
 It remains to prove Equation~\eqref{eq:l-r-u-v}.
 If $uv = \mv$ then $\lambda(u) = \mv = \rho(v)$.
 If $u = \mv$ and $v \ne \mv$ then  $\lambda(u) = \mv \ne \rho(v)$.
 If $u \ne \mv$ and $v = \mv$ then  $\lambda(u) \ne \mv = \rho(v)$.
 Let us now deal with the last case: $u \ne \mv$ and $v \ne \mv$.
 The lengths of $\lambda(u)$ and $\rho(v)$ are then larger than or equal to $3$.
 Furthermore, 
 the second letter of $\lambda(u)$ equals $\one$ 
 whereas 
 the second letter of $\rho(v)$ equals $\two$.
 It follows $\lambda(u) \ne \rho(v)$. 
\end{proof}

\begin{lemma} \label{lem:gpcp-claus}
 For each $e \in \{ {\star}, {+} \}$, 
 $\pbgpcp_e(k)$ reduces to its restriction to Claus-like instances.
\end{lemma}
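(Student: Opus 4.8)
The plan is to take an arbitrary instance $\cali = (A, f, g, x, x', y, y')$ of $\pbgpcp_e(k)$ and transform it into a Claus-like instance $\cali'$ with the same alphabet (hence no larger cardinality) that has a solution in $A^e$ exactly when $\cali$ does. The idea is to pad the two morphisms using the auxiliary morphisms $\lambda$ and $\rho$ introduced above, and to pad the fixed words $x, x', y, y'$ in a matching way, so that any common image forces a genuine nonempty common solution of the original instance while ruling out the degenerate coincidences forbidden by the three Claus-like conditions. Concretely, I would set $f' = \lambda \circ f$ and $g' = \rho \circ g$ (so that $f'$ and $g'$ are again morphisms $A^\star \to \zeon^\star$), and choose the new border words to be something like $x'' = \one\two\, x$ or $x\,\one\two$ on one side and a $\one\two$-decorated version of $y, x', y'$ on the other, with the exact placement dictated by Equations~\eqref{eq:l12-12r-u-v}, \eqref{eq:l-12r}, \eqref{eq:l12-r}, and \eqref{eq:l-r-u-v}.

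The verification then splits into two directions. For correctness of the reduction I would show that $x'' f'(w) x''' = y'' g'(w) y'''$ holds for some $w \in A^e$ if and only if $x f(w) x' = y g(w) y'$ holds for that same $w$: the "only if" direction uses Equation~\eqref{eq:l12-12r-u-v} to strip the $\lambda$/$\rho$ padding off a putative solution and recover a solution of $\cali$, while the "if" direction just applies the padding morphisms to an existing solution. For the Claus-like property I would check the three defining conditions for every $w \in A^\star$: the inequality $x'' f'(w) \ne y'' g'(w)$ should reduce, after cancelling common prefixes, to an instance of Equation~\eqref{eq:l-12r} (a word in the image of $\lambda$ can never equal a word of the form $\one\two\,\rho(\cdot)$); symmetrically $f'(w) x''' \ne g'(w) y'''$ reduces to Equation~\eqref{eq:l12-r}; and $f'(w) = g'(w) \iff w = \mv$ is exactly Equation~\eqref{eq:l-r-u-v} with $u = f(w)$, $v = g(w)$, since $f(w) g(w) = \mv$ forces $w = \mv$ (both $f$ and $g$ being morphisms, $f(w) = \mv$ for nonempty $w$ would itself be a degenerate case one can dispose of separately, or one first arranges $f, g$ to be nonerasing). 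Since the construction is plainly effective and the alphabet $A$ is unchanged, this gives the claimed reduction; the case $e = {\star}$ and the case $e = {+}$ are handled by the identical construction.

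The main obstacle I anticipate is getting the placement of the $\one\two$ markers in the four border words exactly right so that all three Claus-like conditions hold simultaneously \emph{and} the solution sets are preserved — the four equations \eqref{eq:l12-12r-u-v}--\eqref{eq:l-r-u-v} are tailored to specific configurations ($\lambda(u)\one\two$ vs.\ $\one\two\rho(v)$, $\lambda(u)$ vs.\ $\one\two\rho(v)$, $\lambda(u)\one\two$ vs.\ $\rho(v)$, $\lambda(u)$ vs.\ $\rho(v)$), so the borders must be chosen so that, after cancelling the common part of a hypothetical equality, precisely one of these four patterns appears in each of the three forbidden situations and the full-match situation. A secondary subtlety is the possibility that $f$ or $g$ erases some letters; I would handle this either by a preliminary normalization making the morphisms nonerasing or by absorbing it into the verification of the third condition. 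Once the bookkeeping of prefixes and suffixes is done carefully, each check is a one-line application of the corresponding equation.
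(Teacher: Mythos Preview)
Your proposal is correct and follows essentially the same route as the paper: a preliminary normalization to ensure $f(w)g(w)=\mv\iff w=\mv$, then replace $(f,g)$ by $(\lambda\circ f,\rho\circ g)$ and adjust the border words, verifying the three Claus-like conditions and the preservation of solutions via Equations~\eqref{eq:l12-12r-u-v}--\eqref{eq:l-r-u-v} exactly as you describe. The one place your sketch is slightly off is the border words: they are not merely ``$\one\two$-decorated'' but must themselves pass through $\lambda$ and $\rho$ --- the paper takes $\tilde x=\lambda(x)$, $\tilde x'=\lambda(x')\one\two$, $\tilde y=\one\two\rho(y)$, $\tilde y'=\rho(y')$, so that $\tilde x\tilde f(w)\tilde x'=\lambda(xf(w)x')\one\two$ and $\tilde y\tilde g(w)\tilde y'=\one\two\rho(yg(w)y')$ match the patterns in the four equations on the nose.
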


\begin{proof}
Let $\cali = (A, f, g, x, x', y, y')$ be an instance of $\pbgpcp(k)$.

First, let
$$
\bar A = \left\{ a \in A: f(a) g(a) \ne \mv \right\} \, ,
$$ 
let $\bar f$ be the restriction of $f$ to $\bar A^\star$, 
let $\bar g$ be the restriction of $g$ to $\bar A^\star$, 
and let 
$$
\bar \cali = (\bar A, \bar f, \bar g, x, x', y, y') \, .
$$
It is clear that $\bar \cali$ is an instance $\pbgpcp(k)$ and  that $\bar \cali$ is computable from $\cali$.
Moreover, if $A \ne \bar A$ then 
$\cali$ is a yes-instance of $\pbgpcp_e$ if, and only if, at least one the following two holds true: 
$xx' = yy'$ or $\bar \cali$ is a yes-instance of $\pbgpcp_e$.
Replacing $\cali$ with $\bar \cali$ if needed, 
we may assume that $A = \bar A$, or equivalently, that 
\begin{equation} \label{eq:fw-gw-w}
f(w) g(w) = \mv \iff w = \mv 
\end{equation}
for every $w \in A^\star$.

Now, put 
 \begin{align*}
 \tilde x & = \lambda(x) \,, &   \tilde f & = \lambda \circ f \,,  & \tilde x' & = \lambda(x')\one \two \,,  \\
 \tilde y & = \one \two \rho(y) \,,  & \tilde g & = \rho \circ g \,, &  \tilde y' & =  \rho(y')  \,,
 \end{align*}
and 
$$
\tilde \cali =
(A, \tilde f,  \tilde g, \tilde x, \tilde x', \tilde y, \tilde y') \, .
$$
It is clear that $\tilde \cali$ is an instance of $\pbgpcp(k)$ and that $\tilde \cali$ is computable from $\cali$.
Moreover, let $w \in A^\star$.
By letting $u = x f(w) x'$ and $v = y g(w) y'$ in Equation~\eqref{eq:l12-12r-u-v}, we get 
$$
\tilde x \tilde f(w)  \tilde x'
= 
\tilde y \tilde g(w)  \tilde y'
\iff 
x f(w) x' = y g(w) y' \, .
$$
Therefore, $\cali$ is a yes-instance of $\pbgpcp_e$ if, and only if, $\tilde \cali$ is a yes-instance of $\pbgpcp_e$.
It remains to prove that $\tilde \cali$ is a Claus-like instance of $\pbgpcp$.
By letting $u = x f(w)$ and $v = y g(w)$ in Equation~\eqref{eq:l-12r}, we get 
$$\tilde x \tilde f(w)  \ne  \tilde y \tilde g(w) \, .$$
By letting $u =  f(w)x'$ and $v =  g(w)y'$ in Equation~\eqref{eq:l12-r}, we get 
$$ \tilde f(w) \tilde x' \ne  \tilde g(w)\tilde y \, .$$
By letting $u =  f(w)$ and $v =  g(w)$ in Equation~\eqref{eq:l-r-u-v}, we get 
$$\tilde f(w) = \tilde g(w) \iff f(w) g(w) = \mv \, .$$
Finally, combining the latter equivalence with Equation~\eqref{eq:fw-gw-w} yields 
$$\tilde f(w) = \tilde g(w) \iff w = \mv \, . \qedhere $$
\end{proof}

\begin{proof}[Proof of Theorem~\ref{th:G-R}]
Let us show that $\pbgpcp(k)$ reduces to $\pbR^3(k + 1)$ for any $k$.
By Lemma~\ref{lem:gpcp-claus}, 
we only need to reduce Claus-like instances of $\pbgpcp_\star(k)$.
Let the notation be as in the proof of Theorem~\ref{th:G-Z}.
Without loss of generality, we assume $\Psi(\mv, \mv) = I_3$.
Combining the latter assumption and Equation~\eqref{eq:X+-A+}, we get
\begin{equation} \label{eq:Xstar-Astar}
 \calx^\star = \left\{ X(w) : w \in A^\star \right\} \, .
\end{equation}
To prove the theorem, 
it suffices to check that the following four assertions are equivalent in the case where $\cali$ is a Claus-like instance of $\pbgpcp$:
\begin{enumerate}
 \item $\cali$ is a yes-instance of $\pbgpcp_\star$.
  \item There exists $w \in A^\star$ such that $L X(w) C = 0$.
 \item $0 \in L \calx^\star  C$.
 \item $\calx \cup \{ C L \}$ is a yes-instance of $\pbR^3$. 
\end{enumerate}

The first two considered assertions are equivalent because Equation~\eqref{eq:LXC-xfx-ygy} holds for every $w \in A^\star$.
The second and the third considered assertions are equivalent by Equation~\eqref{eq:Xstar-Astar}.
Let us now show that the last two considered assertions are equivalent.
Let $w \in A^\star$.
Clearly, 
\begin{itemize}
 \item the $(1, 3)$th entry of $\Psi(xf(w), y g(w))$ equals $L X(w)E_3$, 
 \item the $(1, 3)$th entry of $\Psi(f(w) x', g(w) y')$ equals $E_1^\rmt X(w) C$, and 
 \item the $(1, 3)$th entry of $\Psi(f(w), g(w))$  equals $E_1^\rmt X(w) E_3$.
\end{itemize}
As $\cali$ is a Claus-like instance of $\pbgpcp$, 
it follows 
that both $L X(w) E_3$ and $E_1^\rmt X(w)  C$ are non-zero and 
that $E_1^\rmt X(w) E_3 = 0$ is equivalent to $w = \mv$.
Combining the latter facts with Equations~\eqref{eq:Xstar-Astar} and~\eqref{eq:X+-A+}, 
we obtain that $0$ is not in 
$L \calx^\star E_3$, 
$E_1^\rmt \calx^ \star C$, or 
$E_1^\rmt \calx^+ E_3$.
Besides, remark that 
$$
E_1^\rmt \left( \calx \cup \{ CL \}  \right)^+ E_3 
= 
\left( E_1^\rmt \calx^+ E_3 \right)
\cup 
\left( E_1^\rmt \calx^\star C \right)
\left( L \calx^ \star C \right)^ \star 
\left( L  \calx^\star E_3 \right) \, . 
$$
Hence, we have 
$$
0 \in E_1^\rmt \left( \calx \cup \{ CL \}  \right)^+ E_3
\iff 
0 \in L \calx^ \star C \, ,
$$
as desired.
\end{proof}

\section{Trading dimension for matrices} \label{sec:high-dim}

In this section, 
we prove that  $\pbM^{15}(2)$, $\pbZ^5(3)$, and $\pbZ^9(2)$ are undecidable.
Let $d$, $h$, $k \in \bN \setminus \{ 0 \}$.

\begin{theorem} \label{th:M-M}
 $\pbM^d(hk +  1)$ reduces to  $\pbM^{kd}(h +  1)$.
\end{theorem}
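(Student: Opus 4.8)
The plan is to encode a set $\calx$ of $d$-by-$d$ matrices of cardinality at most $hk+1$ as a set of $kd$-by-$kd$ matrices of cardinality at most $h+1$. The natural device is to work with block matrices in $\mat{kd}{\bQ}$ viewed as $k$-by-$k$ arrays of $d$-by-$d$ blocks, and to use a "cyclic shift" matrix together with "reservoir" blocks so that a long product in the small dimension is simulated by a product in the big dimension that has to cycle through all $k$ block-positions before it can collapse to zero.

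**More concretely,** write $\calx = \{ X_0, X_1, \dots, X_m \}$ with $m \le hk$; pad with copies if necessary so that, after discarding a possible trivial case, we may assume $\calx = \{ X_0 \} \cup \{ X_{i,j} : i \in \seg{0}{k-1}, j \in \seg{1}{h} \}$, i.e. the non-$X_0$ generators are indexed by a pair (block-slot $i$, menu-choice $j$). For $j \in \seg{1}{h}$ define a block matrix $\Gamma_j \in \mat{kd}{\bQ}$ whose $(i, i+1 \bmod k)$ block equals $X_{i,j}$ for each $i$ and all of whose other blocks are zero; and let $\Delta \in \mat{kd}{\bQ}$ be the block matrix whose only nonzero block is $X_0$ sitting in position $(0,1 \bmod k)$ (plus, if needed, one extra "pure shift" generator carrying identity blocks down the cyclic superdiagonal so that $X_0$ can be reached). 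Then the new instance is $\{ \Gamma_1, \dots, \Gamma_h \}$ together with this one shift-type matrix — cardinality $h+1$. Each $\Gamma_j$ or $\Delta$ advances the block-column index by one modulo $k$, so any product of these generators is supported on a single cyclic diagonal, and its $(0,0)$ block (after a multiple-of-$k$ length product) is exactly a product $X_{0, *} X_{1, *} \cdots X_{k-1, *} X_{0, *} \cdots$ of small matrices in which the factor taken from block-slot $i$ is always an element of $\calx$ — hence ranges over all products of matrices from $\calx$ as the menu choices vary. One checks that the $kd$-by-$kd$ zero matrix is in the big semigroup iff the $d$-by-$d$ zero matrix is in $\calx^+$, by reading off the appropriate block and using the zero-product structure of block-diagonal-shift matrices.

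**The key steps, in order,** are: (1) set up the block-cyclic encoding and the generators $\Gamma_1, \dots, \Gamma_h, \Delta$ carefully, treating the edge cases ($m$ not exactly $hk$, whether the extra shift generator is genuinely needed, what to do when $\calx$ itself already contains the zero matrix); (2) prove the structural lemma that every product of the new generators is supported on one cyclic block-diagonal and compute its surviving block as an ordered product of small-dimensional factors, each factor lying in $\calx$; (3) deduce that the big zero matrix is reachable iff some product of $k \cdot (\text{positive integer})$ small factors from $\calx$ vanishes, which is equivalent to $O_{d,d} \in \calx^+$ since if a product of matrices from $\calx^+$ is zero then so is a suitable power-padded version of any given zero product; (4) note computability of the reduction — all block matrices are built explicitly from the entries of $\calx$.

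**The main obstacle** I expect is step (3) together with getting the bookkeeping of step (1) exactly right so the cardinality is $h+1$ and not $h+2$: one must make sure that the single "shift" generator $\Delta$ can both carry the distinguished matrix $X_0$ and, if necessary, act as a pure rotation, so that products of arbitrary length (not just multiples of $k$, and not just those that use every menu slot) are faithfully represented; and one must verify that a product of matrices from $\calx^+$ equals $O_{d,d}$ if and only if it can be exhibited as a product whose length is a multiple of $k$ with the $i$-th factor drawn from slot $i$ — this requires a small padding argument (multiply an existing zero product on the right by further elements of $\calx$, using that $O \cdot X = O$) to align lengths with multiples of $k$. The rest is routine block-matrix multiplication.
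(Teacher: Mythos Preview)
Your cyclic-superdiagonal encoding does not work, and the gap is exactly the ``main obstacle'' you flag in step~(3) --- but it is not resolved by padding.

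First, the matrix $\Delta$ as you literally describe it (``only nonzero block is $X_0$ sitting in position $(0,1\bmod k)$'') is nilpotent: for $k\ge 2$ one has $\Delta^2=O_{kd}$, so every instance maps to a yes-instance and the reduction is vacuous. Your parenthetical fix (adding a separate pure-shift generator with identity blocks) costs an extra generator and lands you at $h+2$, not $h+1$. If instead you merge the two --- take $\Delta$ to be the cyclic shift carrying $X_0$ in one slot and $I_d$ in the others, or carrying $X_0$ in every slot --- then each $\Gamma_j$ and $\Delta$ is supported on the cyclic superdiagonal, and the $(r,\cdot)$-block of any product $G_1\cdots G_N$ is a $d\times d$ product whose $p$-th factor is forced to come from ``slot $r+p-1\bmod k$''. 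This slot constraint is the real obstruction, and right-padding by elements of $\calx$ does nothing to remove it: padding adjusts the \emph{length} of a product, not which slot each factor is drawn from. Concretely, take $d=2$, $h=1$, $k=2$, $\calx=\{X_0,A,B\}$ with
\[
A=\begin{pmatrix}0&1\\0&0\end{pmatrix},\qquad
B=\begin{pmatrix}1&0\\1&1\end{pmatrix},\qquad
X_0=\begin{pmatrix}1&0\\1&2\end{pmatrix},
\]
and put $A$ in slot $0$, $B$ in slot $1$, $X_0$ on the full superdiagonal of $\Delta$. Then $A^2=O_2\in\calx^+$, yet every admissible block-product alternates between $\{A,X_0\}$ and $\{B,X_0\}$; since any nonempty product of the lower-triangular positive matrices $B,X_0$ has strictly positive $(2,1)$-entry, one never obtains $AWA=0$, and the $(0,\cdot)$-block is never zero. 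So $O_4\notin\caly^+$ and the reduction fails on this instance. You also need \emph{all} $k$ diagonal blocks to vanish simultaneously for the big product to be zero --- not ``some product'' as your step~(3) says --- and nothing in your outline addresses that.

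The paper avoids the slot constraint by a genuinely different construction: the shift $V$ (with $U$ in the wrap-around corner and $I_d$ elsewhere) is paired not with more cyclic-shift matrices but with $h$ rank-$d$ matrices $\Gamma Y_i$, where $\Gamma=\left(\begin{smallmatrix}I_d\\O\end{smallmatrix}\right)$ and $Y_i=(X_{i,1}\ \cdots\ X_{i,k})$. Because $\Gamma Y_i$ has only its first block-row nonzero, products telescope: $Y_i V^n\Gamma$ already equals a single $d\times d$ matrix of the form $X_{i,j}U^q$, and one gets the clean identity $\caly'\caly^\star\Gamma=\calx'\calx^\star$, i.e.\ \emph{every} product in $\calx^+$ (up to a harmless left factor) is represented. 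Mortality on both sides then follows from two one-line inclusions. Your superdiagonal $\Gamma_j$'s never collapse in this way, which is why the slot constraint persists.
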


\begin{proof}
Let $\calx$ be an instance of  $\pbM^d(hk +  1)$. 
 Write $\calx$ in the form
 $$
 \calx = \{ U \} \cup  \left\{ X_{i, j} : (i, j) \in \seg{1}{h} \times \seg{1}{k} \right\}  \, .
 $$
  Put 
\begin{align*}
 V & = \begin{pmatrix} O & U \\ I_{kd - d} & O \end{pmatrix} \,, &  
 \Gamma & = \begin{pmatrix} I_d \\ O_{kd - d, d}  \end{pmatrix}  \,, &
 Y_i & = \begin{pmatrix} X_{i, 1} &  X_{i, 2} &  X_{i, 3} & \cdots & X_{i, k} \end{pmatrix}  
\end{align*}
for every $i \in \seg{1}{h}$, and
$$
\caly = \{ V \} \cup \left\{ \Gamma Y_i   : i \in \seg{1}{h} \right\}\, .
$$
Clearly, $\caly$ is an instance of $\pbM^{kd}(h + 1)$ and $\caly$ is computable from $\calx$.

 \begin{lemma} \label{lem:PVY-UX}
For every $X_1$, $X_2$, $X_3$, \ldots, $X_k \in \mat{d}{\bQ}$, equality  
 $$
\begin{pmatrix} X_1 & X_2 &  X_3 & \cdots &  X_k  \end{pmatrix} V^\star \Gamma
 = 
\left\{ X_1, X_2, X_3, \dotsc,  X_k \right\}   U^\star
 $$
 holds true.
 \end{lemma}

\begin{proof}
The idea of the proof is simply to compute 
$$
 \begin{pmatrix}  X_1 &  X_2 &  X_3 & \cdots &  X_k  \end{pmatrix} V^n \Gamma
$$
for every $n \in \bN$.
Let us extend the $k$-tuple $\left( X_j \right)_{j \in \seg{1}{k}}$ 
into an infinite sequence $\left( X_j \right)_{j \in \bN \setminus \{ 0 \}}$ of elements of $\mat{d}{\bQ}$ by means of the recurrence formula:
$$X_{j + k} =  X_j U $$
for every $j \in \bN \setminus \{ 0 \}$.
Let $n \in \bN$ and let $j \in \seg{1}{k}$.
Straightforward inductions on $n$ yield
$$
\begin{pmatrix} X_1 & X_2 &  X_3 & \cdots &  X_k  \end{pmatrix} V^n 
= 
\begin{pmatrix} 
X_{n + 1} & X_{n + 2} & X_{n + 3} & \cdots & X_{n + k} 
\end{pmatrix}
$$
and 
$$
X_{j +  k n} =  X_j U^ n \, .
$$
It follows 
$$
\begin{pmatrix} X_1 & X_2 &  X_3 & \cdots &  X_k  \end{pmatrix}  V^{k n + j - 1} \Gamma = X_{j + kn} = X_j U^ n  \,,
$$
which proves the lemma.
\end{proof}

Put 
 $$
\calx' = \left\{ X_{i, j} : (i, j) \in \seg{1}{h} \times \seg{1}{k} \right\} 
$$
and
$$
\caly' = \left\{ Y_i : i \in \seg{1}{h} \right\} \, .
 $$
 
\begin{lemma} \label{lem:PYY-XX}
Equality $\caly' \caly^\star \Gamma = \calx' \calx^ \star$ holds true.
\end{lemma}

\begin{proof}
Lemma~\ref{lem:PVY-UX} ensures
$$
Y_i V^ \star \Gamma
= 
 \left\{ X_{i, 1}, X_{i, 2}, X_{i, 3}, \dotsc,  X_{i, k} \right\} U^ \star 
$$
for every $i \in \seg{1}{h}$, and thus we have 
 \begin{equation} \label{eq:PVY-UX}
\caly' V^ \star \Gamma
= 
\calx'  U^ \star \,. 
\end{equation}
Besides, 
equalities $\calx = \{ U \} \cup \calx' $ and $\caly =  \{ V \} \cup \Gamma \caly'$ yield 
$$
\left(   \calx' U^ \star \right)^+ = \calx' \calx^ \star 
$$
and 
$$
 \left( \caly' V^\star  \Gamma \right)^+ = \caly'  \caly^\star \Gamma \,,
$$
respectively.
Combining the last  two equalities with Equation~\eqref{eq:PVY-UX}, we obtain
$$
\caly' \caly^\star \Gamma  = \left( \caly' V^\star  \Gamma \right)^+ = \left(   \calx'  U^ \star\right)^+ =  \calx' \calx^ \star  \, ,
$$
as desired. 
\end{proof}

Let us now complete the proof of the theorem.
Combining Lemma~\ref{lem:PYY-XX} with inclusions  $\calx' \subseteq \calx$ and $\Gamma \caly' \subseteq \caly$, 
we get
$$
 \caly' \caly^\star \Gamma \subseteq \calx^+
$$
and 
$$
\Gamma \calx'   \calx^\star \caly'   \subseteq \caly^+ \, .
$$
It follows from the former inclusion that 
$O_{kd, kd} \in \caly^ +$ implies $O_{d, d} \in \calx^+$;
the converse follows from the latter inclusion.
Hence,  $\calx$ is a yes-instance of  $\pbM^d$  if, and only if,  $\caly$ is a yes-instance of  $\pbM^{kd}$.
\end{proof}

Since $\pbM^3(\kgp + 1)$ is undecidable (see Section~\ref{sec:three-three}),
it follows from Theorem~\ref{th:M-M} that $\pbM^{3\kgp}(2)$ is undecidable.
As $\kgp \le 5$ \cite{HalavaHH07}, $\pbM^{15}(2)$ is undecidable.

\begin{theorem} \label{th:Z-Z}
 $\pbZ^d(hk + 1)$ reduces to $\pbZ^{kd}(h + 1)$.  
\end{theorem}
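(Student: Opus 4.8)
The plan is to adapt the proof of Theorem~\ref{th:M-M}, keeping its dimension blow-up intact and only choosing the right vectors to read against. By Lemma~\ref{lem:S-Star} it suffices to reduce non-degenerated instances, so fix an instance $(L, C, \calx)$ of $\pbZ^d(hk + 1)$ with $LC \ne 0$. Write $\calx = \{ U \} \cup \left\{ X_{i, j} : (i, j) \in \seg{1}{h} \times \seg{1}{k} \right\}$ (with repetitions among the $X_{i, j}$ allowed), and let $V \in \mat{kd}{\bQ}$, $\Gamma \in \bQ^{kd \times d}$, $\left( Y_i \right)_{i \in \seg{1}{h}}$, and $\caly = \{ V \} \cup \left\{ \Gamma Y_i : i \in \seg{1}{h} \right\}$ be exactly as in the proof of Theorem~\ref{th:M-M}, so that $\caly$ has at most $h + 1$ elements. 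For the target instance I would take the column vector $\Gamma C \in \bQ^{kd \times 1}$ together with the row vector
$$
\Lambda = \begin{pmatrix} L & L & \cdots & L \end{pmatrix} \in \bQ^{1 \times kd}
$$
obtained by juxtaposing $k$ copies of $L$; then $(\Lambda, \Gamma C, \caly)$ is an instance of $\pbZ^{kd}(h + 1)$ computable from $(L, C, \calx)$, and the goal is to show that the two instances have the same answer.

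The reason for this \emph{smeared} choice of $\Lambda$, as opposed to the obvious $\begin{pmatrix} L & O & \cdots & O \end{pmatrix}$, is that the latter makes $\Lambda V^n \Gamma$ vanish whenever $n$ is not a multiple of $k$, which would create spurious zeros already from products that are powers of $V$. With the chosen $\Lambda$, Lemma~\ref{lem:PVY-UX} gives $\Lambda V^n \Gamma = L U^{\lfloor n / k \rfloor}$ for every $n \in \bN$, and this is never the zero row, since $L \ne 0$ by non-degeneracy. The heart of the proof is then the following analogue of Lemma~\ref{lem:PYY-XX}:
$$
\Lambda \caly^\star \Gamma = L \calx^\star \, .
$$
I would prove it just as Lemma~\ref{lem:PYY-XX}. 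Since $\Lambda' (\Gamma Y_i) = (\Lambda' \Gamma) Y_i$ depends only on the leftmost $d$ entries of a row vector $\Lambda'$, every word of $\caly^\star$ may be grouped as $V^{a_0} (\Gamma Y_{i_1}) V^{a_1} \cdots (\Gamma Y_{i_n}) V^{a_n}$; evaluating $\Lambda V^{a_0} (\Gamma Y_{i_1}) \cdots (\Gamma Y_{i_n}) V^{a_n} \Gamma$ by using $\Lambda V^{a} \Gamma = L U^{\lfloor a / k \rfloor}$ once and then $(L' Y_i) V^{kq + r} \Gamma = L' X_{i, r + 1} U^q$ ($0 \le r < k$) repeatedly — both immediate from Lemma~\ref{lem:PVY-UX} — yields $L$ times the product $U^{b_0} X_{i_1, j_1} U^{b_1} \cdots X_{i_n, j_n} U^{b_n}$, the residues of the $a_\ell$ modulo $k$ fixing the column indices $j_\ell$. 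These products, together with $U^\star$ coming from the case $n = 0$, exhaust $\calx^\star$, so the displayed identity holds; and reading the same computation with $\Gamma C$ in place of the trailing $\Gamma$ shows in addition that every $L Z C$ with $Z \in \calx^+$ equals $\Lambda P (\Gamma C)$ for some $P \in \caly^+$.

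Granting this, the reduction closes at once:
$$
L \calx^+ C \subseteq \Lambda \caly^+ (\Gamma C) \subseteq \Lambda \caly^\star (\Gamma C) = \left( \Lambda \caly^\star \Gamma \right) C = L \calx^\star C = \{ LC \} \cup L \calx^+ C \, ;
$$
since $LC \ne 0$, the set $\Lambda \caly^+ (\Gamma C)$ contains $0$ if, and only if, $L \calx^+ C$ does, i.e. $(\Lambda, \Gamma C, \caly)$ is a yes-instance of $\pbZ^{kd}$ exactly when $(L, C, \calx)$ is a yes-instance of $\pbZ^d$.

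The main obstacle is exactly the block computation behind $\Lambda \caly^\star \Gamma = L \calx^\star$: it amounts to re-deriving Lemma~\ref{lem:PYY-XX} with a row vector on the left, the one genuinely new feature being the modulo-$k$ bookkeeping that converts the cyclic-shift structure of $V$ into a choice of which block $X_{i, \cdot}$ is read. The remaining points — the trivial case $k = 1$, where the construction is the identity reduction, and the legitimacy of writing $\calx = \{ U \} \cup \{ X_{i, j} \}$ with repetitions — are routine.
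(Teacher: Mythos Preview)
Your construction is exactly the paper's (your row vector $\Lambda$ is the paper's $L\Lambda$, where the paper sets $\Lambda = \begin{pmatrix} I_d & I_d & \cdots & I_d \end{pmatrix}$), and your key identity $\Lambda\,\caly^\star\,\Gamma = L\,\calx^\star$ is precisely what the paper obtains as $\Lambda\,\caly^\star\,\Gamma = \calx^\star$ via the decompositions $\caly^\star = V^\star \cup V^\star \Gamma \caly' \caly^\star$, $\calx^\star = U^\star \cup U^\star \calx' \calx^\star$ together with Lemma~\ref{lem:PYY-XX}; so the approaches coincide, the paper merely reusing Lemma~\ref{lem:PYY-XX} rather than recomputing it inline. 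One harmless slip in your motivation: the claim that $L\,U^{\lfloor n/k\rfloor}$ is ``never the zero row, since $L \ne 0$'' is false in general (e.g.\ $U$ nilpotent), but you never actually use it, and the argument stands.
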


\begin{proof}
By Lemma~\ref{lem:S-Star}, 
we only need to reduce non-degenerated instances of~$\pbZ^d(hk + 1)$.
 
 Let $(L, C, \calx)$ be a non-degenerated instance of~$\pbZ^d(hk + 1)$.
Let  $\Gamma$, $U$, $V$, $\caly$, $\calx'$, and $\caly'$  be as in the proof of Theorem~\ref{th:M-M};
additionally, let $\Lambda \in \bQ^{d \times k d}$ be given by:  
$$
\Lambda =  \begin{pmatrix} I_d & I_d & I_d & \cdots & I_d \end{pmatrix} \, .
$$
Put $\cali = (L \Lambda, \Gamma C, \caly)$.
Clearly, 
$\cali$ is an instance of $\pbZ^{kd}(h + 1)$ 
and 
$\cali$ is computable from $(L, C, \calx)$.
To complete the proof of the theorem, 
it suffices to check that 
$(L, C, \calx)$ is a yes-instance of  $\pbZ^d$
if, and only if, 
$\cali$ is a yes-instance of  $\pbZ^{kd}$.

Equalities $\calx = \{ U \} \cup \calx'$ and $\caly = \{ V \} \cup  \Gamma \caly'$ yield
$$
 \calx^ \star = U^ \star \cup  U^\star \calx' \calx^ \star 
 $$
 and 
$$
  \caly^\star = V^\star \cup V^ \star \Gamma \caly' \caly^\star \, ,
 $$ 
respectively.
Moreover, Lemma~\ref{lem:PVY-UX} ensures
$$
\Lambda V^ \star \Gamma  = U^ \star \, .
$$
Hence, using also Lemma~\ref{lem:PYY-XX}, we get
\begin{align*}
\Lambda \caly^\star \Gamma 
& = \Lambda \left(  V^\star \cup V^ \star \Gamma \caly' \caly^\star \right) \Gamma \\
& = (\Lambda V^\star \Gamma) \cup (\Lambda V^ \star \Gamma)( \caly' \caly^\star \Gamma) \\
& = U^ \star \cup U^\star \calx' \calx^ \star   \\
& = \calx^ \star \,,   
\end{align*}
and then
$$
L \Lambda  \caly^\star \Gamma C  =  L \calx^ \star C   \,.
$$
Since $L \Lambda \Gamma C = LC  \ne 0$,  
we obtain
$$
0 \in L\Lambda \caly^+ \Gamma C 
\iff
0 \in L \calx^+ C \,, 
$$
as desired.
\end{proof}


\begin{lemma} \label{lem:L-stab}
Let $\call$ be a non-zero linear subspace of~$\bQ^{1 \times d}$.
Let $\ell$ denote the dimension of~$\call$.
The restriction of $\pbZ^d(k)$ to those instances $(L, C, \calx)$ for which $L \calx^\star \subseteq \call$ 
reduces to~$\pbZ^\ell(k)$.
\end{lemma}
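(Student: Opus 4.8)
The plan is to reduce to $\pbZ^m(k)$, where $m$ is the dimension of the smallest subspace of $\bQ^{1 \times d}$ that contains $L$ and is stable under right multiplication by the matrices of $\calx$, and then to pad the dimension up from $m$ to $\ell$.

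First I would fix an instance $(L, C, \calx)$ of the restricted problem (so $L \calx^\star \subseteq \call$ and $\calx$ has at most $k$ elements) and set $W = \mathrm{span}\left( L \calx^\star \right) \subseteq \bQ^{1 \times d}$. Since $I_d \in \calx^\star$ we have $L \in W$, and since $\calx^\star X \subseteq \calx^+ \subseteq \calx^\star$ for $X \in \calx$ we get $L \calx^\star X \subseteq L \calx^\star$, hence $W X \subseteq W$; thus $W$ is invariant under right multiplication by every matrix in $\calx^+$. Moreover $W \subseteq \call$, so $m := \dim W \le \ell$. The subspace $W$ is computable from $(L, C, \calx)$ by the usual saturation: the chain $W_0 = \mathrm{span}\{ L \}$, $W_{t + 1} = W_t + \sum_{X \in \calx} W_t X$ is nondecreasing, stabilises after at most $d$ steps, and its limit is $W$; so a basis $\beta_1, \dotsc, \beta_m$ of $W$ is computable as well. (If $L = 0$ then $W = \{ 0 \}$ and $(L, C, \calx)$ is a yes-instance exactly when $\calx \ne \emptyset$, a decidable condition, so in that case I would output a fixed yes- or no-instance of $\pbZ^\ell(k)$, which exists since $\ell \ge 1$. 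From now on assume $m \ge 1$.)

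Next I would build the coordinate morphism. For every $Y \in \calx^\star$, invariance of $W$ gives a unique $\widehat{Y} \in \mat{m}{\bQ}$ with $\beta_i Y = \sum_{j = 1}^m (\widehat{Y})_{i, j}\, \beta_j$ for all $i \in \seg{1}{m}$; a direct computation gives $\widehat{Y Y'} = \widehat{Y}\, \widehat{Y'}$, so $Y \mapsto \widehat{Y}$ is a morphism and, putting $\widehat{\calx} = \{ \widehat{X} : X \in \calx \}$, we have $\widehat{\calx}^+ = \{ \widehat{Y} : Y \in \calx^+ \}$ and $\widehat{\calx}$ has cardinality at most $k$. Writing $L = \sum_i \mu_i \beta_i$ and setting $\widehat{L} = \begin{pmatrix} \mu_1 & \cdots & \mu_m \end{pmatrix} \in \bQ^{1 \times m}$ and $\widehat{C} = \begin{pmatrix} \beta_1 C & \cdots & \beta_m C \end{pmatrix}^{\rmt} \in \bQ^{m \times 1}$, one checks (expanding $L Y = \sum_j (\widehat{L}\,\widehat{Y})_j \beta_j$ and multiplying by $C$ on the right) that $L Y C = \widehat{L}\, \widehat{Y}\, \widehat{C}$ for every $Y \in \calx^+$. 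All of $\widehat{L}$, $\widehat{C}$, $\widehat{\calx}$ are computable from $(L, C, \calx)$.

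It then follows that $\{ L Y C : Y \in \calx^+ \} = \{ \widehat{L}\, \widehat{Y}\, \widehat{C} : \widehat{Y} \in \widehat{\calx}^+ \}$, so $(L, C, \calx)$ is a yes-instance of $\pbZ^d$ iff $(\widehat{L}, \widehat{C}, \widehat{\calx})$ is a yes-instance of $\pbZ^m$; this is the desired reduction to $\pbZ^m(k)$. Finally, since $m \le \ell$, composing it with the reduction of $\pbZ^m(k)$ to $\pbZ^\ell(k)$ obtained by iterating $\ell - m$ times the fact (from Section~\ref{sec:general}) that $\pbZ^a(k)$ reduces to $\pbZ^{a + 1}(k)$ yields a reduction to $\pbZ^\ell(k)$. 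I expect the only real subtlety to be the one already highlighted: one must not use $\call$ directly (it need not be $\calx$-stable) but the genuinely stable subspace $W = \mathrm{span}(L \calx^\star)$, which then forces the extra padding step from dimension $m$ to dimension $\ell$ (and the separate, trivial handling of $L = 0$); the coordinate bookkeeping itself is routine.
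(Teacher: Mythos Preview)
Your argument is correct, but it takes a different route from the paper. You pass to the genuinely $\calx$-invariant subspace $W = \mathrm{span}(L\calx^\star)$, reduce to $\pbZ^m(k)$ with $m = \dim W$, and then pad up to dimension~$\ell$. The paper instead works directly with~$\call$: it fixes $P \in \bQ^{\ell \times d}$ whose rows form a basis of~$\call$ together with a right inverse $P' \in \bQ^{d \times \ell}$ (so $PP' = I_\ell$, hence $KP'P = K$ for every $K \in \call$), and sends $(L,C,\calx)$ to $(LP',\, PC,\, P\calx P')$. Even though $\call$ need not be $\calx$-stable, the hypothesis $L\calx^\star \subseteq \call$ alone is enough: since every row vector in $L\calx^n$ lies in~$\call$, one has $L\calx^n P'P = L\calx^n$, and a one-line induction gives $LP'(P\calx P')^n PC = L\calx^n C$. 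So your closing remark that ``one must not use $\call$ directly'' is precisely what the paper's proof refutes.

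The trade-offs: the paper's reduction is uniform in the instance ($P$ and $P'$ depend only on~$\call$), lands directly in dimension~$\ell$, and needs no padding or separate treatment of $L = 0$. Your approach is the conceptually natural ``restrict to the invariant subspace'' argument and even gives the slightly sharper observation that each individual instance already reduces to $\pbZ^m(k)$ for some $m \le \ell$, at the cost of an instance-dependent computation of~$W$ and the extra padding and edge-case steps.
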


\begin{proof}
First, let us check that there exist
$P \in \bQ^{\ell \times d}$ 
and
$P' \in \bQ^{d \times \ell}$ such that $L P' P = L$ for every $L \in \call$.
Let $P \in \bQ^{\ell \times d}$ be such that the rows of $P$ form a basis of $\call$.
Since the row rank of $P$ is full, 
there exists $P' \in \bQ^{d \times \ell}$ such that  $P P'  = I_\ell$ \cite{Stewart1}.
Hence, we have 
$$\call = \left\{ K P : K \in \bQ^{1 \times \ell} \right\}$$
and 
$K PP'P   = K P$ for every $K \in  \bQ^{1 \times \ell}$.
Therefore, $P'P$ satisfies the desired property.

We are now ready to prove that the considered restriction of $\pbZ^d(k)$ reduces to~$\pbZ^\ell(k)$.

 Let  $(L, C, \calx)$ be an instance of  $\pbZ^d(k)$ such that  $L \calx^\star \subseteq \call$.
 Put 
 $$\cali = (L P' , P C, P \calx  P' ) \, .$$
 Clearly,
 $\cali$ is an instance of $\pbZ^\ell(k)$ and  $\cali$ is computable from  $(L, C, \calx)$.
 Moreover, let $n \in \bN$.
Since  $L \calx^ n P' P  = L \calx^ n$, 
 a straightforward induction on $n$ yields 
 $$
  L P' \left( P \calx P' \right)^n  = L  \calx^ n P' \,,
 $$
 and thus 
 $$
 L P' \left( P \calx  P'  \right)^n P C = L  \calx^n C \, .
 $$
Therefore,  $(L, C, \calx)$ is a yes-instance of  $\pbZ^d$ if, and only if, $\cali$  is a yes-instance of  $\pbZ^\ell$.
\end{proof}


Define $\calm_d$ as the set of those $X \in \mat{d}{\bQ}$ that satisfy the following two equivalent conditions:
\begin{enumerate}
 \item 
 The leftmost column of $X$ equals  $\begin{pmatrix} 1 \\ O_{d - 1,1} \end{pmatrix}$. 
 \item 
 For every $K \in \bQ^{1 \times d}$, the leftmost entry of $K X$ equals the leftmost entry of $K$.
\end{enumerate}
Define $\mathring{\pbZ}^d(k)$ as the restriction of $\pbZ^d(k)$ to those instances $(L, C, \calx)$ for which $\calx \subseteq \calm_d$.

\begin{theorem} \label{th:G-Zo}
 $\mathring{\pbZ}^3(\kgp)$ is undecidable.
\end{theorem}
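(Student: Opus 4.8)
The plan is to adapt the reduction from $\pbgpcp$ used in Theorem~\ref{th:G-Z} so that the generated matrices all lie in $\calm_3$. Recall that in that proof the instance $\cali = (A, f, g, x, x', y, y')$ of $\pbgpcp(k)$ is sent to $(L, C, \calx)$ with $L = E_1^\rmt \Psi(x, y)$, $C = \Psi(x', y') E_3$, and $\calx = \{\Psi(f(a), g(a)) : a \in A\}$, where $\Psi$ is the morphism of Lemma~\ref{lem:psi-ZRUC}. Looking at the explicit form~\eqref{eq:def-psi}, the matrix $\Psi(u, v)$ is upper triangular with first column $\begin{pmatrix} 1 & 0 & 0 \end{pmatrix}^\rmt$, so in fact $\Psi(u, v) \in \calm_3$ already for all $u$, $v \in \zeon^\star$. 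Hence $\calx \subseteq \calm_3$ automatically, and the very same reduction witnesses that $\pbgpcp_+(k)$ reduces to $\mathring{\pbZ}^3(k)$.

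First I would recall the statement and proof of Theorem~\ref{th:G-Z}, noting that all matrices $X(w) = \Psi(f(w), g(w))$ produced there, being values of $\Psi$, belong to $\calm_3$. Next I would verify this membership against the defining condition of $\calm_d$: the leftmost column of $\Psi(u, v)$ in~\eqref{eq:def-psi} is indeed $\begin{pmatrix} 1 & 0 & 0 \end{pmatrix}^\rmt$. Therefore the instance $(L, C, \calx)$ constructed from $\cali$ satisfies $\calx \subseteq \calm_3$, so it is a legitimate instance of $\mathring{\pbZ}^3(k)$, and the chain of equivalences in the proof of Theorem~\ref{th:G-Z} shows that $\cali$ is a yes-instance of $\pbgpcp_+$ if and only if $(L, C, \calx)$ is a yes-instance of $\mathring{\pbZ}^3$. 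Finally, since $\pbgpcp(\kgp)$ is undecidable and $\kgp \le 5$, this gives the undecidability of $\mathring{\pbZ}^3(\kgp)$.

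There is essentially no obstacle here: the work is just the observation that the matrices $\Psi(u,v)$ already have the required shape, so Theorem~\ref{th:G-Z} specializes verbatim. The only point that deserves a sentence of care is confirming that $L$ and $C$ need not themselves lie in any special set — the restriction $\mathring{\pbZ}^3$ constrains only $\calx$, not $L$ or $C$ — so nothing further is needed. If one wanted an even cleaner statement one could also note, via the second characterization of $\calm_d$, that $E_1^\rmt X = E_1^\rmt$ for every $X \in \calx$, but this is not required for the reduction.

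\begin{proof}[Proof sketch]
We show $\pbgpcp_+(k)$ reduces to $\mathring{\pbZ}^3(k)$ for every $k$, which together with the undecidability of $\pbgpcp(\kgp)$ and the bound $\kgp \le 5$ \cite{HalavaHH07} yields the claim. Let $\Psi$ be as in Lemma~\ref{lem:psi-ZRUC}, given explicitly by~\eqref{eq:def-psi}. For every $u$, $v \in \zeon^\star$ the leftmost column of $\Psi(u, v)$ equals $\begin{pmatrix} 1 & 0 & 0 \end{pmatrix}^\rmt$, so $\Psi(u, v) \in \calm_3$. Given an instance $\cali = (A, f, g, x, x', y, y')$ of $\pbgpcp(k)$, form $(L, C, \calx)$ exactly as in the proof of Theorem~\ref{th:G-Z}: $L = E_1^\rmt \Psi(x, y)$, $C = \Psi(x', y') E_3$, and $\calx = \{ \Psi(f(a), g(a)) : a \in A \}$. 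Then $\calx \subseteq \calm_3$, so $(L, C, \calx)$ is an instance of $\mathring{\pbZ}^3(k)$, and it is computable from $\cali$ since $\Psi$ is computable. The equivalences established in the proof of Theorem~\ref{th:G-Z} — using~\eqref{eq:LXC-xfx-ygy} and~\eqref{eq:X+-A+} — show that $\cali$ is a yes-instance of $\pbgpcp_+$ if, and only if, $(L, C, \calx)$ is a yes-instance of $\pbZ^3$, hence of $\mathring{\pbZ}^3$.
\end{proof}
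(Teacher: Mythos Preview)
Your proof is correct and essentially identical to the paper's: both observe from the explicit formula~\eqref{eq:def-psi} that every $\Psi(u,v)$ has leftmost column $\begin{pmatrix} 1 & 0 & 0 \end{pmatrix}^\rmt$, hence lies in $\calm_3$, so the reduction of Theorem~\ref{th:G-Z} already lands in $\mathring{\pbZ}^3(k)$. (Note that the bound $\kgp \le 5$ is not needed here, since the theorem is stated directly in terms of $\kgp$ and $\pbgpcp(\kgp)$ is undecidable by definition.)
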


\begin{proof}
 Let us show that $\pbgpcp_+(k)$ reduces to $\mathring{\pbZ}^3(k)$ for any $k$.
 Let the notation be as in the proof of Theorem~\ref{th:G-Z}.
 By Equation~\eqref{eq:def-psi}, the range of $\Psi$ is a subset of $\calm_3$.
 It follows  that $X(w) \in \calm_3$ for every $w \in A^\star$, 
 and thus  $(L, C, \calx)$ is an instance $\mathring{\pbZ}^3(k)$.
\end{proof}

\begin{lemma} \label{lem:Zo-nondeg}
  $\mathring{\pbZ}^d(k)$ reduces to its restriction to non-degenerated instances.
\end{lemma}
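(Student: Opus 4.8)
The plan is to reuse, almost word for word, the proof of Lemma~\ref{lem:S-Star}. The key observation is that the reduction built there modifies only the column vector $C$ and leaves the matrix set $\calx$ untouched; consequently the hypothesis $\calx \subseteq \calm_d$ (and the cardinality bound) is automatically preserved, so that same reduction also witnesses $\mathring{\pbZ}^d(k) \le \mathring{\pbZ}^d(k)\restriction_{\text{non-deg.}}$. So I would start from an arbitrary instance $(L, C, \calx)$ of $\mathring{\pbZ}^d(k)$, hence with $\calx \subseteq \calm_d$, and perform the usual case split.

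In the first case, suppose $L X C = 0$ for some $X \in \calx$; then, since $X \in \calx^+$, the triple $(L, C, \calx)$ is a yes-instance of $\pbZ^d$, and the oracle machine may answer ``yes'' directly, without consulting the oracle. In the second case, suppose $L X C \ne 0$ for every $X \in \calx$; then for each $X \in \calx$ the triple $(L, X C, \calx)$ satisfies $L(XC) = LXC \ne 0$ and still has $\calx \subseteq \calm_d$ with the same cardinality, hence is a non-degenerated instance of $\mathring{\pbZ}^d(k)$, and it is plainly computable from $(L, C, \calx)$. The oracle machine queries the oracle on all $\lgr{\calx} \le k$ such triples and accepts iff at least one answer is ``yes''.

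The one remaining thing to check is the equivalence used in the second case: under the hypothesis $LXC \ne 0$ for all $X \in \calx$, the instance $(L, C, \calx)$ is a yes-instance of $\pbZ^d$ iff $(L, XC, \calx)$ is a yes-instance of $\pbZ^d$ for some $X \in \calx$. For the forward direction, a witness $Y \in \calx^+$ with $LYC = 0$ must have a factorization of length at least $2$ (a length-one witness is excluded by the case hypothesis), so writing $Y = Y'' X$ with $Y'' \in \calx^+$ and $X \in \calx$ gives $L Y'' (XC) = 0$; for the converse, a witness $Y'' \in \calx^+$ for $(L, XC, \calx)$ furnishes $Y'' X \in \calx^+$ with $L(Y'' X) C = 0$. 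I do not anticipate a real obstacle here; the only point requiring a moment's care is reading ``non-degenerated instance of $\mathring{\pbZ}^d(k)$'' as demanding \emph{both} $LC \ne 0$ and $\calx \subseteq \calm_d$, the latter condition surviving precisely because the construction never alters $\calx$.
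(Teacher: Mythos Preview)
Your proposal is correct and follows exactly the approach of the paper, which simply states that the proof is the same as that of Lemma~\ref{lem:S-Star}. Your added remark that the construction leaves $\calx$ untouched (so the constraint $\calx \subseteq \calm_d$ and the cardinality bound are preserved) is precisely the one observation needed to justify that the same argument works for $\mathring{\pbZ}^d(k)$.
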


\begin{proof}
The proof is the same as that of Lemma~\ref{lem:S-Star}.
\end{proof}

\begin{theorem} \label{th:Zo-Z}
$\mathring \pbZ^d(hk + 1)$ reduces to $ \pbZ^{1 + k(d - 1)}(h + 1)$.
\end{theorem}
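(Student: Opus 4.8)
The plan is to combine Theorem~\ref{th:Z-Z} with Lemma~\ref{lem:L-stab}, exploiting the special structure of $\calm_d$-instances to shave one dimension off the naive bound. Recall that Theorem~\ref{th:Z-Z} reduces $\pbZ^d(hk+1)$ to $\pbZ^{kd}(h+1)$ by replacing each of the $hk$ non-distinguished matrices with $h$ block matrices $\Gamma Y_i$ of size $kd$, plus the single shift matrix $V$, and by taking $L\Lambda$, $\Gamma C$ as the new vectors. The key observation is that if every $X_{i,j}$ lies in $\calm_d$ --- i.e.\ has leftmost column $E_1$ --- then the block matrices $Y_i = \begin{pmatrix} X_{i,1} & \cdots & X_{i,k}\end{pmatrix}$ and the shift matrix $V$ satisfy a strong invariant: there is a $k$-dimensional subspace of $\bQ^{1\times kd}$ (spanned by $E_1^\rmt$ of the first block and the $kd-d+1,\ldots$ coordinates, roughly the "first coordinate of each block together with the propagated information") that is invariant under right multiplication. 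More precisely, I expect the relevant stable subspace $\call$ of $\bQ^{1\times kd}$ to have dimension $1 + k(d-1)$: in each of the $k$ blocks we need the full $d$ coordinates, but the $k$ "leftmost-of-block" coordinates are not independent --- they are all slaved to the single leftmost coordinate of $L\Lambda \caly^\star$ via the $\calm$-condition, collapsing $k$ dimensions down to $1$, hence $kd - (k-1) = 1 + k(d-1)$.

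First I would set up the reduction exactly as in Theorem~\ref{th:Z-Z}: given a non-degenerated instance $(L,C,\calx)$ of $\mathring\pbZ^d(hk+1)$ with $\calx = \{U\}\cup\{X_{i,j}\}$ and all $X_{i,j}\in\calm_d$ (note $U$ need not be in $\calm_d$, but that is fine since $U$ is the distinguished matrix that gets absorbed into $V$), form $\Gamma$, $V$, $Y_i$, $\caly = \{V\}\cup\{\Gamma Y_i\}$, $\Lambda$, and $\cali = (L\Lambda, \Gamma C, \caly)$, an instance of $\pbZ^{kd}(h+1)$ equivalent to the original by Theorem~\ref{th:Z-Z}. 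Second, I would identify the linear subspace $\call \subseteq \bQ^{1\times kd}$ consisting of all row vectors whose "block-leftmost" entries (entries in positions $1, d+1, 2d+1, \ldots, (k-1)d+1$) are constrained to a single degree of freedom --- concretely, I would define $\call$ as the span of $L\Lambda$ together with the coordinate vectors in positions other than $d+1, 2d+1,\ldots,(k-1)d+1$; this has dimension $kd - (k-1) = 1 + k(d-1)$. Third --- and this is the crux --- I would verify that $L\Lambda\,\caly^\star \subseteq \call$. For this I must check that right multiplication by $V$ and by each $\Gamma Y_i$ preserves $\call$: for $V$, this is a bookkeeping computation using the block-shift structure; for $\Gamma Y_i$, I use that $\Gamma$ has only $I_d$ in the top block (so it kills all but the first $d$ coordinates of the row) and that each $X_{i,j}\in\calm_d$ propagates the leftmost entry unchanged, which is exactly what keeps the block-leftmost entries tied together. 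Fourth, I would apply Lemma~\ref{lem:L-stab} with this $\call$ (dimension $\ell = 1 + k(d-1)$) to reduce $\cali$ --- which, being produced by the reduction, satisfies $L\Lambda\,\caly^\star\subseteq\call$ --- to an instance of $\pbZ^{1+k(d-1)}(h+1)$, noting the cardinality $h+1$ is preserved by Lemma~\ref{lem:L-stab}. Composing the two reductions (and Lemma~\ref{lem:Zo-nondeg} to handle degenerate instances at the outset) gives the theorem.

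The main obstacle I anticipate is pinning down the stable subspace $\call$ precisely and proving the invariance $L\Lambda\,\caly^\star\subseteq\call$ cleanly. The subtlety is that $\Gamma Y_i$ sends a row vector $K\in\bQ^{1\times kd}$ to $K\Gamma Y_i = (K_{\mathrm{top}})\,Y_i$ where $K_{\mathrm{top}}$ is the first $d$-block of $K$; the resulting vector has, in its $j$-th block, the value $K_{\mathrm{top}} X_{i,j}$, whose leftmost entry equals the leftmost entry of $K_{\mathrm{top}}$ (by the $\calm_d$ property), i.e.\ the single entry $K_1$ --- so indeed all $k$ block-leftmost entries of the image coincide and equal $K_1$. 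I need to express $\call$ so that this "all block-leftmost entries equal, and equal to the overall leftmost entry" condition (which also holds for $V$, where the shift and the $U$-block behave compatibly because $L\Lambda$ starts with the right pattern and $V$ simply rotates blocks and multiplies one by $U$) is manifestly a linear subspace stable under all generators; the cleanest route is probably to define $\call$ via the linear equations "(entry in position $jd+1$) $=$ (entry in position $1$)" for $j = 1,\ldots,k-1$, check $L\Lambda$ satisfies them, and check stability under right multiplication by $V$ and by $\Gamma Y_i$ directly. Once $\call$ is correctly described, the dimension count $\ell = kd - (k-1) = 1 + k(d-1)$ is immediate and the rest is an invocation of the earlier lemmas.
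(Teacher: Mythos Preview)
Your plan is essentially the paper's own argument: run the reduction of Theorem~\ref{th:Z-Z} to land in $\pbZ^{kd}(h+1)$, then invoke Lemma~\ref{lem:L-stab} with the subspace
\[
\call=\bigl\{K\in\bQ^{1\times kd}:\ K_{jd+1}=K_1\ \text{for }j=1,\dots,k-1\bigr\}
\]
of dimension $1+k(d-1)$, after checking $L\Lambda\,\caly^\star\subseteq\call$. The paper phrases this via the slices $\calk(s)=\{K:\text{all block-leftmost entries equal }s\}$ and shows $\calk(s)\,\caly\subseteq\calk(s)$, which is the same invariance you describe.

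One point to correct: your parenthetical ``$U$ need not be in $\calm_d$'' is wrong on both counts. First, the definition of $\mathring\pbZ^d$ requires $\calx\subseteq\calm_d$, so $U\in\calm_d$ automatically. Second, you do need it: for $K\in\call$ one has $KV=\begin{pmatrix}K_2&\cdots&K_k&K_1U\end{pmatrix}$, and the leftmost entry of the last block $K_1U$ equals the leftmost entry of $K_1$ precisely because $U\in\calm_d$. Without this, $\call$ is not $V$-stable and the invocation of Lemma~\ref{lem:L-stab} collapses. So when you carry out the ``bookkeeping computation'' for $V$, be sure to use the $\calm_d$ hypothesis there as well, not only for the $\Gamma Y_i$.
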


\begin{proof}
The proof relies on Lemma~\ref{lem:L-stab}.
For each $s \in \bQ$, define  $\calk(s)$ as the set of those $K \in \bQ^{1 \times kd}$ such that, 
for every $j \in \seg{0}{k - 1}$,
the $(jd + 1)$th entry of $K$ equals $s$.
Let $K \in \bQ^{1 \times kd}$ and let $K_1$, $K_2$, $K_3$, \ldots, $K_k \in \bQ^{1 \times d}$ be such that
$$
K = \begin{pmatrix} K_1 & K_2 & K_3 & \cdots & K_k \end{pmatrix} \,.
$$
For every $s \in \bQ$, $K$ belongs to $\calk(s)$ if, and only if, the leftmost entry of $K_j$ equals $s$ for every $j \in \seg{1}{k}$.
Put $\call = \bigcup_{s \in \bQ}  \calk(s)$.
Clearly, $\call$ is a linear subspace of $\bQ^{1 \times k d}$ and the dimension of $\call$ equals $1 + k (d - 1)$.
By Lemmas~\ref{lem:L-stab} and \ref{lem:Zo-nondeg}, it suffices to show that 
the restriction of $\mathring \pbZ^d(hk + 1)$ to non-degenerated instances 
reduces to 
the restriction of $ \pbZ^{kd}(h + 1)$ to those instances $(L, C, \calx)$ for which $L \calx^ \star \subseteq \call$.

Let $(L, C, \calx)$ be a non-degenerated instance of $\mathring{\pbZ}^d(hk + 1)$.
Let the notation be as in the proofs of Theorems~\ref{th:M-M} and~\ref{th:Z-Z}.
Let $s$ denote the leftmost entry of $L$.
It is clear that 
$$
L  \Lambda = \begin{pmatrix} L & L & L & \cdots & L \end{pmatrix} \in \calk(s) \, .
$$
Moreover, if $K \in \calk(s)$ then straightforward computations yield  
 $$
K  V  =  \begin{pmatrix} K_2 & K_3 & \cdots & K_k & K_1 U \end{pmatrix} \in \calk(s)
 $$
and 
$$
K \Gamma Y_i  =  K_1 Y_i = \begin{pmatrix} K_1 X_{i, 1} & K_1 X_{i, 2} & K_1 X_{i, 3} & \cdots & K_1 X_{i, k} \end{pmatrix}   \in \calk(s)
$$
for  $i \in \seg{1}{h}$.
Hence, we have $L \Lambda \in \calk(s)$ and $\calk \caly \subseteq \calk(s)$.
It follows $L \Lambda \caly^ \star \subseteq \calk(s) \subseteq \call$, 
and thus $\cali$ is an instance of the suitable restriction of~$\pbZ^{k d}$.
\end{proof}

We claim that $\mathring \pbZ^d(hk + 1)$ reduces to $\mathring{\pbZ}^{1 + k(d - 1)}(h + 1)$; 
the verification is left to the reader.

As $\kgp \le 5$ \cite{HalavaHH07}, $\mathring{\pbZ}^3(5)$ is undecidable by Theorem~\ref{th:G-Zo}.
It then follows from  Theorem~\ref{th:Zo-Z} that $\pbZ^5(3)$ and $\pbZ^9(2)$ are undecidable.
Combining Theorems~\ref{th:G-Zo} and~\ref{th:Zo-Z},   
we obtain that $\pbZ^{2 \kgp - 1}(2)$ is undecidable.
Therefore, $\pbR^{2 \kgp}(2)$  and $\pbM^{2 \kgp - 1}(3)$ are undecidable by Propositions~\ref{prop:ZR} and~\ref{prop:ZM}.

\section{Open questions} \label{sec:open}

The cases where $d = 2$ and where $k = 1$ yield challenging open questions.

\subsection{Two-by-two matrices}

The undecidability of $\pbM^3$ was first proven in 1970 \cite{Paterson70}.
It was later proven that $\pbZ^2(1)$  and $\pbM^2(2)$ are decidable \cite{Vereshchagin85,HHHK-Skolem-05,BournezB02}
and that $\pbM^2$ is NP-hard \cite{BellHP12}.
However, 
the decidabilities of $\pbM^2(k + 1)$, $\pbZ^2(k)$, and $\pbR^2(k)$ remain open for $2 \le k \le \infty$.
The decidability of $\pbM^2$ has been repeatedly reported as open since 1977 \cite{Schultz77}.

By Proposition~\ref{prop:Z2-M2}, $\pbZ^2(2)$ and $\pbM^2(3)$  are equivalent.
By Lemma~\ref{lem:ZijZ}, $\pbR^2(2)$ reduces to $\pbZ^2(2)$ and $\pbM^2(3)$.
Therefore, if there exist $d$, $k \in \bN$ such that $d \ge 2$, $k \ge 2$, $(d, k) \ne (2, 2)$, and $\pbM^d(k)$ is decidable
then $\pbM^3(2)$ or $\pbR^2(2)$ is decidable.
The decidabilities of the latter two problems remain open.

\subsection{Linear recurrences} \label{sec:lin-rec}

The decidability of  $\pbM^d(1)$ is easy to see:
for every $X \in \mat{d}{\bQ}$, 
$\{ X \}$ is a yes-instance  of $\pbM^d(1)$
if, and only if,
$X^d$ equals the $d$-by-$d$ zero matrix.
Moreover, 
it is known that $\pbZ^5(1)$ is decidable \cite{HHHK-Skolem-05}, the proof being highly non-trivial.
However, the decidabilities  of $\pbZ^d(1)$ and $\pbR^d(1)$ remain open for $d \ge 6$.
Let us briefly discuss the question.

Given a sequence  $\left( u_n \right)_{n \in \bN}$  of elements of $\bQ$,
we say that $\left( u_n \right)_{n \in \bN}$ satisfies a \emph{linear recurrence relation (LRR) of order $d$}  if the following three equivalent conditions \cite{SalomaaS78} are met:
\begin{enumerate}
\item 
There exist $L \in \bQ^{1 \times d}$, $C \in \bQ^{d \times 1}$,  and $X \in \mat{d}{\bQ}$ such that 
$u_n = L X^n C$
for every $n \in \bN$. 
\item 
There exist $a_0$, $a_1$, \ldots, $a_{d - 1} \in \bQ$ such that 
$$u_{n + d} = \sum_{i = 0}^{d - 1} a_i u_{n + i} 
$$
for every $n \in \bN$. 
 \item There exists two polynomials $f(\ttx)$ and $g(\ttx)$ over $\bQ$ such that 
 $g(0) \ne 0$, 
 the degree of $f(\ttx)$ is smaller than $d$,
 the degree of $g(\ttx)$ is not greater than $d$,  
 and the generating function of $\left( u_n \right)_{n \in \bN}$ satisfies:
 $$
 \sum_{n = 0}^\infty u_n \ttx^n = \frac{f(\ttx)}{g(\ttx)} \, .
 $$
\end{enumerate}
Let  $\left( u_n \right)_{n \in \bN}$ be a sequence of elements of $\bQ$  that satisfies an LRR of order~$d$.
\begin{itemize}
 \item If $u_0 \ne  0$ then there exist 
$X \in \mat{d}{\bQ}$ such that $u_n u_0^{-1}$ equals the $(1, 1)$th entry of $X^n$ for every $n \in \bN$.
\item If $u_0 = 0$ then there exists 
$X \in \mat{d}{\bQ}$ such that $u_n$ equals the $(1, d)$th entry of $X^n$ for every $n \in \bN$.
\end{itemize}
The following two problems are equivalent to~$\pbZ^d(1)$:
\begin{enumerate}
 \item 
Given $X \in \mat{d}{\bQ}$, decide whether there exists $n \in \bN$ such that the $(1, 1)$th entry of $X^n$  equals~$0$.
 \item 
Given  a sequence $\left( u_n \right)_{n \in \bN}$ of elements of $\bQ$   that satisfies an LRR of order $d$, 
decide whether there exists $n \in \bN$   such that $u_n = 0$.
\end{enumerate}
The following two problems are equivalent to $\pbR^d(1)$:
\begin{enumerate}
 \item 
 Given $X \in \mat{d}{\bQ}$, decide whether there exists $n \in \bN \setminus \{ 0 \}$ such that the $(1, d)$th entry of $X^n$  equals~$0$.
 \item 
 Given a sequence $\left( u_n \right)_{n \in \bN}$ of elements of $\bQ$  that satisfies an LRR of order $d$ and $u_0 = 0$,  
 decide whether there exists $n \in \bN \setminus \{ 0 \}$   such that $u_n = 0$.
\end{enumerate}

\bibliographystyle{plain}
\bibliography{bibmat}

\end{document}